\documentclass[12pt]{article}
%%%%%%%%%%%%%%%%%%%%%%%%%%%%%%%%%%%%%%%%%%%%%%%%%%%%%%%%%%%%%%%%%%%%%%%%%%%%%%%%%%%%%%%%%%%%%%%%%%%%%%%%%%%%%%%%%%%%%%%%%%%%
\usepackage{graphicx,graphics,epsf,subfigure,epstopdf}
\usepackage{amsmath,amsthm,amsxtra,amsfonts,amscd,amssymb,bm}
\usepackage[round]{natbib}
\usepackage[bookmarks=true, colorlinks=true, linkcolor=blue, citecolor=blue, breaklinks=true]{hyperref}
\usepackage{times}
\usepackage{rotating}
\usepackage{algorithm,algorithmic}

\newcommand{\BaselG}{\mathrm{Basel}}
\newcommand{\Variance}{\mathrm{Variance}}
\newcommand{\VaR}{{\mathrm{VaR}_{\alpha}}}
\newcommand{\CVaR}{{\mathrm{CVaR}_{\alpha}}}
\newcommand{\BaselTwoDotFive}{\mathrm{Basel2.5}}

\newcommand{\BaselThree}{\mathrm{Basel3}}

\newcommand{\Lc}{\mathcal{L}}
\newcommand{\U}{\mathcal{U}}
\newcommand{\Ur}{\mathcal{U}_{r_0}}
\newcommand{\one}{\mathbf{1}}
\newcommand{\R}{\mathbb{R}}
\newcommand{\NU}{\mathcal{N}_{\Ur}}
\newcommand{\tR}{\tilde{R}}
\newcommand{\tY}{\tilde{Y}}

\newcommand{\be}{\begin{equation}}
\newcommand{\ee}{\end{equation}}
\newcommand{\bee}{\begin{equation*}}
\newcommand{\eee}{\end{equation*}}
\newcommand{\bea}{\begin{eqnarray}}
\newcommand{\eea}{\end{eqnarray}}
\newcommand{\beaa}{\begin{eqnarray*}}
\newcommand{\eeaa}{\end{eqnarray*}}

\usepackage[usenames]{color}
\usepackage[normalem]{ulem}
\definecolor{mygreen}{rgb}{0,.5,0}

\newtheorem{theorem}{Theorem}[section]
\newtheorem{proposition}{Proposition}[section]
\newtheorem{lemma}{Lemma}[section]

\newtheorem{definition}{Definition}[section]

\setlength{\hoffset}{0.0in}
\setlength{\oddsidemargin}{0.25in}
\setlength{\textwidth}{6.0in}
\setlength{\voffset}{0.25in}
\setlength{\topmargin}{0.0in}
\setlength{\headheight}{0.0in}
\setlength{\headsep}{0.0in}
\setlength{\textheight}{8.5in}

%\graphicspath{{results/}}

\baselineskip 18pt

\begin{document}

\title{ {Asset Allocation under the Basel Accord Risk Measures}\thanks{We are
grateful to Steven Kou for his insightful comments to the paper. Zaiwen Wen was
partially supported by the NSFC grant 11101274 and research fund (20110073120069) for the Doctoral
Program of Higher Education of China. Xianhua Peng was partially supported by a grant from School-Based-Initiatives of HKUST (Grant No.
SBI11SC03) and Hong Kong RGC Direct Allocation Grant (Project No. DAG12SC05-3). Xin Liu was partially supported by
the NSFC grants 10831006 and 11101409. Xiaodi Bai and Xiaoling Sun were supported
by the NSFC grant 10971034 and the Joint NSFC/RGC grant 71061160506.}}
\author{Zaiwen Wen\thanks{Department of Mathematics,  MOE-LSC and Institute of Natural Sciences, Shanghai
Jiaotong University, Shanghai 200240, China, zw2109@sjtu.edu.cn.} \and
Xianhua Peng\thanks{Department of Mathematics, Hong Kong University of Science and Technology, Hong Kong, maxhpeng@ust.hk.} \and
Xin Liu\thanks{Academy of Mathematics and Systems Science, Chinese Academy of Sciences,
Beijing 100080, China, liuxin@lsec.cc.ac.cn.} \and
Xiaodi Bai\thanks{Department of Management Science, School of Management, Fudan University,
Shanghai 200433, China, xdbai@fudan.edu.cn.} \and Xiaoling Sun\thanks{Department of Management Science, School of Management, Fudan University,
Shanghai 200433, China, xls@fudan.edu.cn.}}
\date{First version January 2013, this version August 2013}
%First version July 2006
\maketitle

\begin{abstract}
Financial institutions are currently required to meet more stringent capital requirements than they were before the recent financial crisis; in particular, the capital requirement for a large bank's trading book under the Basel
2.5 Accord more than doubles that under the Basel II Accord. The significant increase in capital requirements renders it necessary for banks to take into account the constraint of capital requirement when they make asset allocation decisions.
In this paper, we propose a new asset allocation model
that incorporates the regulatory capital requirements under both the Basel 2.5 Accord, which is currently in effect, and the Basel
III Accord, which was recently proposed and is currently under discussion.
We propose an unified algorithm based on the alternating
direction augmented Lagrangian method to solve the model; we also establish the first-order optimality of the limit points of the sequence generated by the algorithm under some mild conditions. The algorithm is simple and easy to implement; each step of the algorithm consists of solving convex quadratic programming or one-dimensional subproblems.
Numerical experiments on simulated and real market data show that the algorithm compares favorably with other existing methods, especially in cases in which the model is non-convex.\\

%and is able to find suboptimal solutions of good quality efficiently.

% In this paper, we propose asset allocation models
% in which the capital requirement calculated by the Basel 2.5 or Basel III
%risk measure, and  the risk level of the investment portfolio is measured by an
%internal risk measure which can be freely chosen by the portfolio manager.
%in the case of finite discrete distributions.
%Although our targeted model is essentially equivalent to mixed-integer programming problems,
%In the case that the model is non-convex,
% and essentially equivalent to mixed-integer programming
%problems

% Fill in data. If unknown, outcomment the field
\emph{Keywords}: Asset Allocation, Basel Accords, Capital Requirements, Value-at-Risk, Conditional Value-at-Risk, Expected Shortfall, Alternating Direction Augmented Lagrangian Methods
\end{abstract}

\section{Introduction}

One of the major consequences of the financial crisis that began in 2007 is that financial institutions are now required to meet more stringent capital requirements than they were before the crisis. The considerable increase in capital requirements has been imposed through the Basel Accords, which have  undergone substantial revision since the inception of the financial crisis. The framework of the latest version of the Basel Accord, the Basel III Accord \citep*{Basel-III}, was announced in December 2010 and is soon to be
implemented in many leading nations, including the United States \citep*{Fed-2012}.
%For example, the Board of Governors of the Federal Reserve Systems of the United States announced in December 2011 that it will implement the Basel III Accord \citep*{Fed-2012}.

In particular, the capital requirements for banks' trading books, which are calculated by the Basel Accord risk measure for the trading book, have been increased substantially.
%the Basel Accord risk measure for the trading book, which specifies how the capital requirements for banks' trading books should be calculated, has gone through major changes.
Before the 2007 financial crisis, the Basel II risk measure \citep*{Basel06} was
used in the calculation. During the crisis, it was found
that the Basel II risk measure had serious drawbacks, such as being procyclical
and not being conservative enough. In response to the financial crisis, the Basel committee revised the Basel II market
risk framework and imposed the ``Basel 2.5" risk measure \citep*{Basel09} in July 2009. It has been estimated that the capital requirement for a large bank's trading book under the Basel 2.5 risk measure on average \emph{more than doubles} that under the Basel II risk measure \citep[][p. 11]{BaselRev}.

The substantial increase in the capital requirements for the trading book makes it more important for banks to take into account the constraint of capital requirements when they construct investment portfolios. In this paper, we address this issue by proposing a new asset allocation model
that incorporates the capital requirement imposed by the Basal Accords. More precisely, we propose the \emph{``mean-$\rho$-Basel"} asset allocation model,
%The model incorporates both
%the risk measure $\rho$, which can be freely
%chosen by the portfolio manager for measuring the risk
%of the investment portfolio, such as  and the external risk measures, which are the Basel Accord risk measures used for setting capital constraints and are imposed by regulators.
in which $\rho$ denotes the risk measure used for measuring the risk of the investment portfolio, such as variance, value-at-risk (VaR), or conditional value-at-risk (CVaR); $\rho$ can be freely chosen by the portfolio manager; and ``Basel" denotes the constraint that the regulatory capital of the portfolio calculated by the Basel Accord risk measure should not exceed a certain upper limit.

The complexity of the Basel Accord risk measures for calculating the capital requirements
%2.5 and the recently proposed Basel Accord risk measure,
%which involve the calculation of VaR or CVaR under multiple scenarios including stressed scenarios,
poses a challenge to solving the proposed ``mean-$\rho$-Basel" asset allocation model.
%the problem of asset allocation with the Basel Accord capital requirements constraint.
%Hence, how to efficiently
%construct investment portfolios under the new constraints becomes a crucial
%issue for a bank.
%Basel III capital rule will have substantial impact on the way banks construct their portfolios to maximize their profit
%and to meet the new capital requirements.
%On the other hand, the Basel
%III risk measure is much more complicated than the previous ones.
%one of the most important banking regulation accords. The Basel Accords
%It represents a fundamental strengthening and a radical overhaul of global capital requirement standards, which is one of the critical parts of the global financial reform and reflects the dramatic impact of the recent financial crisis on the regulatory framework of the finance industry.
%Since 1996 until has been revised in 2009 and and will most likely be revised again due to a new proposal by the Basel committee.
%We also study a more general problem, i.e., the portfolio optimization problem under more general risk measures called natural risk statistics, which are a new class of risk measures recently proposed by \citet*{Kou-Peng-Heyde-09}. Natural risk statistics include the Basel III market risk measure for calculating the new capital requirements as special cases.
The Basel Accords use VaR or CVaR with \emph{scenario analysis} as the risk measure to calculate the capital requirements for a bank's trading book.
Scenario analysis is used to analyze the behavior of random losses under different scenarios; a scenario refers to a specific economic regime such as an economic boom and a financial crisis.
The Basel II risk measure involves the calculation of VaR under 60 different scenarios. The Basel 2.5 risk measure involves the calculation of VaR under 120 scenarios, including 60 stressed scenarios.
%In 2009, the Basel II risk measure was
%revised and changed to the Basel 2.5 risk measure \citep*{Basel09}, which
%incorporated the \emph{stressed} VaR, i.e., VaR calculated under stressed
%scenarios.
Most recently, in May 2012, the Basel Committee released a consultative document \citep*{BaselRev} that presents the initial policy proposal of a new risk measure to replace the Basel 2.5 risk measure for the trading book; the new risk measure involves the calculation of CVaR under stressed scenarios. Currently, this new proposal is under discussion and has not been finalized. It is beyond the scope of this paper to discuss whether the newly proposed risk measure is superior to the Basel 2.5 risk measure; hence, we will consider both the Basel 2.5 and the newly proposed Basel risk measure in the mean-$\rho$-Basel model.
% regarding the Basel committee's fundamental review of the trading book capital requirements. In the document, the committee proposes to revise the Basel 2.5 trading book risk measure by replacing VaR by ES (or, equivalently, CVaR).
See Section \ref{subsec:basel_rm} for details regarding the Basel Accord risk measures.

Numerous studies have examined the single-period
asset allocation model of \emph{``mean-$\rho$"}, in which $\rho$ is a measure of portfolio
risk such as variance, VaR, or CVaR. On recent developments in the mean-variance asset
allocation models and associated algorithms, see e.g., \citet{CKTVanRoy-2012}. \cite{Iyengar-Ma-2013}
propose a fast iterative gradient descent algorithm capable of handling large-scale problems for the mean-CVaR problem. \cite*{Lim2011163} evaluate CVaR as the risk measure in data-driven portfolio
optimization and show that portfolios obtained by solving mean-CVaR problems are unreliable
due to estimation errors of CVaR and/or the mean asset returns. To address the issue of estimation
risk, \citet*{KLV-2011} introduce a new approach, called performance-based regularization,
to the data-driven mean-CVaR portfolio optimization problem.
%In a recent interesting paper, \citet*{CKTVanRoy-2012} formulate a mean-variance portfolio selection problem that generalizes the Black-Litterman model and accommodates qualitative input about expected returns; see the reference therein for an overview of the mean-variance problem.
\citet*{Rock02} develop a method to reduce the data-driven mean-CVaR asset allocation problem
to a linear programming (LP) problem. The mean-VaR problem is more difficult than the mean-CVaR due to the non-convexity of VaR.
%it is NP-hard and is equivalent to a mixed integer LP problem under finite discrete distribution \citep*{Benati-Rizzi-2005}.
Software packages
such as CPLEX can be used to solve small-to-medium sized problems of this type.
%\citet*{Larsen-Mausser-Uryasev-2002} proposed two heuristic algorithms for the mean-VaR problem based on the LP algorithm for solving the mean-CVaR problem.
%\citet*{Benati-Rizzi-2005} showed that the problem is NP-hard %and formulated it as a mixed integer linear programming %problem, which allows them to solve medium size problem using CPLEX.
Recently, \cite{cui2013nonlinear} propose a second-order cone programming method
to solve a mean-VaR model when VaR is estimated by its first-order or  second-order
approximations.
 \citet{BaiZhengSunSun2012} propose a penalty decomposition method for
probabilistically constrained programs including the mean-VaR
problem.
%\citet*{HongYangZhang2011} developed a novel sequential convex approximation for general chance-constrained convex
%programs by representing the non-convex probability function as the difference of two convex functions. \citet*{Natarajan-Pachamanova-Sim-2008}
%proposed a computationally tractable approximation method for minimizing
%VaR based on robust optimization techniques. The method leads to the optimization of a modified VaR measure called Asymmetry-Robust VaR. \citet*{Gaivoronski-Pflug-2005} described a method based on approximating
%VaR by a smoothed VaR function which filters out local irregularities.

%\citet*{Wozabal-Hochreiter-Pflug-2010} formulated the VaR constrained problem as
%a D.C. (difference of
%two convex functions) optimization problem and
%used a conical Branch-and-Bound algorithm for the resulting D.C. problem. % to find the global optimal solution. %\citet*{Wozabal-2010} applied the difference of convex algorithm (DCA), which is an approximate algorithm but is more computationally tractable, to solve the problem.

%However, the method is not tractable for large scale problems.
%\citet*{Chiarawongse2012,LiZhouLim2002}

It appears to be more challenging to solve the mean-$\rho$-Basel model than the mean-$\rho$ model due to the complexity of the Basel Accord risk measures that involve multiple VaRs or CVaRs under various scenarios and the non-convexity of VaR. In this paper, we develop an unified and computationally efficient method to solve the mean-$\rho$-Basel problem.
%We first decompose the risk measures into simpler structures by
%introducing certain bridging variables, and then  apply the alternating direction method (ADM or block coordinate descent scheme) to minimize
%the augmented Lagrangian functions for the transformed problem.
This method is based on the \emph{alternating direction augmented Lagrangian method} (ADM)  (see, e.g., \citealt*{WGY-2010}; \citealt{HeYuan2012}; \citealt*{HongLuo2012} and the references therein). %, or, equivalently, block coordinate descent scheme.
The method is very simple and easy to implement; it
reduces the
original problem to one-dimensional optimization or convex quadratic programming subproblems
that may even have closed-form solutions; hence, the method is capable of solving large scale problems. When the mean-$\rho$-Basel problem is convex for some specific $\rho$ and Basel constraint (e.g., $\rho$ is variance and the Basel constraint is specified by the newly proposed Basel Accord risk measure), the method is guaranteed to converge to the globally optimal solution; when the problem is non-convex,
we show that the limit points of the sequence generated by the method satisfy the first-order optimality condition. See Section \ref{sec:adm_algo}.
%The algorithmic difference for mean-VaR, mean-CVaR and mean-Basel
% under the ADM framework lies in the  subproblem corresponding to specific risk
% measures,
% respectively.
%The convergence properties of the
%method are established under mild conditions.
%To evaluate the
%performance of our method, we conduct computational , in particular, those from the financial crisis
%in 2008.

%Comparisons with the penalty decomposition method  in \citet*{BaiZhengSunSun2012}  indicate that our method is better in terms of the quality of the suboptimal solutions and the computational time.

The proposed method also applies to mean-$\rho$ problems such as the mean-VaR, mean-CVaR, and \emph{``mean-Basel"} problem, in which the Basel Accord risk measures are used to quantify the risk of the portfolio. The Basel Accord risk measures involve multiple VaR or CVaR under different scenarios, which essentially correspond to different models or distributions of  asset returns. Hence, using the Basel Accord risk measures, or, more generally, VaR or CVaR with scenario analysis, as the portfolio risk measure provides a way to address the problem of model uncertainty.

In summary, the main contribution of the paper is two-fold.  (i)   We formulate a new asset
allocation model, the mean-$\rho$-Basel model, which takes into account the regulatory capital constraint specified by the Basel Accord risk measure for trading books. We also formulate and study the related mean-Basel model. To the best of our
knowledge, there has been no literature on asset allocation involving the Basel Accord risk measures. (ii)
We propose an efficient alternating direction augmented Lagrangian method for solving
the mean-$\rho$-Basel and mean-$\rho$ models.
%Although the ADM approach has been widely
%used in convex optimization \citep{HeLiaoHanYang2002, WGY-2010, YangZhang2009},
%it appears that its use for
%solving optimization problems involving VaR or Basel Accords risk measures is new.
%Our method is different from the penalty
%decomposition methods in \citet{BaiZhengSunSun2012} as their division of blocks of variables leads to subproblems that
%are more expensive to solve.
For non-convex cases of these models, we establish the first-order optimality of the limit points of iterative sequence generated by the method under mild conditions. Although
there is no theoretical guarantee that the method will converge to the global
solution in non-convex cases of these models, numerical experiments on simulated and real market data
show that
the method can identify suboptimal
solutions that can often be superior to the approximate solutions of the mixed-integer programming formulation computed by CPLEX within one hour.

%Although the penalty
%decomposition methods in \citet{BaiZhengSunSun2012} also use the ADM framework, our proposed method is different from theirs and is more efficient because
% their division of blocks of variables leads to subproblems that
%are more expensive to solve, and numerical experiments showed that the number of iteration required in our method
%%the augmented Lagrangian framework
%may be smaller than that of theirs.
%% quadratic penalty case.

The remainder of the paper is organized as follows. In Section
\ref{sec:risk_measures}, we review the definition and properties of the Basel Accord
risk measures for trading books as well as some other relevant risk measures. In Section
\ref{sec:opt_models}, we
%discuss the data-based Basel Accord risk measures and
formulate the mean-$\rho$-Basel asset allocation model in which the Basel Accord risk measures are used for setting a regulatory capital constraint. In Section \ref{sec:adm_algo},
we propose the alternating direction augmented Lagrangian method for solving
the mean-$\rho$-Basel and the mean-$\rho$ problems; we also
provide convergence analysis of the method. Section
\ref{sec:numerical} provides the numerical results,  which demonstrate the
accuracy and efficiency of the proposed method. %Section \ref{sec:conclusion} concludes.

%\section{Asset Allocation using the Mean-VaR Formulation}
\section{Review of Relevant Risk Measures}\label{sec:risk_measures}
Variance is probably the best-known risk measure; in addition
to variance, there is a vast literature on theoretical frameworks
and concrete examples of risk measures. As it is beyond the scope of
this paper to discuss and compare different risk measures, we review
only the risk measures that are used in the asset allocation
problems considered in this paper.

% are most relevant such as Value-at-Risk, conditional Value-at-Risk, and the Basel Accord risk measures.

\subsection{Value-at-Risk and Conditional Value-at-Risk (Expected Shortfall)}

\emph{Value-at-Risk} (VaR) is one of the most widely used risk measures in risk management. VaR is a quantile of the loss distribution at
some pre-defined probability level. More precisely, let $F_X(\cdot)$ be the distribution
function of the random loss $X$, then, for a given $\alpha
\in (0,1)$, $\text{VaR}$ of $X$ at level $\alpha$ is defined as
\be \label{equ:var}
\text{VaR}_{\alpha}(X):=\inf \{x\mid F_X(x)\geq \alpha \}=F_X^{-1}(\alpha).
\ee
%\footnote{VaR has two well-known properties: ordinal
%covariance and monotonicity with respect to first order stochastic
%dominance \citep[see, e.g.,][]{Denneberg94}. \citet{Chambers2009}
%shows that the two properties essentially characterize VaR.}
%Given a data set $x=(x_1,\ldots,x_n)\in \mathbb{R}^n$ denoting the observations of $X$, $\text{VaR}_{\alpha}(X)$
%can be estimated by the sample quantile $x_{(\lceil \alpha n\rceil)}$,
%which is a
%strongly consistent and asymptotically normal estimator for $\text{VaR}_{\alpha}(X)$ \citep{Shao03}.
%, and is a special case of L-statistics, i.e., linear functions of order statistics.
%Basel II \citep{Basel06} allows institutions to use an internal ratings based approach to calculate capital charges
%for credit risk. The internal ratings based capital charge on a given instrument depends only on its own
%characteristics, and not those of the portfolio in which it is held. \citet{Gordy-2003} demonstrates that
%under certain conditions the rating based capital charge is asymptotically equal to the 99.9\% VaR of the portfolio loss.
%\citet{Duffie-Pan-97}, \citet{Gordy-2003}, %, and %\citet{Hull09}
\citet{Jorion07} provides a comprehensive discussion of
VaR and risk management.

\emph{Conditional Value-at-Risk} (CVaR), proposed by \citet{Rock02}, is another prominent and widely used risk measure. For the random loss $X$, the $\alpha$-tail distribution function of $X$ is defined as
\begin{equation}\label{equ:alpha_tail_dis}
  F_{\alpha,X}(x):=\begin{cases}
    0, & \text{for}\ x < \text{VaR}_{\alpha}(X),\\
    \frac{F_X(x)-\alpha}{1-\alpha}, & \text{for}\ x \geq \text{VaR}_{\alpha}(X).
  \end{cases}
\end{equation}
Then, the CVaR at level $\alpha$ of $X$ is defined as
\begin{align}\label{equ:cvar_def_ger_dis}
  \text{CVaR}_{\alpha}(X):=\text{mean of the}\ \alpha\text{-tail distribution of}\ X
 = \int_{-\infty}^{\infty}xdF_{\alpha,X}(x).
\end{align}
\emph{Expected shortfall (ES)} is a risk measure that is equivalent to CVaR and that was introduced independently in \citet*{AcerbiTachi}. CVaR and ES have the subadditivity property and belong to the class of coherent risk measures \citep{Artz99}; VaR
may not satisfy subadditivity and belongs to another class of risk measures called insurance risk measures \citep*{Wang97}.

\subsection{Basel Accord Risk Measures for Trading Books}\label{subsec:basel_rm}

%Basel Accords use VaR or CVaR with scenario analysis as the risk measure for calculating the capital requirements for a bank's trading book. A scenario refers to a specific economic regime such as an economic boom and a financial crisis; scenario analysis is the approach to analyze the behavior of the random loss under different scenarios. Scenario analysis is necessary because studies have shown that behavior of economic variables is substantially different
%at different regimes of economy \citep[see, e.g.,][]{Hamilton-1989}. In particular, many economic variables exhibit
%dramatic changes in their behavior during financial crisis \citep{Hamilton-2005}
%or when government monetary or fiscal policy undergo sudden changes \citep{Hamilton-1988, Sims-Zha-2004}.
%The abrupt changes in economic variables also show up in asset prices, which is manifested by the increase
%in volatility and correlation among asset returns in economic downturn \citep[see, e.g.,][]{Ang-Bekaert-2002,
%Dai-Singleton-Yang-2007}.

The Basel Accords use VaR or CVaR with scenario analysis as the risk measure for calculating capital requirements for banks' trading books. A scenario refers to a specific economic regime, such as an economic boom or a financial crisis. Scenario analysis is necessary because studies have shown that the behavior of economic variables is substantially different under different economic regimes \citep[see, e.g.,][]{Hamilton-1989}. In particular, many economic variables exhibit
dramatic changes in their behavior during financial crises \citep{Hamilton-2005}
or when government monetary or fiscal policies undergo sudden changes \citep{Sims-Zha-2006}.
There is also evidence that
% abrupt changes in economic variables also show up in asset prices, which is manifested by the increase
 the volatility and correlation among asset returns increase in economic downturns \citep*[see, e.g.,][]{Dai-Singleton-Yang-2007}.

The Basel II Accord \citep*{Basel06} specifies that the capital charge for the trading book on any
particular day $t$ for banks using the internal models approach should be calculated by the formula
\begin{equation}\label{equ:basel_2_mkt_risk}
c_t=\max\left\{\text{VaR}_{\alpha, t-1}(X),\frac{k}{60}\sum_{s=1}^{60} % was i=s before
\text{VaR}_{\alpha, t-s}(X)\right\},
\end{equation}
where $X$ is the loss of the bank's trading book; $k$ is a constant that is no less than 3; $\text{VaR}_{\alpha, t-s}(X)$
is the 10-day VaR of $X$ at $\alpha=99\%$ confidence level calculated on day $t-s$,
$s=1,\ldots,60$. $\text{VaR}_{\alpha, t-s}(X)$ is calculated under the scenario corresponding to information available on day $t-s$.
For example, $\text{VaR}_{\alpha, t-s}(X)$ of a portfolio of equity options is calculated conditional on the value of
the equity prices, equity volatilities, yield curves, etc., on day $t-s$. Therefore, the Basel II risk measure is a VaR with scenario analysis that involves 60 scenarios. %$\text{VaR}_{\alpha, t-s}(X)$ can be calculated based on a %model for the portfolio loss or be estimated from data sets %generated by historical simulation or Monte Carlo
%simulation \citep*{Jorion07, Hull09}.
%) $x^{[s]}=(x^{[s]}_1,x^{[s]}_2,\ldots,x^{[s]}_{n_{s}})\in\mathbb{R}^{n_{s}}$,
%where $n_s$ is the data size.

%The Basel III accord basically aims to improve the quality of banks' capital and significantly increase the level of banks' capital. To achieve this goal, the Basel III accord has significantly increased the capital requirements for market risk.

Since the 2007 financial crisis, the Basel II risk measure
\eqref{equ:basel_2_mkt_risk} has been criticized for two reasons: (i) This risk measure is based on contemporaneous observations and hence is  \emph{procyclical}, i.e.,
risk measurement obtained by it tend to be low in booms and high in crises, which is exactly opposite to the goal of
effective regulation \citep{Adrian2008}.
%\citet{BCGPS2009}
%provide in-depth discussion of the fundamental principles of financial
%regulation and emphasize the central importance of making capital requirements
%counter-cyclical.
(ii) This risk measure is not conservative enough. In fact, banks' actual losses during the financial crisis were significantly higher than the capital requirements calculated by the risk measure.

%Scenario analysis can help to reduce the procyclicality by using not only contemporaneous observations
%but also data under distressed scenarios that capture rare tail events which could cause severe losses. Indeed,
In response to the financial crisis, the Basel Committee revised the Basel II market
risk framework and replaced the Basel II risk measure with the ``Basel 2.5" risk measure in July 2009 \citep*{Basel09}. The Basel 2.5 risk measure for calculating capital requirements
for trading books is defined by
\begin{align}\label{equ:basel_2_rev}
c_t=&\max\left\{\text{VaR}_{\alpha, t-1}(X),\frac{k}{60}\sum_{s=1}^{60}
\text{VaR}_{\alpha, t-s}(X)\right\}\notag\\
&+\max\left\{\text{sVaR}_{\alpha, t-1}(X),\frac{\ell}{60}\sum_{s=1}^{60}
\text{sVaR}_{\alpha, t-s}(X)\right\},
\end{align}
where $\text{VaR}_{\alpha, t-s}(X)$ is the same as that in
\eqref{equ:basel_2_mkt_risk}; $k$ and $\ell$ are constants no less than 3; and $\text{sVaR}_{\alpha, t-s}(X)$ is
called the \emph{stressed} VaR of $X$ on day $t-s$ at confidence level $\alpha=99\%$, which is calculated
under a scenario in which the financial market is under significant stress, such
as the one that happened during the period from 2007 to 2008. The additional capital requirements based on stressed VaR help to reduce the procyclicality of the Basel II risk measure \eqref{equ:basel_2_mkt_risk} and significantly increase the capital requirements.

In May 2012, the Basel Committee released a consultative document \citep{BaselRev} that presents the initial policy proposal regarding the Basel Committee's fundamental review of the trading book capital requirements. In particular, the Committee proposed a new risk measure to replace the Basel 2.5 risk measure; the new risk measure uses CVaR (or, equivalently, ES) instead of VaR to calculate capital requirements. More precisely, under the new risk measure, the capital requirement for a group of trading desks that share similar major risk factors, such as equity, credit, interest rate, and currency, is defined as the CVaR of the loss that may be incurred by the group of trading desks; the CVaR should be calculated under stressed scenarios rather than under current market conditions. For example, an equity trading desk and an equity option trading desk would be grouped together for the purpose of calculating regulatory capital. This proposed risk measure is currently under discussion, and it is not yet clear whether it is going to be the final version of the Basel III risk measure. In addition, the proposal has not clearly stated if the capital charge for the $t$th day will depend solely on the stressed CVaR calculated on day $t-1$ or on the CVaR calculated on day $t-s$ for $s=1, 2, \ldots, 60$, as in Basel 2.5. To be more consistent with Basel 2.5, we consider the following
% Nonetheless, it is worthwhile studying the following
%the proposed new risk measure involves the calculation of CVaR under only one stressed scenario, it might be desirable to include the calculation of CVaR under more stressed scenarios, just as in Basel 2.5.
%Hence, with some abuse of the terminology and for the sake of notational convenience, we will call the following risk measure the
``Basel III" risk measure:
%More precisely, the Committee propose that the capital requirements for a
%proposed the Basel III risk measure for setting capital requirement for banks' trading book , which replaces VaR in the Basel 2.5 risk measure by ES (or equivalently CVaR).
%More precisely, the proposed Basel III risk measure is defined as
%One can obtain the following \emph{Basel-CVaR} risk measure by replacing the VaR :
%\begin{equation}\label{equ:basel_cvar}
%c_t=\max\left\{\text{CVaR}_{\alpha,t-1},\frac{k}{60}\sum_{s=1}^{60}
%\text{CVaR}_{\alpha,t-s}\right\}+\max\left\{\text{sCVaR}_{\alpha, t-1},\frac{\ell}{60}\sum_{s=1}^{60}
%\text{sCVaR}_{\alpha, t-s}\right\},
%\end{equation}
\begin{equation}\label{equ:basel_cvar}
c_t=\max\left\{\text{sCVaR}_{\alpha, t-1},\frac{\ell}{60}\sum_{s=1}^{60}
\text{sCVaR}_{\alpha, t-s}\right\},
\end{equation}
where
$\text{sCVaR}_{\alpha, t-s}$ is the \emph{stressed} CVaR at level $\alpha$ calculated on day $t-s$. The proposal suggests specifying $\alpha$ to be a level smaller than 99\% due to the difficulty of estimating CVaR at high confidence levels, but the exact value of $\alpha$ has not been determined. In the numerical examples of Section \ref{sec:numerical}, we choose $\alpha=98\%$.

\section{A New Asset Allocation Model Incorporating the Basel Accord Capital Constraint}\label{sec:opt_models}

Consider a portfolio composed of $d$ assets and let $u=(u_1, u_2, \ldots, u_d)^\top\in \mathbb{R}^{d}$ denote the portfolio weights of these assets, which are the percentage of initial wealth invested in the assets. Let $R=(R_1, R_2, \ldots,
R_d)^\top\in \mathbb{R}^{d}$ be the random vector of simple returns of these assets over a specified time horizon, e.g., one day. Then the simple return of the portfolio is $R^\top u$ and $-R^\top u$ is the loss of the portfolio (per \$1 of investment). Let $\mu\in \mathbb{R}^{d}$ be the (estimated) expected returns of the $d$ assets. Then $\mu^\top
u$ is the expected return of the portfolio.

The risk of the portfolio is measured by $\rho(-R^\top u)$, where $\rho$ is a properly chosen risk measure.
There are generally two approaches to the computation of $\rho(-R^\top u)$: (i) one
first assumes and estimates a (parametric) probability model for the joint distribution of $R$ and then
computes $\rho(-R^\top u)$; (ii) one estimates the risk $\rho(-R^\top u)$ directly from the historical observations of $R$ without assuming any hypothetical model for $R$.

As discussed in Section \ref{subsec:basel_rm}, the return vector $R$ is usually observed under different scenarios, such as economic booms and financial crises. Suppose there are $m$ scenarios. For each $s=1, \ldots,
m$, let $\tR^{[s]}\in \mathbb{R}^{n_s \times d}$ be the
collection of $n_s$ observations of $R$ under the $s$th scenario,
where each row of $\tR^{[s]}$ represents one observation of $R^\top$.
%Define the matrix
%%$\tR:=(\tR^{[1]}; \tR^{[2]}; \ldots; \tR^{[m]})\in \mathbb{R}^{n\times d}$.
%$\tR:=((\tR^{[1]})^\top,  (\tR^{[2]})^\top,  \ldots, (\tR^{[m]})^\top)^\top\in \mathbb{R}^{n\times d}$.
% %Then $- \tR^{[s]}u$ are the observations of the portfolio loss under the $s$th scenario, $s=1, 2, \ldots, m$. For a given $\alpha \in (0,1)$, let $p_s=\lceil\alpha n_s\rceil$, $s=1, \ldots,m$.
%Then the observations of the portfolio loss are given by
%\begin{equation}\label{equ:port_loss_obs}
%  %x(u):=-\tR u=(x^{[1]}(u); x^{[2]}(u); \ldots; x^{[m]}(u))\in \mathbb{R}^{n },
%  x(u):=-\tR u=(x^{[1]}(u)^\top, x^{[2]}(u)^\top, \ldots, x^{[m]}(u)^\top)^\top\in \mathbb{R}^{n },
%\end{equation}
Then, we define the matrix $\tR$ and the observations of portfolio loss $x(u)$ as
follows:
\begin{equation}\label{equ:port_loss_obs}
  \tR:=\begin{pmatrix}
    \tR^{[1]} \\ \tR^{[2]} \\ \vdots \\ \tR^{[m]}
  \end{pmatrix}
  \in \mathbb{R}^{n\times d},\ x(u):=-\tR u=\begin{pmatrix}
  -\tR^{[1]} u \\ -\tR^{[2]} u \\ \vdots \\ -\tR^{[m]} u
\end{pmatrix}=
  \begin{pmatrix}
  x^{[1]}(u) \\ x^{[2]}(u) \\ \vdots \\ x^{[m]}(u)
\end{pmatrix} \in \mathbb{R}^{n},\ n:=\sum_{s=1}^m n_s,
 \end{equation}
where
$  x^{[s]}(u):=-\tR^{[s]} u\in \mathbb{R}^{n_s}$
denotes the observations of portfolio loss under the $s$th scenario, $s=1,
2, \ldots, m$.

In this paper, we estimate $\rho(-R^\top u)$ directly from the return observations $\tR$, as this approach does not require a subjective model for $R$ and hence greatly reduces model misspecification error.

\subsection{Sample Versions of Measures of Portfolio Risk}\label{subsec:risk_stat_def}

In the following, we use $\lceil \cdot \rceil$ to denote the ceiling function. For $x=(x_1,
x_2, \ldots, x_n)^\top\in \R^n$,
let $(i_1, i_2, \ldots, i_n)$ be a permutation of $(1, 2, \ldots, n)$ such that $x_{i_1}\le x_{i_2} \le \cdots \le x_{i_n}$. Then, we define $x_{(j)}:= x_{i_j}$, $j=1, \ldots, n$; hence, $x_{(j)}$ denotes the $j$th smallest component of $x$.

Given the observation $\tR^{[s]}$, the empirical distribution function
of $(-R^{\top}u)$ under scenario $s$ is given by
\be\label{equ:emp_dis}
\hat F^{[s]}_{(-R^{\top}u)}(y):=\frac{1}{n_s}\sum_{i=1}^{n_s}1_{\{x^{[s]}(u)_i\leq y\}}.
\ee
Then, for each risk measure $\rho$ discussed in Section \ref{subsec:basel_rm}, $\rho(-R^{\top} u)$ can be estimated from the return observations $\tR$ by substituting $\hat F^{[s]}_{(-R^{\top}u)}(\cdot)$ for the distribution function of $(-R^{\top}u)$ under each scenario $s$. Thus, we obtain the following sample versions of risk measures.

\begin{description}
%\subsubsection{Mean-VaR}
\item[Variance:] Suppose there is one scenario, i.e., $m=1$. Then the sample variance of portfolio return is
\be \label{eq:rho-var} \rho_{\Variance}(x(u))=\frac{1}{n}x(u)^\top x(u)-\frac{1}{n^2}x(u)^{\top}\one\one^{\top}x(u),\ \text{where}\ \one:=(1, 1, \ldots, 1)^\top\in \mathbb{R}^{n}. \ee

\item[VaR:] Suppose that $m=1$. For a given $\alpha \in
(0,1)$, let $p=\lceil\alpha n\rceil$. %denotes the smallest integer that is greater than or equal to $x$.
Then the sample VaR at level $\alpha$ of the
portfolio is
\be \label{equ:VaR} \rho_{\VaR}(x(u))
:= x(u)_{(p)}=(-\tR u)_{(p)}. \ee

\item[CVaR:] Suppose that $m=1$. For a given $\alpha \in
(0,1)$, let $p=\lceil\alpha n\rceil$. Then the sample CVaR at level $\alpha$ of the
portfolio is
\be \label{equ:e_cvar}
\rho_{\CVaR}(x(u))
:= \frac{p-\alpha n}{(1-\alpha)n}x(u)_{(p)}+\frac{1}{(1-\alpha)n}\sum_{i=p+1}^n x(u)_{(i)}. \ee
By Theorem 10 in \cite{Rock02}, $\rho_{\CVaR}(x(u))$ can also be represented by
\be\label{equ:cvar_opt_rep}
\rho_{\CVaR}(x(u))=\min_{t\in\mathbb{R}} \;\; t+\frac{1}{(1-\alpha)n}\sum_{i=1}^{n}(x(u)_i-t)_+,\ \text{where}\ y_+:=\max(y, 0).
\ee

\item[Basel 2.5:] For a given $\alpha \in (0,1)$, let $p_s=\lceil\alpha n_s\rceil$, $s=1, \ldots,m$. Then $x^{[s]}(u)_{(p_s)}$ is the sample VaR at level $\alpha$ of the portfolio estimated from the data set $\tR^{[s]}$. Let $m_1=m_2=60$ and $m=120$. Suppose the first $m_1$ scenarios correspond to current market conditions and the last $m_2$ scenarios correspond to stressed scenarios. Then, the sample version of the Basel 2.5 Accord risk measure is given by
\begin{align}\label{equ:basel_3_sta}
   \rho_{\BaselTwoDotFive}(x(u)):= &\max\left\{ x^{[1]}(u)_{(p_1)},
  \frac{k}{m_1}\sum_{s=1}^{m_1}x^{[s]}(u)_{(p_s)}\right\}\notag\\
  & +\max\left\{x^{[m_1+1]}(u)_{(p_{m_1+1})},
  \frac{\ell}{m_2}\sum_{s=m_1+1}^{m}x^{[s]}(u)_{(p_s)}\right\}.
\end{align}
%Because the two constants $k$ and $\ell$ in the above formula are no less than 3, one would expect that $x^{[1]}(u)_{(p_1)}\leq
%  \frac{k}{m_1}\sum_{s=1}^{m_1}x^{[s]}(u)_{(p_s)}$ and $x^{[m_1+1]}(u)_{(p_{m_1+1})}\leq
%  \frac{\ell}{m_2}\sum_{s=m_1+1}^{m}x^{[s]}(u)_{(p_s)}$
%  in many situations. Hence, in many cases $\rho_{\BaselTwoDotFive}(x(u))$ would be equal to the following ``simplified Basel 2.5" risk measure, which is defined as
%\bea\label{equ:basel_3_sta_ave}
%\rho_{\BaselTwoDotFiveS}(x(u)) &:=& \frac{k}{m_1}\sum_{s=1}^{m_1}x^{[s]}(u)_{(p_s)}+ \frac{\ell}{m_2}\sum_{s=m_1+1}^{m}x^{[s]}(u)_{(p_s)}. \nonumber
%\eea

\item[Basel III:] Let $\alpha$ and $p_s$ be defined previously. Then
\be \label{equ:e_s_cvar}
\rho_{\CVaR}(x^{[s]}(u))
:= \frac{p_s-\alpha n_s}{(1-\alpha)n_s}x^{[s]}(u)_{(p_s)}+\frac{1}{(1-\alpha)n_s}\sum_{i=p_s+1}^{n_s}x^{[s]}(u)_{(i)} \ee
is the sample CVaR at level $\alpha$ of the portfolio estimated from the data set $\tR^{[s]}$. Suppose the first $m_1=60$ scenarios correspond to current market conditions and the last $m_2=60$ scenarios correspond to stressed scenarios. Then the sample version of the Basel-III risk measure is
%\bea\label{equ:basel_3_cvar}
%      \rho_{\BaselThree}(x(u))
%   &:= &\max\left\{ \rho_{\CVaR}(x^{[1]}(u)),
%  \frac{k}{m_1}\sum_{s=1}^{m_1}\rho_{\CVaR}(x^{[s]}(u))\right\}\\
%  &&+\max\left\{\rho_{\CVaR}(x^{[m_1+1]}(u)),
%  \frac{\ell}{m_2}\sum_{s=m_1+1}^{m}\rho_{\CVaR}(x^{[s]}(u))\right\}, \nonumber
%\eea
%where $m_1=m_2=60$ and $m=120$. Similarly, the simplified Basel III risk
%measure can be defined:
%\bea\label{equ:basel_3_cvar-a}
%     \rho_{\BaselThreeS}(x(u)) :=
%     \frac{k}{m_1}\sum_{s=1}^{m_1}\rho_{\CVaR}(x^{[s]}(u)) +
%     \frac{\ell}{m_2}\sum_{s=m_1+1}^{m}\rho_{\CVaR}(x^{[s]}(u)).
%\eea
\bea\label{equ:basel_3_cvar}
      \rho_{\BaselThree}(x(u)):=\max\left\{\rho_{\CVaR}(x^{[m_1+1]}(u)),
  \frac{\ell}{m_2}\sum_{s=m_1+1}^{m}\rho_{\CVaR}(x^{[s]}(u))\right\}.
\eea
%Similarly, the simplified Basel III risk
%measure can be defined:
%%\bea\label{equ:basel_3_cvar-a}
%%     \rho_{\BaselThreeS}(x(u)) :=
%%     \frac{k}{m_1}\sum_{s=1}^{m_1}\rho_{\CVaR}(x^{[s]}(u)) +
%%     \frac{\ell}{m_2}\sum_{s=m_1+1}^{m}\rho_{\CVaR}(x^{[s]}(u)).
%%\eea
%\bea\label{equ:basel_3_cvar-a}
%     \rho_{\BaselThreeS}(x(u)) :=
%     \frac{\ell}{m_2}\sum_{s=m_1+1}^{m}\rho_{\CVaR}(x^{[s]}(u)).
%\eea

%\item[Basel-TTCE:] For any given $0 < \alpha < \beta < 1$, let $p_s=\lceil\alpha n_s\rceil$ and $q_s=\lceil\beta n_s\rceil$, $s=1, \ldots,m$. Then
%\begin{equation}\label{equ:trunc_tce_est}
%  \rho_{\TTCE}(x^{[s]}(u)):=\frac{1}{q_s - p_s}\sum_{i=p_s+1}^{q_s} x^{[s]}(u)_{(i)}
%\end{equation}
%is an estimate of the $\text{TTCE}_{\alpha,\beta}$ of the portfolio loss using the data set $\tR^{[s]}$.  Hence, the Basel-TTCE risk measure can be estimated by
%\bea\label{equ:basel_ttce_sta}
%  & & \rho_{\text{Basel-TTCE}}(x(u))\\
%  &:=&\max\left\{\rho_{\TTCE}(x^{[1]}(u)),
%  \frac{k}{m_1}\sum_{s=1}^{m_1}\rho_{\TTCE}(x^{[s]}(u))\right\}\nonumber \\
%  &&+\max\left\{\rho_{\TTCE}(x^{[m_1+1]}(u)),
%  \frac{\ell}{m_2}\sum_{s=m_1+1}^{m}\rho_{\TTCE}(x^{[s]}(u))\right\},\nonumber
%\eea
%where $m_1=m_2=60$ and $m=120$.

\end{description}

\subsection{The ``Mean-$\rho$-Basel" Asset Allocation Model}

Suppose a portfolio manager in a financial institution attempts to construct a portfolio composed of the $d$
assets and to choose the portfolio weights $u\in \mathbb{R}^{d}$ to
optimize the portfolio performance. The manager can freely choose a risk measure $\rho$ to measure the risk of the portfolio, such as variance, VaR, or CVaR; in addition, he or she has the freedom to choose a model for the asset returns $R$ or a data set $\tilde Y\in \mathbb{R}^{n'\times d}$, which has a similar structure to that of $\tR$ defined in \eqref{equ:port_loss_obs} and contains observations of the asset returns, to estimate the portfolio risk. Hence, the portfolio risk will be given by $\rho(y(u))$, where $y(u):=-\tilde Yu$. Furthermore, the manager can specify that the expected portfolio return should be no less than a target return $r_0$, namely, the portfolio weights $u$ should satisfy
\[u \in \Ur :=\{ u \in \R^d \mid \mu^\top u \geq r_0, \one^\top u = 1, u \ge 0 \}. \]
Here, it is assumed that the portfolio is long only; this assumption can be relaxed or removed without incurring additional technical difficulty in solving the asset allocation problem specified below.

%The investor aims to minimize a risk measure subject to the expected return of the portfolio is not less than  $r_0$.

At the same time, the manager has to meet the constraint that the regulatory capital for his or her portfolio should not exceed an upper limit $C_0$, which is allocated to him or her by the financial institution's senior management. The capital requirement for the portfolio is calculated by the Basel Accord risk measure $\rho_{\BaselG}$, which is specified by the regulators; in addition, the data set $\tR$ used for calculating the capital requirements should also satisfy certain criteria and cannot be freely chosen by the portfolio manager. For example, the Basel 2.5 risk measure requires that $\tR$ should include 60 normal scenarios and 60 stressed scenarios. Hence, the data set $\tR$ may be different from the data set $\tilde Y$, and the capital requirement for the portfolio is $\rho_{\BaselG}(x(u))$, where $x(u)=-\tR u$.
%In order to address the constraint of capital requirement and control other
%risks, both internal  and external risk measures are required.
%The internal risk measure informs the investor/trader of the risk level of a
%portfolio; the external risk measure calculates the regulatory capital.
%Although
%a trader cannot choose the external risk measure of regulatory capital, which is
%imposed by regulators, he or she has the full freedom to choose the internal
%risk measure, such as the classical variance, VaR, and CVaR. In addition, the
%model or the data which he or she uses for the calculation of the internal risk
%measurement can be different from that used for the calculation of external risk
%measure.

To address the concerns of the portfolio manager, we propose the following \emph{``mean-$\rho$-Basel"} asset allocation problem:
\be\label{equ:opt_int_ext}
 \begin{aligned}
\min_{u \in \Ur }\;  & \rho(y(u))\\
\text{s.t. }\;\; & \rho_{\BaselG}(x(u)) \leq C_0,\end{aligned} \ee
where $x(u)=-\tR u$;
$y(u)=-\tilde Yu$;  $\rho_{\BaselG}$ is the Basel Accord risk measure for
calculating regulatory capital, i.e., $\rho_{\BaselTwoDotFive}$ or
$\rho_{\BaselThree}$; $C_0$ is the upper bound of the available capital; and $\rho$
is the risk measure that the manager chooses for gauging the risk
of the portfolio, such as variance, VaR, or CVaR.

%where $1_{n'}:=(1, 1, \ldots, 1)^\top\in \mathbb{R}^{n'}$.
%\be \label{eq:rho-var} \rho(y(u))=u^\top \hat \Sigma u, \ee
%where $\hat \Sigma$ is
%the covariance matrix of asset returns estimated from the data $\tilde Y$. For
%example, $\hat\Sigma$ can be estimated by the sample covariance matrix
%$\frac{1}{n'-1}(\tilde Y^\top \tilde Y-n'\bar Y\bar Y ^\top)$, where $\bar
%Y:=\frac{1}{n'}\tilde Y^\top 1_{n'}$ and $1_{n'}:=(1, 1, \ldots, 1)^\top\in \mathbb{R}^{n'}$.

%$y(u):=-\tilde Yu$ where $\tilde Y$ is a data set which may be different from $\tilde R$ used for defining $x(u)$;
%$\mu$ and $\Ur$ have the same definition as before.
%When $\rho$ is chosen to be variance, VaR, and CVaR respectively, then problem \eqref{equ:opt_int_ext} becomes the problem of mean-variance, mean-VaR, and mean-CVaR with Basel Accord capital constraint, respectively.
%Let $r_0$ be a given targeted expected return.  The investor aims to minimize a risk measure subject to the expected return of the portfolio is not less than  $r_0$.

%The problem \eqref{equ:opt_int_ext} is

The mean-$\rho$-Basel problem \eqref{equ:opt_int_ext}
with $\rho=\rho_{\VaR}$ or $\rho_{\BaselG}=\rho_{\BaselTwoDotFive}$ is non-convex and is usually difficult to solve, as it can be formulated as a mixed-integer programming (MIP) problem. For example, by introducing $z' \in \{0,1\}^{n'}$ and $z^{[s]} \in \{0,1\}^{n_s}$ for $1\leq s\leq m$, the mean-VaR-Basel2.5 problem can be
formulated as the following MIP problem:
\be\label{eqmip}
\begin{aligned}
\min_{u, z, \beta, \gamma} \ \ & \beta_0 \\
\text{s.t.}\ \ \  & -\tilde Yu \le \beta_0 \one +\eta z', \one^\top z' \le n'-p', \; z' \in \{0,1\}^{n'},\\
& -\tilde{R}^{[s]}u \le \beta_s \one +\eta z^{[s]}, \one^\top z^{[s]} \le n_s-p_s, z^{[s]} \in \{0,1\}^{n_s},\; s=1,\ldots,m,\\
%& \one^\top z^{[s]} \le n_s-p_s,\; z^{[s]} \in \{0,1\}^{n_s}, \; s=1,\ldots,m,\\
& \beta_1 \le \gamma_1,\;  \frac{k}{m_1}\sum_{s=1}^{m_1} \beta_s \le \gamma_1,
 \beta_{m_1+1} \le \gamma_2,\;  \frac{\ell}{m_2}\sum_{s=m_1+1}^{m} \beta_s \le \gamma_2,\\
&   \gamma_1 + \gamma_2 \le C_0,\\
& u \in \Ur,
\end{aligned}
\ee
where $p':=\lceil\alpha n'\rceil$, $p_s:=\lceil\alpha n_s\rceil$, $\eta$ is a large constant. For instance, $\eta$ can be chosen to be
$\eta=\max_{ u \in \Ur}  \max_{j=1,\ldots,n} (-\tilde Y u)_j$.

Similarly, by \eqref{equ:cvar_opt_rep}, the mean-VaR-Basel3 problem can be formulated as the following MIP problem:
\be\label{eqcmip}
\begin{aligned}
\min_{u, z', \beta_0, t, r} \ \ &  \beta_0  \\
\text{s.t.}\ \ \ \ \ \  & -\tilde Yu \le \beta_0 \one +\eta z', \one^\top z' \le n'-p',\; z' \in \{0,1\}^{n'},\\
& t_{m_1+1} + \frac{1}{(1-\alpha_3)n_{m_1+1}}\sum_{i=1}^{n_{m_1+1}}r_i^{[m_1+1]} \le C_0,\\
& \frac{\ell}{m_2}\sum_{s=m_1+1}^{m}(t_s + \frac{1}{(1-\alpha_3)n_s}\sum_{i=1}^{n_s}r_i^{[s]}) \le C_0,\\
&  r_i^{[s]} \ge 0,~ r_i^{[s]} \ge -\tilde R_i^{[s]}u-t_s,i=1,\ldots,n_s, s=m_1+1,\ldots,m,\\
& u \in \Ur.
\end{aligned}
\ee
On the other hand, the mean-$\rho$-Basel problem with $\rho_{\BaselG}$ being $\rho_{\BaselThree}$ and with $\rho$ being $\rho_{\Variance}$ or $\rho_{\CVaR}$ is convex. More precisely, the mean-variance-Basel3 and mean-CVaR-Basel3 problems
can be formulated as a quadratic programming (QP) problem  and a linear programming (LP) problem, respectively, thanks to the LP formulation of CVaR given in \eqref{equ:cvar_opt_rep}.

We develop a unified method for solving the mean-$\rho$-Basel problem
in Section \ref{subsubsec:algo_var} and provide convergence analysis of the method in Section \ref{sec:convergence}. The method can also be applied to solve the classical \emph{``mean-$\rho$"} problem:
\be\label{equ:opt} \min_ {u \in \Ur }  \;  \rho(x(u)), \ee
where $\rho$ can be any risk measure chosen by the portfolio manager, such as variance, VaR, CVaR, and $\rho_{\BaselG}$.
%The ``mean-$\rho$" problem does not take into account the Basel Accord capital constraint and is hence simpler than the problem \eqref{equ:opt_int_ext}.
%a simplification of
% the asset allocation problem  \eqref{equ:opt_int_ext} is
If $\rho=\rho_{\BaselG}$, problem \eqref{equ:opt} is the \emph{``mean-Basel"} problem, in which the Basel Accord risk measures are used to quantify the risk of the portfolio. The Basel Accord risk measures involve multiple VaR or CVaR under different scenarios, which essentially correspond to different models or distributions of  asset returns. Hence, using the Basel Accord risk measures, or, more generally, VaR or CVaR with scenario analysis, as the portfolio risk measure provides a way to address the problem of model uncertainty. Alternatively, the portfolio manager can construct the portfolio by maximizing the expected return of portfolio subject to the constraint that the portfolio risk, measured by $\rho$, does not exceed a pre-specified risk budget $b_0$. The corresponding asset allocation problem is
\be\label{equ:optc} \begin{aligned} \min_{u \in \U } & \; -\mu^\top u \\
  \text{s.t.} &\;  \rho(x(u)) \leq b_0, \end{aligned} \ee
where
$ \U =\{ u \in \R^d \mid  \one^\top u = 1, u \ge 0 \}$.
The mean-$\rho$ problems
\eqref{equ:opt} and \eqref{equ:optc} with
$\rho \in \{\rho_{\VaR}, \rho_{\BaselTwoDotFive}\}$ are also
MIP problems which are difficult to solve. The details of the method for solving these problems are given in Section \ref{sec:ADM-other-models}.
%In fact,
%the problem \eqref{equ:opt} with
%$\rho=\rho_{\VaR}$ can be
%formulated as
%\be\label{equ:opt_mip}
%\begin{aligned}
%  \min_{u \in \Ur, \gamma \in \R, y \in \R^n}\ \ & \gamma
%\\
%\text{s.t. } \; \;\qquad &  - \tilde R u \le \gamma +\eta y, \\
%& \one^\top y = n-p, y\in \{0,1\}^n,
%\end{aligned}
%\ee
%where $\eta$ is a large constant, for example,
% $ \eta=\max_{ u \in \Ur}  \max_{j=1,\ldots,n} (-\tR u)_j$.
%Similarly, problem \eqref{equ:opt} with $\rho=\rho_{\BaselTwoDotFive}$ can be rewritten as:
%\be\label{eqmip}
%\begin{aligned}
%  \min \ \ & \gamma_1 + \gamma_2 \\
%  \text{s.t. }\; \;  & -\tilde{R}^{[s]}u \le \beta_s \one +\eta z^{[s]}, \; s=1,\ldots,m,\\
% & \one^\top z^{[s]} \le n_s-p_s,\; z^{[s]} \in \{0,1\}^{n_s}, \; s=1,\ldots,m,\\
%  & \beta_1 \le \gamma_1,\;  \frac{k}{m_1}\sum_{s=1}^{m_1} \beta_s \le \gamma_1,\\
%  & \beta_{m_1+1} \le \gamma_2,\;  \frac{\ell}{m_2}\sum_{s=m_1+1}^{m} \beta_s \le \gamma_2,
%   u \in \Ur.
%  \end{aligned}
%\ee
%
%On the other hand, by using a similar proof as Theorem 16 in \citet{Rock02},
% problem \eqref{equ:opt} with $\rho=\rho_{\BaselThree}$  can be
%formulated as a linear programming problem:
%\be\label{eqcvar}
%\begin{aligned}
%  \min \;\; & \gamma \\
%\text{s.t. }\; \;  & t_{m_1+1} + \frac{1}{(1-\alpha)n_{m_1+1}}\sum_{i=1}^{n_{m_1+1}}y_i^{[m_1+1]} \le \gamma,\\
% & \frac{\ell}{m_2}\sum_{s=m_1+1}^{m}(t_s + \frac{1}{(1-\alpha)n_s}\sum_{i=1}^{n_s}y_i^{[s]}) \le \gamma,\\
%&  y_i^{[s]} \ge 0,~ y_i^{[s]} \ge -\tilde{R}_i^{[s]}u-t_s, \; i=1, \ldots, n_s; s=m_1+1,\ldots,m,  u \in \Ur.
%  \end{aligned}
%\ee

\section{The Alternating Direction Augmented Lagrangian Method}\label{sec:adm_algo}
%For simplicity of presentation, we focus on the model problem \eqref{equ:opt} in
%this section. Similar algorithms and analysis can be extended to models \eqref{equ:optc} and
%\eqref{equ:opt_int_ext} without any difficulty.
% Specifically, we first present the ADM framework for solving \eqref{equ:opt} in
%subsection \ref{subsubsec:algo_var}, then demonstrate how the algorithm can
%be used to solve other models in subsection  \ref{sec:ADM-other-models}, and
%finally provide convergence analysis for  \eqref{equ:opt} in subsection
% \ref{sec:convergence}.
In this section, we propose a unified algorithm adapted from the
 \emph{alternating direction augmented Lagrangian} method (ADM) to solve the mean-$\rho$-Basel and the mean-$\rho$ problem. Although the ADM approach has been
used in convex optimization (see, e.g., \citealt*{WGY-2010}; \citealt{HeYuan2012}; and \citealt*{HongLuo2012}),
it appears that its use for
solving non-convex problems involving VaR or Basel Accords risk measures is new. In particular, the proposed method is different from the penalty
decomposition methods proposed in \citet{BaiZhengSunSun2012}, in which the division of blocks of variables leads to subproblems that
are more expensive to solve.  %\revise{For simplicity of presentation, we first present the method for solving the mean-$\rho$ problem \eqref{equ:opt} in Section \ref{subsubsec:algo_var} and provide convergence analysis for the method in Section \ref{sec:convergence}. We then demonstrate the method for solving the mean-$\rho$-Basel model \eqref{equ:opt_int_ext} in Section \ref{sec:ADM-other-models}.}

\subsection{The ADM Algorithm for Solving the Mean-$\rho$-Basel Problem \eqref{equ:opt_int_ext}}\label{subsubsec:algo_var}
% By introducing certain intermediate variables, we can exploit the structures of
% the risk measures and design an  efficient approximate algorithm.
% Treating
% the implicit relationships $x(u)$ and $y(u)$ as variables $x$ and $y$,
% respectively, and imposing $x = -\tR u$ and $y = -\tilde Y u $
% explicitly, we can rewrite \eqref{equ:opt_int_ext} as
The problem \eqref{equ:opt_int_ext} is equivalent to
\be\label{equ:opt_int_ext2}
 \begin{aligned}
\min_{u \in \Ur, x\in\mathbb{R}^n, y\in\mathbb{R}^{n'}}   & \; \rho(y)\\
\text{s.t.}\ \ \ \ \ \ \quad &\; \rho_{\BaselG}(x) \leq C_0, \\
& x + \tR u = 0, \\
& y + \tilde Y u = 0.
\end{aligned} \ee
We then define the augmented
Lagrangian function for \eqref{equ:opt_int_ext2} as follows:
\be
\Lc(x,y,u,\lambda,\pi):=\rho(y) + \lambda^\top (x + \tR u) + \frac{\sigma_1}{2}
\| x + \tR u \|^2 + \pi^\top (y + \tilde Y u) + \frac{\sigma_2}{2} \|y + \tilde Y u  \|^2,
\label{auglang}
\ee
where $\sigma_1,\sigma_2>0$ is the penalty parameter and $\lambda \in \R^n$ and
$\pi \in \R^{n'}$  are the Lagrangian multipliers associated with
the equality constraints $x+\tR u=0$ and $y + \tilde Y u = 0$, respectively.

%\begin{algorithm2e}[H]
%%\SetAlgoLined
%\KwData{this text}
%\KwResult{how to write algorithm with \LaTeX2e }
%initialization\;
%\While{not at end of this document}{
%read current\;
%\eIf{understand}{
%go to next section\;
%current section becomes this one\;
%}{
%go back to the beginning of current section\;
%}
%}
%\caption{How to write algorithms}
%\end{algorithm2e}

We propose an ADM algorithm that minimizes \eqref{auglang} with respect to $x$, $y$, and $u$ in an alternating fashion while updating $\lambda$ and $\pi$ in the iteration. More precisely, let $x^{(j)}, y^{(j)}$, and $u^{(j)}$ be the values of $x$, $y$, and $u$ at the beginning of the $j$th iteration of the algorithm; then the algorithm updates the values of $x$, $y$, and $u$ by solving the following three subproblems sequentially:
\begin{align}
x^{(j+1)}  =& \arg \min_{x \in \R^n}  \;
\Lc(x, y^{(j)}, u^{(j)}, \lambda^{(j)}, \pi^{(j)}), \text{ s.t. }
\rho_{\BaselG}(x) \le C_0,\label{eq:ADM-x}\\
y^{(j+1)}  =& \arg \min_{y \in \R^{n'}}  \;
\Lc(x^{(j+1)}, y, u^{(j)}, \lambda^{(j)}, \pi^{(j)}),\label{eq:ADM-y}\\
 u^{(j+1)}  =&  \arg \min_{u \in \Ur } \;  \Lc(x^{(j+1)},y^{(j+1)}, u, \lambda^{(j)}, \pi^{(j)}).\label{eq:ADM-u}
\end{align}
Then, it updates the the Lagrangian multipliers by
\begin{equation}
 \lambda^{(j+1)} = \lambda^{(j)} + \beta_1 \sigma_1 (x^{(j+1)}+ \tR u^{(j+1)}), \label{eq:ADM-lmb}
\end{equation}
%\end{algomathdisplay}
\begin{equation}\label{eq:ADM-pi}
 \pi^{(j+1)} =\pi^{(j)} + \beta_2 \sigma_2 (y^{(j+1)}+ \tilde Y u^{(j+1)}),
\end{equation}
where $\beta_1,\beta_2>0$ are appropriately chosen step lengths.

The solutions to problems \eqref{eq:ADM-x} and \eqref{eq:ADM-y} are given in the lemmas at the end of this subsection; these solutions are obtained either in closed form, by solving QP problems, or by minimizing a single variable function on a closed interval. As for
problem \eqref{eq:ADM-u}, simple algebra shows that it is equivalent to the following QP problem \eqref{eq:ADM-u-v2}:
\begin{align} \label{eq:ADM-u-v2} %u^{(j+1)}  =   \arg  \min_{ u \in \Ur} \;   \frac{1}{2} u^\top \tR^\top \tR u + b^\top u,
 u^{(j+1)}  =  & \arg \min_{u \in \Ur } \;   \frac{1}{2} u^\top (\sigma_1
 \tR^\top \tR  + \sigma_2 \tilde Y ^\top \tilde Y )  u +
 b_e^\top u,\ \text{where}\\
%$b=\tR^\top (\frac{1}{\sigma} \lambda^{(j)}  + x^{(j+1)})$.
& b_e=\tR^\top ( \lambda^{(j)} + \sigma_1 x^{(j+1)}) + \tilde Y^\top ( \pi^{(j)}
+ \sigma_2 y^{(j+1)}).\notag
\end{align}

The complete ADM algorithm is given as follows.

\begin{algorithm}
\caption{ADM algorithm for solving the mean-$\rho$-Basel problem \eqref{equ:opt_int_ext}}
\begin{algorithmic}[1]
\STATE Choose parameter $\sigma_1>0$, $\sigma_2>0$, $\beta_1>0$, $\beta_2>0$\;
\STATE Set $j=0$; initialize $y^{(0)} \in \R^{n'}$, $u^{(0)} \in \R^d$,
$\lambda^{(0)}:=0$, and $\pi^{(0)}:=0$\;
\WHILE{$\{u^{(j)}\}$ has not converged}
\STATE update $x^{(j+1)}$ to be the solution to  problem \eqref{eq:ADM-x}; the solution is given in Lemma \ref{lemma:opt-ADM-basel-x} and Lemma \ref{lemma:opt-ADM-basel3-x}
for $\rho_{\BaselG}=\rho_{\BaselTwoDotFive}$ and $\rho_{\BaselG}=\rho_{\BaselThree}$, respectively\;
\STATE update $y^{(j+1)}$ to be the solution to  problem \eqref{eq:ADM-y}; the solution is given in Lemma \ref{lemma:opt-ADM-y-variance}, Lemma \ref{lemma:opt-ADM-y-var}, and Lemma \ref{lemma:opt-ADM-y-cvar} for $\rho=\rho_{\Variance}$, $\rho=\rho_{\VaR}$, and $\rho=\rho_{\CVaR}$, respectively\;
\STATE update $u^{(j+1)}$ by solving the QP problem \eqref{eq:ADM-u-v2}\;
\STATE update $\lambda^{(j+1)}$ and $\pi^{(j+1)}$ by \eqref{eq:ADM-lmb} and \eqref{eq:ADM-pi}, respectively\;
\STATE increase $j$ by one and continue.
\ENDWHILE
\end{algorithmic}
\end{algorithm}

The algorithm is very simple and easy to implement. Standard QP solvers, such as CPLEX, can be used to solve the QP problems in Step 4, 5, and 6 of the algorithm. In step 5 for the case of $\rho=\rho_{\CVaR}$, the solution is obtained by minimizing a single-variable
function on a closed interval, which can be solved by golden section search and parabolic
interpolation (e.g., the function ``fminbnd'' in Matlab).

One particular implementation of the ADM algorithm including the specification of the parameters $\sigma_i$ and $\beta_i$ and the convergence test is given in Section \ref{subsec:parameters_adm}.

The lemmas for solving the subproblems in the algorithm are as follows.

\begin{lemma} \label{lemma:opt-ADM-basel-x}
Consider problem \eqref{eq:ADM-x} with $\rho_{\BaselG}=\rho_{\BaselTwoDotFive}$.
Let $v:=-\left(\tR u^{(j)} + \frac{1}{\sigma_1} \lambda^{(j)} \right)$
and denote $v=((v^{[1]})^{\top}, (v^{[2]})^{\top}, \ldots, (v^{[m]})^{\top})^{\top}$, where $v^{[s]}\in\mathbb{R}^{n_s}$, $s=1, 2, \ldots, m$. Let $(k_{s,1}, k_{s, 2}, \ldots, k_{s, n_s})$ be the permutation of $(1, 2, \ldots, n_s)$ such that $v^{[s]}_{k_{s, 1}}\leq v^{[s]}_{k_{s, 2}}\leq \cdots \leq v^{[s]}_{k_{s, n_s}}$, $s=1,\ldots,m$. Let $p_s:=\lceil\alpha n_s\rceil$ and $h^{[s]}:=(v^{[s]}_{k_{s, 1}}, v^{[s]}_{k_{s, 2}}, \ldots, v^{[s]}_{k_{s, p_s}})^{\top}$, $s=1,\ldots,m$. The optimal solution $x$ to \eqref{eq:ADM-x} is given by
 \begin{eqnarray}\label{eq:solution-2}
x^{[s]}_{k_{s, i}} = \begin{cases} z^{[s]}_{i}, &\mbox{if}\
1\leq i\leq p_s,\\ v^{[s]}_{k_{s, i}}, &\mbox{otherwise},
\end{cases}\ i=1, 2, \ldots, n_s,
\end{eqnarray}
where $(z^{[1]}, z^{[2]}, \ldots, z^{[m]})$ is the optimal solution to the
following QP problem:
\be\label{eq:QP}
\begin{aligned}
\min\limits_{z, \gamma_1, \gamma_2}\; \; \;&   \sum_{s=1}^{m} \| z^{[s]} - h^{[s]}\|^2 \\
%\mbox{\rm s.t.}\;\quad& z^{[s]} \leq z_{p_s}^{[s]} \one,\quad s=1,\cdots,m,\\
\text{s.t.}\;\quad& z_1^{[s]} \leq z_2^{[s]}\leq \cdots \leq z_{p_s}^{[s]},\quad s=1,\cdots,m,\\
& \gamma_1 + \gamma_2 \le C_0, \quad z_{p_1}^{[1]} \leq \gamma_1,  \quad z_{p_{m_1+1}}^{[m_1+1]} \leq \gamma_2, \\
&\frac{k}{m_1}\sum_{s=1}^{m_1}z_{p_s}^{[s]} \leq \gamma_1, \quad \frac{\ell}{m_2}\sum_{s=m_1+1}^{m}z_{p_s}^{[s]} \leq \gamma_2.
\end{aligned}
\ee
\end{lemma}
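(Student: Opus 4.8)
The plan is to first reduce problem \eqref{eq:ADM-x} to a Euclidean projection, and then exploit the fact that $\rho_{\BaselTwoDotFive}$ depends on $x$ only through finitely many order statistics. Since the terms of $\Lc$ involving $y$, $u$, $\pi$ and $\rho(y)$ do not depend on $x$, and $\lambda^\top(x+\tR u)+\tfrac{\sigma_1}{2}\|x+\tR u\|^2=\tfrac{\sigma_1}{2}\|x-v\|^2$ up to an additive term independent of $x$, with $v=-(\tR u^{(j)}+\tfrac{1}{\sigma_1}\lambda^{(j)})$, problem \eqref{eq:ADM-x} is exactly the projection of $v$ onto $\{x\in\R^n:\rho_{\BaselTwoDotFive}(x)\le C_0\}$; this set is closed (as $\rho_{\BaselTwoDotFive}$ is continuous) and nonempty (any sufficiently negative $x$ is feasible), and the objective is coercive, so a minimizer exists. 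I would then record that, by \eqref{equ:basel_3_sta}, $\rho_{\BaselTwoDotFive}(x)$ depends on $x$ only through the order statistics $x^{[1]}_{(p_1)},\ldots,x^{[m]}_{(p_m)}$ and is nondecreasing in each of them, and that $\rho_{\BaselTwoDotFive}(x)\le C_0$ holds if and only if there exist $\gamma_1,\gamma_2$ with $\gamma_1+\gamma_2\le C_0$, $x^{[1]}_{(p_1)}\le\gamma_1$, $x^{[m_1+1]}_{(p_{m_1+1})}\le\gamma_2$, $\tfrac{k}{m_1}\sum_{s=1}^{m_1}x^{[s]}_{(p_s)}\le\gamma_1$ and $\tfrac{\ell}{m_2}\sum_{s=m_1+1}^{m}x^{[s]}_{(p_s)}\le\gamma_2$ (this linearizes the two maxima).

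The heart of the argument is a one-block projection fact: for $w\in\R^{n_s}$ and any bound $b$, the problem $\min_y\|y-w\|^2$ subject to $y_{(p_s)}\le b$ is solved by leaving the $n_s-p_s$ largest coordinates of $w$ unchanged and replacing the $p_s$ smallest ones, $w_{(1)}\le\cdots\le w_{(p_s)}$, by $\min(w_{(i)},b)$. Indeed, any feasible $y$ has at least $p_s$ coordinates $\le b$, forcing coordinate $j$ to be $\le b$ costs at least $(w_j-b)_+^2$, and $v\mapsto(v-b)_+^2$ is nondecreasing, so it is cheapest to cap exactly the $p_s$ smallest coordinates; one then checks the resulting $y$ is feasible and that its $p_s$-th order statistic equals $\min(w_{(p_s)},b)\le b$. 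Applying this blockwise turns an arbitrary feasible $x$ for the projection into one of the claimed form: put $b_s:=x^{[s]}_{(p_s)}$ and replace $x^{[s]}$ by the above projection onto $\{y:y_{(p_s)}\le b_s\}$; this does not increase $\|x^{[s]}-v^{[s]}\|^2$ and can only lower each $x^{[s]}_{(p_s)}$, hence by monotonicity preserves feasibility. Thus some optimal projection has coordinates $k_{s,p_s+1},\ldots,k_{s,n_s}$ equal to the corresponding entries of $v^{[s]}$, while the remaining coordinates form a nondecreasing vector $z^{[s]}$, which is precisely the form in \eqref{eq:solution-2}.

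Restricting to $x$ of this form, the objective collapses to $\sum_{s=1}^{m}\|z^{[s]}-h^{[s]}\|^2$ and the ordering $z^{[s]}_1\le\cdots\le z^{[s]}_{p_s}$ is built in; it remains to see that the $p_s$-th order statistic of the reconstructed $x^{[s]}$ equals $z^{[s]}_{p_s}$, which holds as soon as $z^{[s]}_{p_s}\le v^{[s]}_{k_{s,p_s}}=h^{[s]}_{p_s}$. This bound may be assumed without loss of generality: truncating $z^{[s]}_i\mapsto\min(z^{[s]}_i,h^{[s]}_{p_s})$ keeps the ordering, cannot increase $\|z^{[s]}-h^{[s]}\|^2$ (since $h^{[s]}_i\le h^{[s]}_{p_s}$ for $i\le p_s$), and only decreases $z^{[s]}_{p_s}$ and hence the left-hand sides of the Basel constraints; the same truncation shows every optimizer of \eqref{eq:QP} satisfies it. With $x^{[s]}_{(p_s)}=z^{[s]}_{p_s}$, the reformulated constraint from the first paragraph becomes exactly the constraint system of \eqref{eq:QP}, so projection and QP share the same optimal value, and the optimal $x$ is recovered from an optimal $z$ via \eqref{eq:solution-2}. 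The main obstacle is the order-statistic bookkeeping---establishing that both freezing the large coordinates and the inequality $z^{[s]}_{p_s}\le h^{[s]}_{p_s}$ are without loss of generality---while checking along the way that ties among the $v^{[s]}_{k_{s,i}}$ are harmless, which they are, since every construction depends only on the sorted values and not on the chosen permutation.
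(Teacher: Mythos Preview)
Your proposal is correct and reaches the same reduction as the paper, but the key step differs. The paper's proof does not isolate the one-block projection onto $\{y:y_{(p_s)}\le b\}$; instead, after rewriting \eqref{eq:ADM-x} as $\min_x\sum_s\|x^{[s]}-v^{[s]}\|^2$ subject to $\rho_{\BaselTwoDotFive}(x)\le C_0$, it uses a swap argument (if $x^{[s]}_i>x^{[s]}_j$ with $i<j$ then exchanging them keeps feasibility and does not raise the objective) to obtain a sorted optimal $x^{[s]}$, and then two coordinate-wise contradiction steps: any optimal sorted solution must satisfy $x^{[s]}_i\le v^{[s]}_i$ for every $i$ (else lowering that coordinate strictly improves), and $x^{[s]}_j=v^{[s]}_j$ for $j>p_s$ (else raising it strictly improves without changing $x^{[s]}_{(p_s)}$). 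This yields the equivalent problem with variables $x^{[s]}_1\le\cdots\le x^{[s]}_{p_s}$ and the Basel constraint rewritten in $x^{[s]}_{p_s}$, which is exactly \eqref{eq:QP}. Your route through the explicit single-block projection is a legitimate alternative; it buys you an explicit verification that the optimizer of \eqref{eq:QP} satisfies $z^{[s]}_{p_s}\le h^{[s]}_{p_s}$ (so that $x^{[s]}_{(p_s)}=z^{[s]}_{p_s}$ after reconstruction), a point the paper obtains implicitly from $x^{[s]}_i\le v^{[s]}_i$. The paper's argument is a bit shorter; yours is more self-contained about the order-statistic bookkeeping.
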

%\begin{lemma} \label{lemma:opt-ADM-basel-x_o} Suppose, without loss of generality, that $v^{[s]}_1\leq v^{[s]}_2\leq \cdots \leq v^{[s]}_{n_s}$, $s=1,\ldots,m$.
% The optimal solution $x$ of \eqref{eq:ADM-x-v5} are given by
% \begin{eqnarray}\label{eq:solution-2_o}
%x^{[s]}_{i} = \left\{\begin{array}{ll} z^{[s]}_{i}, &\mbox{if}\
%1\leq i\leq p_s,\\ v^{[s]}_{i}, &\mbox{otherwise},
%\end{array}\right.\ i=1, 2, \ldots, n_s,
%\end{eqnarray}
%where $z$ in  \eqref{eq:solution-2_o} are determined by the optimal solutions of the
%following QP:
%\be\label{eq:QP_o}
%\begin{aligned}
%\min\limits_{z, \gamma_1, \gamma_2}\; \; \;&   \sum_{s=1}^{m} \| z^{[s]} - h^{[s]}\|^2 \\
%\mbox{\rm s.t.}\;\quad& z^{[s]} \leq z_{p_s}^{[s]} \one,\quad s=1,\cdots,m,\\
%& \gamma_1 + \gamma_2 \le C_0, \quad z_{p_1}^{[1]} \leq \gamma_1,  \quad z_{p_{m_1+1}}^{[m_1+1]} \leq \gamma_2, \\
%&\frac{k}{m_1}\sum_{s=1}^{m_1}z_{p_s}^{[s]} \leq \gamma_1, \quad \frac{\ell}{m_2}\sum_{s=m_1+1}^{m}z_{p_s}^{[s]} \leq \gamma_2.
%\end{aligned}
%\ee
%Here  $h^{[s]}$ are vectors consisting of the first $p_s$ smallest components of
%$v^{[s]}$, $s=1,\ldots,m$.
%\end{lemma}
\begin{proof} See Appendix \ref{app:proof-lemma-opt-ADM-basel-x}.
\end{proof}

\begin{lemma} \label{lemma:opt-ADM-basel3-x}
Consider problem \eqref{eq:ADM-x} with
$\rho_{\BaselG}=\rho_{\BaselThree}$.
Let $v$ and $v^{[s]}$ be defined as in Lemma \ref{lemma:opt-ADM-basel-x}. Let $x^{[s]}$,
  $s=m_1+1,\ldots,m$ be the optimal solution to the following QP problem:
%\be\label{eq:ADM-x-CVaR-solu}
%\begin{aligned}
%  \min_{t,x} \;\; & \sum_{s=m_1+1}^m \| x^{[s]} - v^{[s]}\|^2, \\
%\text{s.t. }\; \;  & t_{m_1+1} +
%\frac{1}{(1-\alpha)n_{m_1+1}}\sum_{i=1}^{n_{m_1+1}}(x_i^{[m_1+1]} -t_{m_1+1})_+  \le C_0,\\
% & \frac{\ell}{m_2}\sum_{s=m_1+1}^{m}(t_s +
% \frac{1}{(1-\alpha)n_s}\sum_{i=1}^{n_s}( x_i^{[s]} - t_s)_+) \le C_0.   \\
%  \end{aligned}
%\ee
\be\label{eq:ADM-x-CVaR-solu}
\begin{aligned}
  \min_{t,x,z} \;\; & \sum_{s=m_1+1}^m \| x^{[s]} - v^{[s]}\|^2, \\
\text{s.t. }\; \;  & t_{m_1+1} +
\frac{1}{(1-\alpha)n_{m_1+1}}\sum_{i=1}^{n_{m_1+1}}z_i^{[m_1+1]}  \le C_0\\
 & \frac{\ell}{m_2}\sum_{s=m_1+1}^{m}(t_s +
 \frac{1}{(1-\alpha)n_s}\sum_{i=1}^{n_s}z_i^{[s]}) \le C_0,\\
& z_i^{[s]}\geq 0, z_i^{[s]}\geq x_i^{[s]} -t_{s}, i=1, \ldots, n_s, s=m_1+1, \ldots, m.\\
\end{aligned}
\ee
Then the optimal solution to \eqref{eq:ADM-x} is given by
$$x=((v^{[1]})^{\top}, (v^{[2]})^{\top}, \ldots, (v^{[m_1]})^{\top}, (x^{[m_1+1]})^{\top}, (x^{[m_1+2]})^{\top}, \ldots, (x^{[m]})^{\top})^{\top}.$$
\end{lemma}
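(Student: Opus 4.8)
The plan is to reduce \eqref{eq:ADM-x} to a Euclidean projection onto the Basel--III feasible set and then exploit the block structure of $\rho_{\BaselThree}$. First I would note that, as a function of $x$ alone, the augmented Lagrangian \eqref{auglang} equals $\frac{\sigma_1}{2}\bigl\|x+\tR u^{(j)}+\frac{1}{\sigma_1}\lambda^{(j)}\bigr\|^2$ plus a constant independent of $x$; hence, with $v=-\bigl(\tR u^{(j)}+\frac{1}{\sigma_1}\lambda^{(j)}\bigr)$ as in the statement, problem \eqref{eq:ADM-x} is equivalent to minimizing $\|x-v\|^2=\sum_{s=1}^m\|x^{[s]}-v^{[s]}\|^2$ subject to $\rho_{\BaselThree}(x)\le C_0$. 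Next I would observe from \eqref{equ:basel_3_cvar} that $\rho_{\BaselThree}(x)$ depends only on the stressed blocks $x^{[s]}$, $s=m_1+1,\dots,m$. Therefore the projection decouples: the blocks $s=1,\dots,m_1$ are unconstrained, so their optimal choice is $x^{[s]}=v^{[s]}$, while the stressed blocks solve $\min\sum_{s=m_1+1}^m\|x^{[s]}-v^{[s]}\|^2$ subject to $\rho_{\BaselThree}(x)\le C_0$. It then remains to show that this last problem has the same optimal value and the same optimal $x$-components as the QP \eqref{eq:ADM-x-CVaR-solu}.

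The core step is the equivalence between the constraint $\rho_{\BaselThree}(x)\le C_0$ --- that is, $\rho_{\CVaR}(x^{[m_1+1]})\le C_0$ and $\frac{\ell}{m_2}\sum_{s=m_1+1}^m\rho_{\CVaR}(x^{[s]})\le C_0$ holding simultaneously --- and the existence of $(t,z)$ satisfying the linear constraints of \eqref{eq:ADM-x-CVaR-solu}. For the ``if'' direction, any $(t,x,z)$ feasible for \eqref{eq:ADM-x-CVaR-solu} satisfies $z_i^{[s]}\ge(x_i^{[s]}-t_s)_+$, so by the minimization in \eqref{equ:cvar_opt_rep} applied block-wise, $t_s+\frac{1}{(1-\alpha)n_s}\sum_i z_i^{[s]}\ge\rho_{\CVaR}(x^{[s]})$ for each $s$; the two inequalities of \eqref{eq:ADM-x-CVaR-solu} then force $\rho_{\CVaR}(x^{[m_1+1]})\le C_0$ and $\frac{\ell}{m_2}\sum_s\rho_{\CVaR}(x^{[s]})\le C_0$, i.e. $\rho_{\BaselThree}(x)\le C_0$. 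For the ``only if'' direction, given $x$ with $\rho_{\BaselThree}(x)\le C_0$ I would pick, for each $s=m_1+1,\dots,m$, a minimizer $t_s$ of the Rockafellar--Uryasev objective $t\mapsto t+\frac{1}{(1-\alpha)n_s}\sum_i(x_i^{[s]}-t)_+$ (a minimizer exists, since this function is convex, piecewise linear, and coercive) and set $z_i^{[s]}=(x_i^{[s]}-t_s)_+$; by \eqref{equ:cvar_opt_rep}, $t_s+\frac{1}{(1-\alpha)n_s}\sum_i z_i^{[s]}=\rho_{\CVaR}(x^{[s]})$, so both inequalities of \eqref{eq:ADM-x-CVaR-solu} reduce exactly to the two defining inequalities of $\rho_{\BaselThree}(x)\le C_0$. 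Since the objective $\sum_{s=m_1+1}^m\|x^{[s]}-v^{[s]}\|^2$ is identical in both problems and does not involve $(t,z)$, the optimal values coincide and the $x$-part of any optimizer of \eqref{eq:ADM-x-CVaR-solu} is an optimizer of the constrained projection; assembling this with $x^{[s]}=v^{[s]}$ for $s\le m_1$ yields the stated formula.

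I expect the only genuinely delicate point to be the ``only if'' direction --- specifically, that a single choice of $t_{m_1+1}$ must serve both constraints of \eqref{eq:ADM-x-CVaR-solu} at once, since $t_{m_1+1}$ appears in each. This is resolved precisely by taking $t_{m_1+1}$ to be a \emph{minimizer} of the Rockafellar--Uryasev objective: that choice makes the first constraint read $\rho_{\CVaR}(x^{[m_1+1]})\le C_0$ and, simultaneously, makes the $s=m_1+1$ summand of the second constraint equal to $\rho_{\CVaR}(x^{[m_1+1]})$ rather than anything larger, so the second constraint becomes $\frac{\ell}{m_2}\sum_s\rho_{\CVaR}(x^{[s]})\le C_0$. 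The remaining manipulations --- completing the square, the decoupling across blocks, and the epigraph reformulation of $(\cdot)_+$ --- are routine.
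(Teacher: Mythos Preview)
Your proposal is correct and follows essentially the same approach as the paper: complete the square to reduce \eqref{eq:ADM-x} to a projection $\min_x\|x-v\|^2$ subject to $\rho_{\BaselThree}(x)\le C_0$, use the block structure of $\rho_{\BaselThree}$ to set $x^{[s]}=v^{[s]}$ for $s\le m_1$, and invoke the Rockafellar--Uryasev representation \eqref{equ:cvar_opt_rep} to rewrite the remaining constrained projection as the QP \eqref{eq:ADM-x-CVaR-solu}. The paper's own proof is terse --- it simply asserts that ``using \eqref{equ:cvar_rep}, it is easy to show'' the equivalence --- whereas you spell out both directions of that equivalence and explicitly address the shared role of $t_{m_1+1}$ in the two constraints, which is a worthwhile clarification.
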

\begin{proof} See Appendix \ref{app:proof-lemma-opt-ADM-basel3-x}.
\end{proof}
%The interval $[\min_{1\le i \le n'} w_i -c, \max_{1\le i \le n'} w_i]$ is divided by the points $\{w_i\}$ into $n'$ subintervals. For $t$ in any given subinterval,
% $y_i(t)$ is a function of $t$ as \eqref{equ:xh-CVaR-yt}. Substituting
% them into $\phi(t,y(t))$ gives a polynomial of $t$. The optimal solution
% of \eqref{equ:xh-CVaR-t} is found if the corresponding
% minimum of this polynomial lies in this interval. Hence, at most $n'$ steps of
% such comparison is needed and the computational complexity is $O( (n')^2)$.
% Practically, algorithms based on golden section search and parabolic
% interpolation, for example, the function ``fminbnd'' in Matlab, can be used.
%\item[$y$-subproblem \eqref{eq:ADM-y}]
%\subsubsection{Solving the subproblem \eqref{eq:ADM-y}.}
%\be \label{eq:ADM-x-v2}
%\begin{aligned} x^{(j+1)}  =   \arg \min_{x}  \; &
%  \psi(x)=
%  \rho(x) + \frac{\sigma}{2} \| x - v^{(j)}\|^2, \end{aligned}
% \ee
% where $v^{(j)} = - \left(\tR u^{(j)} + \frac{1}{\sigma} \lambda^{(j)} \right)$.
 %The optimal solutions of the above problem will be explained later.

%whose solution $(t,y)$ satisfies
%\be \label{equ:xh-CVaR-y} y_i = \begin{cases} w_i -c, & \mbox{ if } y_i >
%  t, \\
%w_i -\gamma_i c, & \mbox{ if } y_i =
%  t, \\
%  w_i, & \mbox{ if } y_i<
%  t,
%\end{cases}\ee
%where $c =\frac{1}{\sigma_2 (1-\alpha)n'} $ and $\gamma_i \in [0,1]$.

\begin{lemma}\label{lemma:opt-ADM-y-variance}
The optimal solution to problem \eqref{eq:ADM-y} with $\rho=\rho_{\Variance}$ is
%\be\label{equ:adm-y-variance}
%y^{(j+1)}=-\sigma_2
%\left((\sigma_2+\frac{2}{n'})I -\frac{2}{(n')^2}\one\one^{\top}\right)^{-1} \left(\tilde Y u^{(j)} + \frac{1}{\sigma_2} \pi^{(j)} \right).
%\ee
\be\label{equ:adm-y-variance}
y^{(j+1)}=\sigma_2
\left((\sigma_2+\frac{2}{n'})I
-\frac{2}{(n')^2}\one\one^{\top}\right)^{-1}w^{(j)}= \left(\sigma_2+\frac{2}{n'}\right)^{-1} \left( \sigma_2 w^{(j)} + 2\frac{\one^{\top} w^{(j)}}{(n')^2}
 \one  \right),
\ee
where $w^{(j)} = - \left(\tilde Y u^{(j)} + \frac{1}{\sigma_2} \pi^{(j)}
\right)$.
\end{lemma}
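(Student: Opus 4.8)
The plan is to recognize that problem \eqref{eq:ADM-y} with $\rho=\rho_{\Variance}$ is the \emph{unconstrained} minimization over $y\in\R^{n'}$ of a strongly convex quadratic function, so that its unique minimizer is the unique stationary point, and then to solve the resulting linear system in closed form. First I would discard from \eqref{auglang} every term independent of $y$, which leaves
\[
g(y):=\rho_{\Variance}(y)+(\pi^{(j)})^{\top}\big(y+\tilde Y u^{(j)}\big)+\frac{\sigma_2}{2}\big\|y+\tilde Y u^{(j)}\big\|^2 ,
\]
where, since $y\in\R^{n'}$, the sample variance is $\rho_{\Variance}(y)=\frac1{n'}y^\top y-\frac1{(n')^2}y^\top\one\one^\top y$ by \eqref{eq:rho-var} with $n$ replaced by $n'$. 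I would then observe that $\rho_{\Variance}$ is a positive-semidefinite quadratic form (being a sample variance, it is nonnegative) and that $\frac{\sigma_2}{2}\|y+\tilde Y u^{(j)}\|^2$ is strongly convex because $\sigma_2>0$; hence $g$ is strongly convex, its Hessian $\nabla^2 g=(\sigma_2+\tfrac2{n'})I-\tfrac2{(n')^2}\one\one^\top$ is positive definite (its eigenvalues are $\sigma_2$ with eigenvector $\one$, and $\sigma_2+\tfrac2{n'}$ with multiplicity $n'-1$), in particular invertible, and $g$ has a unique minimizer characterized by $\nabla g(y)=0$.

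Next I would compute $\nabla g(y)=\tfrac2{n'}y-\tfrac2{(n')^2}\one\one^\top y+\pi^{(j)}+\sigma_2(y+\tilde Y u^{(j)})$ and set it to zero, which rearranges to
\[
\Big(\big(\sigma_2+\tfrac2{n'}\big)I-\tfrac2{(n')^2}\one\one^\top\Big)y=-\pi^{(j)}-\sigma_2\tilde Y u^{(j)}=\sigma_2 w^{(j)},
\]
using the definition $w^{(j)}=-\big(\tilde Y u^{(j)}+\tfrac1{\sigma_2}\pi^{(j)}\big)$. Left-multiplying by the inverse of the (positive-definite) coefficient matrix gives the first equality in \eqref{equ:adm-y-variance}. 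For the second equality I would avoid invoking the Sherman--Morrison formula explicitly and instead simply verify that $y^\star:=(\sigma_2+\tfrac2{n'})^{-1}\big(\sigma_2 w^{(j)}+\tfrac{2\,\one^\top w^{(j)}}{(n')^2}\one\big)$ solves the system above: substituting it into the left-hand side and using $\one\one^\top w^{(j)}=(\one^\top w^{(j)})\one$ and $\one\one^\top\one=n'\one$, the two multiples of $\one$ cancel and the coefficient of $w^{(j)}$ reduces to $\sigma_2$, leaving $\sigma_2 w^{(j)}$. By uniqueness of the minimizer, $y^{(j+1)}=y^\star$, which is exactly the second equality.

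There is no genuinely hard step here: the argument is elementary linear algebra with the rank-one matrix $\one\one^\top$. The points that need care are (i) using $n'$ rather than $n$ in the variance formula, since $y$ and $\tilde Y$ belong to the $n'$-dimensional problem data, and (ii) checking positive-definiteness of the Hessian, which is what guarantees simultaneously that the stationary point is the global minimizer and that the matrix being inverted is nonsingular; once $\sigma_2>0$ is used, both conclusions follow at once.
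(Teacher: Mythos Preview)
Your proof is correct and follows essentially the same route as the paper: both recognize \eqref{eq:ADM-y} as an unconstrained strongly convex quadratic in $y$, set the gradient to zero, and solve the resulting rank-one-perturbed linear system. The only minor difference is that the paper obtains the second equality in \eqref{equ:adm-y-variance} by invoking the Sherman--Morrison--Woodbury formula, whereas you verify it by direct substitution; your verification is equally valid and arguably more self-contained.
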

%\begin{lemma}\label{lemma:opt-ADM-y-var} %Consider \eqref{eq:ADM-y-v4} with $\rho(y(u))=\rho_{\VaR}(y(u))$.
% Suppose, without loss of generality, that $ w_1 \le w_2 \le \ldots \le w_{n'}$.
% The optimal solution $y$ of \eqref{equ:xh_1} is given by \begin{eqnarray}\label{eq:solution}
%y_{(i)} = \left\{\begin{array}{ccc} \gamma_{i^*}, &\mbox{if}&
% i^* \leq i\leq p';\\ w_i, && \mbox{otherwise},
%\end{array}\right.
%\end{eqnarray}
%where
%\be \label{equ:opt_x_order-v2}
%i^*:= \max \{ i \mid i\le p', \; w_{i-1} < \gamma_i  \}  \mbox{ and } \gamma_i
%= \frac{ \sigma_2 \sum_{j=i}^{p'} w_j - 1}{ \sigma_2 (p'-i+1) }.
%\ee
%\end{lemma}
\begin{proof} See Appendix \ref{sec:app_lemma_opt-ADM-y-variance}.
\end{proof}

\begin{lemma}\label{lemma:opt-ADM-y-var}
Consider problem \eqref{eq:ADM-y} with $\rho=\rho_{\VaR}$. Define
$w: = - (\tilde Y u^{(j)} + \frac{1}{\sigma_2} \pi^{(j)})\in\mathbb{R}^{n'}$ and $p':=\lceil \alpha n'\rceil$. Let $(k_1, k_2, \ldots, k_{n'})$ be the permutation of $(1, 2, \ldots, n')$ such that $ w_{k_1} \le w_{k_2} \le \cdots \le w_{k_{n'}}$.
Then the optimal solution $y$ to \eqref{eq:ADM-y} with $\rho=\rho_{\VaR}$ is given by \begin{eqnarray}\label{eq:solution}
y_{k_i} = \begin{cases} \gamma_{i^*}, &\mbox{if}\
 i^* \leq i\leq p',\\ w_{k_i}, & \mbox{otherwise},
\end{cases}\ i=1, 2, \ldots, n',
\end{eqnarray}
%\be \label{equ:opt_x_order-v2}
%i^*:= \max \{ i \mid i\le p', \; w_{k_{i-1}} < \gamma_i  \},\ w_{k_0}:=-\infty,  \mbox{ and } \gamma_i
%:= \frac{ \sigma_2 \sum_{j=i}^{p'} w_{k_j} - 1}{ \sigma_2 (p'-i+1) }.
%\ee
where
\begin{equation} \label{equ:opt_x_order-v2}
i^*:= \max \{ i \mid 1\leq i\le p', \; w_{k_{i-1}} < \gamma_i\leq w_{k_{i}}  \},\ w_{k_0}:=-\infty,  \mbox{ and } \gamma_i
:= \frac{ \sigma_2 \sum_{j=i}^{p'} w_{k_j} - 1}{ \sigma_2 (p'-i+1) }.
\end{equation}
\end{lemma}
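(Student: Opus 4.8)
The plan is to strip away the penalty and multiplier terms by completing the square, to eliminate the order statistic $\rho_{\VaR}(y)=y_{(p')}$ by a rearrangement (transposition) argument, and then to solve an elementary one-dimensional convex program whose solution produces the stated formula. For \emph{Step 1 (reduction)}: in $\Lc(x^{(j+1)},y,u^{(j)},\lambda^{(j)},\pi^{(j)})$ the only $y$-dependent terms are $\rho_{\VaR}(y)+\pi^{(j)\top}y+\tfrac{\sigma_2}{2}\|y+\tilde Y u^{(j)}\|^2$, and completing the square turns the last two into $\tfrac{\sigma_2}{2}\|y-w\|^2$ plus a constant, with $w=-(\tilde Y u^{(j)}+\tfrac1{\sigma_2}\pi^{(j)})$ as in the statement. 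Hence \eqref{eq:ADM-y} with $\rho=\rho_{\VaR}$ is equivalent to minimizing $\psi(y):=y_{(p')}+\tfrac{\sigma_2}{2}\|y-w\|^2$ over $y\in\R^{n'}$.

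\emph{Step 2 (sort, then decouple).} I would first show that $\psi$ has a minimizer whose entries are ordered like those of $w$, i.e.\ $y_{k_1}\le\cdots\le y_{k_{n'}}$: if $a<b$ but $y_{k_a}>y_{k_b}$ in some minimizer, interchanging the two values $y_{k_a},y_{k_b}$ leaves the multiset $\{y_i\}$ (hence $y_{(p')}$) unchanged while changing $\|y-w\|^2$ by $2(y_{k_a}-y_{k_b})(w_{k_a}-w_{k_b})\le 0$, so finitely many transpositions give a consistently sorted minimizer. For such $y$ one has $y_{(p')}=y_{k_{p'}}$, so it suffices to minimize $y_{k_{p'}}+\tfrac{\sigma_2}{2}\|y-w\|^2$ subject to $y_{k_1}\le\cdots\le y_{k_{n'}}$. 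Fixing $t:=y_{k_{p'}}$ decouples the remaining coordinates into the blocks $y_{k_1}\le\cdots\le y_{k_{p'-1}}\le t$ and $t\le y_{k_{p'+1}}\le\cdots\le y_{k_{n'}}$; since $w$ is already sorted, the coordinatewise closest feasible choices $y_{k_i}=\min(w_{k_i},t)$ $(i<p')$ and $y_{k_i}=\max(w_{k_i},t)$ $(i>p')$ are automatically monotone and hence block-optimal. Substituting these back, the subproblem reduces to minimizing over $t\in\R$ the univariate function
\begin{align*}
\phi(t)=t+\frac{\sigma_2}{2}\bigg(\sum_{i=1}^{p'-1}\big((w_{k_i}-t)_+\big)^2 & {}+ (t-w_{k_{p'}})^2 \\
& {}+ \sum_{i=p'+1}^{n'}\big((t-w_{k_i})_+\big)^2\bigg).
\end{align*}

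\emph{Step 3 (scalar problem and reassembly).} The function $\phi$ is convex, coercive and $C^1$, and $\phi'$ is continuous, nondecreasing, and piecewise affine with breakpoints $w_{k_1},\dots,w_{k_{n'}}$. A short computation gives $\phi'(w_{k_{p'}})=1>0$; since also $\phi'(t)\to-\infty$ as $t\to-\infty$, $\phi'$ has a unique zero $t^\star$, and it lies in $(-\infty,w_{k_{p'}})$. By monotonicity of $\phi'$ there is exactly one index $i^*\in\{1,\dots,p'\}$ with $\phi'(w_{k_{i^*-1}})<0\le\phi'(w_{k_{i^*}})$ (using $w_{k_0}:=-\infty$), and $t^\star$ lies in $(w_{k_{i^*-1}},w_{k_{i^*}}]$; on that interval $\phi'$ is the affine map $t\mapsto 1+\sigma_2\big((p'-i^*+1)t-\sum_{j=i^*}^{p'}w_{k_j}\big)$, whose zero is exactly $\gamma_{i^*}$, and evaluating the sign test at the endpoints shows it is algebraically the same as $w_{k_{i^*-1}}<\gamma_{i^*}\le w_{k_{i^*}}$, the defining property of $i^*$ in \eqref{equ:opt_x_order-v2}. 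Therefore $t^\star=\gamma_{i^*}$, and since $\gamma_{i^*}\le w_{k_{i^*}}\le w_{k_i}$ for all $i\ge i^*$, the Step-2 formulas collapse to $y_{k_i}=w_{k_i}$ for $i<i^*$ and $i>p'$, and $y_{k_i}=\gamma_{i^*}$ for $i^*\le i\le p'$, which is exactly \eqref{eq:solution}.

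The main obstacle is Step 2: $\psi$ is nonconvex and combinatorial solely because of the order statistic $y_{(p')}$, and the transposition argument is what replaces it by a convex, monotone-constrained, and ultimately one-dimensional problem. Everything after that is routine bookkeeping: checking $\phi'(w_{k_{p'}})=1$, verifying that exactly one index passes the breakpoint sign test (so $i^*$ is well defined and the ``$\max$'' in its definition is merely a tie-breaker, e.g.\ when $w$ has repeated entries), and noting $t^\star<w_{k_{p'}}$, which is what keeps the upper-block coordinates equal to $w_{k_i}$.
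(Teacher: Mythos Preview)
Your proof is correct and shares the paper's reduction (completing the square, then the transposition argument to sort $y$ consistently with $w$), but diverges in how the resulting ordered problem is solved. The paper first argues directly that any sorted optimizer must satisfy $y_{k_i}\le w_{k_i}$ for all $i$ and $y_{k_j}=w_{k_j}$ for $j>p'$, which collapses the problem to the convex QP
\[
\min_{y}\; y_{p'}+\tfrac{\sigma_2}{2}\sum_{i=1}^{p'}(y_i-w_i)^2 \quad\text{s.t.}\quad y_i\le y_{p'},\; i=1,\dots,p'-1,
\]
and then reads off the solution from its KKT conditions. You instead fix $t=y_{k_{p'}}$, optimize the two monotone blocks coordinatewise (retaining the possibility $t>w_{k_{p'}}$), and solve the resulting univariate $C^1$ convex function $\phi$ by locating the unique zero of its piecewise-affine derivative; the inequality $\phi'(w_{k_{p'}})=1>0$ then does the work that the paper's ``$y_{k_j}=w_{k_j}$ for $j>p'$'' argument does. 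Your route is slightly more elementary (no Lagrange multipliers) and makes the one-dimensional structure explicit, while the paper's route yields a cleaner intermediate QP and a shorter endgame via KKT; either way the same index condition $w_{k_{i^*-1}}<\gamma_{i^*}\le w_{k_{i^*}}$ falls out.
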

%\begin{lemma}\label{lemma:opt-ADM-y-var} %Consider \eqref{eq:ADM-y-v4} with $\rho(y(u))=\rho_{\VaR}(y(u))$.
% Suppose, without loss of generality, that $ w_1 \le w_2 \le \ldots \le w_{n'}$.
% The optimal solution $y$ of \eqref{equ:xh_1} is given by \begin{eqnarray}\label{eq:solution}
%y_{(i)} = \left\{\begin{array}{ccc} \gamma_{i^*}, &\mbox{if}&
% i^* \leq i\leq p';\\ w_i, && \mbox{otherwise},
%\end{array}\right.
%\end{eqnarray}
%where
%\be \label{equ:opt_x_order-v2}
%i^*:= \max \{ i \mid i\le p', \; w_{i-1} < \gamma_i  \}  \mbox{ and } \gamma_i
%= \frac{ \sigma_2 \sum_{j=i}^{p'} w_j - 1}{ \sigma_2 (p'-i+1) }.
%\ee
%\end{lemma}
\begin{proof} See Appendix \ref{sec:app_lemma_opt-ADM-y-var}.
\end{proof}

\begin{lemma}\label{lemma:opt-ADM-y-cvar}
Consider problem \eqref{eq:ADM-y} with $\rho=\rho_{\CVaR}$. Let $w$ be defined as in Lemma \ref{lemma:opt-ADM-y-var}. Define
\be\label{equ:def_phi}
\phi(t,y):=
    t + \frac{1}{(1-\alpha)n'}\sum_{i=1}^{n'} (y_i-t)_+ + \frac{\sigma_2}{2} \| y - w\|^2,
\ee
where $x_+:=\max(x, 0)$. Let $t^*$ be the optimal solution to
  \be \label{equ:xh-CVaR-t} \min_t \; \phi(t, y(t)), \; \text{ s.t. } \;
  \min_{1\le i \le n'} w_i -c\le t \le
  \max_{1\le i \le n'} w_i,\ee
where $c= \frac{1}{\sigma_2 (1-\alpha)n'}$; $y(t)=(y_1(t), y_2(t), \ldots, y_{n'}(t))^{\top}$; $y_i(t), i=1,\ldots,n'$ are defined by
\be \label{equ:xh-CVaR-yt} y_i(t) = \begin{cases} w_i -c, & \mbox{ if }  w_i -c> t, \\
t, & \mbox{ if }  w_i > t \ge w_i - c, \\
  w_i, & \mbox{ otherwise. }
\end{cases}\ee
Then $y(t^*)$ is the optimal solution to \eqref{eq:ADM-y} with $\rho=\rho_{\CVaR}$.
\end{lemma}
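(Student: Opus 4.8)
The plan is to use the optimization representation \eqref{equ:cvar_opt_rep} of sample CVaR to rewrite problem \eqref{eq:ADM-y} with $\rho=\rho_{\CVaR}$ as a joint minimization over $(t,y)\in\R\times\R^{n'}$ of the function $\phi(t,y)$ in \eqref{equ:def_phi}, after completing the square in the quadratic penalty term of $\Lc$ so that the $y$-dependent part becomes $\frac{\sigma_2}{2}\|y-w\|^2$ plus the CVaR representation. Since $\phi$ is jointly convex (a sum of a convex piecewise-linear function and a strictly convex quadratic), I would minimize first over $y$ for fixed $t$, then over $t$. For fixed $t$ the minimization over $y$ is separable across coordinates: each $y_i$ minimizes $\frac{1}{(1-\alpha)n'}(y_i-t)_+ + \frac{\sigma_2}{2}(y_i-w_i)^2$. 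This one-dimensional problem has a strictly convex objective that is smooth except with a kink at $y_i=t$; by examining the derivative on each side of the kink one finds the unique minimizer is the soft-thresholding-type map in \eqref{equ:xh-CVaR-yt}: if the unconstrained minimizer $w_i-c$ of the ``active'' branch already lies above $t$, take $y_i=w_i-c$; if $w_i$ (the minimizer of the ``inactive'' branch) already lies below $t$, take $y_i=w_i$; otherwise the kink is optimal and $y_i=t$. Here $c=\frac{1}{\sigma_2(1-\alpha)n'}$ is exactly the ratio of the linear slope to the quadratic curvature.

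Next I would substitute $y=y(t)$ back and argue that the outer problem $\min_t \phi(t,y(t))$ can be restricted to the compact interval $[\min_i w_i - c,\ \max_i w_i]$ without loss of generality. The upper bound follows because for $t>\max_i w_i$ every coordinate is in the regime $y_i(t)=w_i$, so $\phi(t,y(t))=t+\frac{1}{(1-\alpha)n'}\sum_i(w_i-t)_+$, which is nondecreasing in $t$ there (all the positive parts vanish, leaving just $t$); the lower bound follows symmetrically because for $t<\min_i w_i-c$ every coordinate is in the regime $y_i(t)=w_i-c$, all positive parts equal $w_i-c-t$, and $\phi(t,y(t)) = t + \frac{1}{(1-\alpha)n'}\sum_i(w_i-c-t) + \frac{\sigma_2}{2}n' c^2$, whose derivative in $t$ is $1-\frac{1}{(1-\alpha)}\cdot\frac{1}{1}$; here I would use that $p'=\lceil\alpha n'\rceil \le n'$ forces $\frac{n'}{(1-\alpha)n'}=\frac{1}{1-\alpha}>1$, so this derivative is negative and $\phi(t,y(t))$ is strictly decreasing for $t$ below the lower endpoint. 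Hence the global minimizer $t^*$ of the envelope lies in the stated interval, and since $\phi(t,y(t))$ is the partial minimization of a jointly convex function it is itself convex in $t$, so $(t^*,y(t^*))$ is a global minimizer of $\phi$ and therefore $y(t^*)$ solves \eqref{eq:ADM-y}.

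The main obstacle I anticipate is the careful case analysis in the inner one-dimensional minimization and, relatedly, verifying that the piecewise definition \eqref{equ:xh-CVaR-yt} is consistent (the three cases are mutually exclusive and exhaustive, which uses $c>0$ so that $w_i-c<w_i$) and that it genuinely gives the global minimizer including at the kink $y_i=t$ --- one must check the left derivative is $\le 0$ and the right derivative is $\ge 0$ there, i.e., $-\sigma_2 c \le 0 \le \frac{1}{(1-\alpha)n'}$, which holds trivially, but the bookkeeping of which branch is active must be matched to the sign conditions $w_i-c>t$ versus $w_i\le t$. A secondary subtlety is justifying the interchange of minimization order and the reduction to a compact interval rigorously; this is routine convex-analysis but needs the strict monotonicity arguments above rather than mere continuity. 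Everything else --- the completion of the square, joint convexity, and the final assembly --- is mechanical.
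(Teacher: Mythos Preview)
Your proposal is correct and follows the same approach as the paper: rewrite \eqref{eq:ADM-y} via the CVaR representation \eqref{equ:cvar_opt_rep} as $\min_{t,y}\phi(t,y)$, minimize in $y$ coordinatewise for fixed $t$ to obtain $y(t)$, then minimize in $t$. The paper's proof is in fact considerably more terse---it states only these two steps without justifying the soft-thresholding formula or the interval restriction on $t$---so your added detail (the subdifferential check at the kink and the monotonicity arguments establishing the bounds on $t$) is a genuine improvement in rigor; the aside about $p'$ is unnecessary since $\tfrac{1}{1-\alpha}>1$ follows directly from $\alpha\in(0,1)$.
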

\begin{proof} See Appendix \ref{app:proof-lemma-opt-ADM-y-cvar}.
\end{proof}

\subsection{Convergence Analysis} \label{sec:convergence}
If $\rho$ is $\rho_{\Variance}$ or $\rho_{\CVaR}$ and $\rho_{\BaselG}$ is $\rho_{\BaselThree}$, problem \eqref{equ:opt_int_ext} is convex and the
 ADM method is ensured to converge to the global solutions theoretically \citep{HongLuo2012}.
On the other hand, if $\rho$ is $\rho_{\VaR}$ or if $\rho_{\BaselG}$ is $\rho_{\BaselTwoDotFive}$,
%ince VaR and Basel 2.5 risk measure are non-convex functions,
the
convergence of the ADM algorithm to a global
optimal solution is not guaranteed due to the non-convexity of $\rho_{\VaR}$ or $\rho_{\BaselTwoDotFive}$; however, we will show in the following that
the limit point of the sequence generated by the ADM algorithm satisfies the first-order optimality conditions of problem \eqref{equ:opt_int_ext} under some mild conditions. In addition, numerical experiments suggest
that the ADM algorithm seems to
converge from any starting point.

%Although far from satisfactory, this result nevertheless provides some assurance on the reliability of the algorithm.
We first recall the definition of locally Lipschitz functions.
% and the Clarke's
%generalized gradient.
\begin{definition}\label{def:1} A function $f(x):\mathrm{dom}f\subseteq\R^n \to \R$ is Lipschitz near a point
  $x_0 \in \mathrm{int}(\mathrm{dom}f)$ if there exist $K\ge 0$ and $\delta>0$ such that $|f(x)-f(x')|\le
  K \|x-x'\|$ for all $x, x' \in B_\delta(x_0)$, where $B_\delta(x_0):=\{x\in\R^{n}:\,\|x-x_0\|<\delta\}\subseteq \mathrm{dom}f$.
  A function is locally Lipschitz if it is Lipschitz
  near every point in $\R^{n}$. A function is globally Lipschitz on $\mathbb{R}^n$ if there exists a constant $K\ge 0$ such that $|f(x)-f(x')|\le  K \|x-x'\|$ for all $x, x' \in \mathbb{R}^n$.
\end{definition}

We have the following result on the Lipschitz property of risk measures.

%, such as VaR and Basel
%2.5 risk measure, are locally Lipschitz. %\begin{definition}\label{def:3}
\begin{proposition}\label{lm:1}
The functions $\rho_{\VaR}(x)$, $\rho_{\CVaR}(x)$,  $\rho_{\BaselTwoDotFive}(x)$, and $\rho_{\BaselThree}(x)$ defined in Section \ref{subsec:risk_stat_def} are all globally Lipschitz on $\mathbb{R}^n$.
\end{proposition}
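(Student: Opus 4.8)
The plan is to prove global Lipschitz continuity separately for each of the four sample risk measures, exploiting the fact that each is built out of simple operations—selecting order statistics, taking maxima, averaging, and composing with the positive-part function—all of which preserve global Lipschitz continuity (with controllable constants). The key elementary facts I would invoke are: (i) each coordinate selection map $x \mapsto x_i$ is $1$-Lipschitz with respect to $\|\cdot\|_\infty$ (hence with respect to $\|\cdot\|_2$); (ii) the sorting map is $1$-Lipschitz in each ordered coordinate, i.e. $|x_{(j)} - x'_{(j)}| \le \|x - x'\|_\infty \le \|x - x'\|_2$ for every $j$ (this is standard, e.g. a consequence of Weyl-type inequalities for order statistics, or can be shown directly); (iii) $y \mapsto y_+ = \max(y,0)$ is $1$-Lipschitz; (iv) finite maxima and nonnegative linear combinations of Lipschitz functions are Lipschitz, with the obvious constants.

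First I would handle $\rho_{\VaR}(x) = x_{(p)}$: by fact (ii) this is immediately $1$-Lipschitz on $\R^n$. Next, for $\rho_{\CVaR}(x)$ I would use the representation \eqref{equ:e_cvar} as a fixed nonnegative linear combination of the order statistics $x_{(p)}, x_{(p+1)}, \ldots, x_{(n)}$ with coefficients $\frac{p-\alpha n}{(1-\alpha)n}$ and $\frac{1}{(1-\alpha)n}$; by facts (ii) and (iv) this is globally Lipschitz with constant bounded by the sum of these coefficients, which equals $\frac{p - \alpha n}{(1-\alpha)n} + \frac{n-p}{(1-\alpha)n} = 1$ (alternatively one can cite the well-known fact that CVaR is $1$-Lipschitz, or use the min-representation \eqref{equ:cvar_opt_rep} together with the fact that a pointwise infimum of uniformly Lipschitz functions of $x$ is Lipschitz). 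For $\rho_{\BaselThree}$ in \eqref{equ:basel_3_cvar}, I would note it is the maximum of two quantities: one sample-CVaR term $\rho_{\CVaR}(x^{[m_1+1]}(u))$ and a nonnegative average $\frac{\ell}{m_2}\sum_{s=m_1+1}^m \rho_{\CVaR}(x^{[s]}(u))$ of sample-CVaRs on disjoint blocks of coordinates. Since each block map $x \mapsto x^{[s]}$ is $1$-Lipschitz and sample-CVaR on each block is Lipschitz by the previous step, both arguments of the max are globally Lipschitz, hence so is their maximum by fact (iv). The constant is at most $\max\{1, \frac{\ell}{m_2} \cdot m_2 \cdot 1\} = \max\{1,\ell\} = \ell$ (as $\ell \ge 3$), using that the blocks are disjoint so the per-block perturbations do not accumulate beyond $\|x-x'\|$ in the $\ell_2$ sense—more carefully, $\sum_s \|x^{[s]} - x'^{[s]}\| \le \sqrt{m_2}\,\|x-x'\|$, which still gives a finite constant.

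Finally, for $\rho_{\BaselTwoDotFive}$ in \eqref{equ:basel_3_sta}, the same pattern applies: it is a sum of two maxima, each max taken between a single per-block order statistic $x^{[s]}(u)_{(p_s)}$ and a nonnegative average $\frac{k}{m_1}\sum$ (resp. $\frac{\ell}{m_2}\sum$) of per-block order statistics; per-block order statistics are $1$-Lipschitz by facts (i)–(ii), finite maxima and sums preserve the property by fact (iv), so $\rho_{\BaselTwoDotFive}$ is globally Lipschitz with an explicit constant (bounded by something like $2\sqrt{m_1}\,k + 2\sqrt{m_2}\,\ell$, though any finite bound suffices). I do not expect any serious obstacle here; the only point requiring a little care is the precise Lipschitz estimate for the sorting/order-statistic map (fact (ii)) when moving between blocks, and bookkeeping the constants when disjoint-block perturbations are combined under a single norm. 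I would state fact (ii) as a short preliminary lemma (or cite it) and then the four cases are each a two-line verification.
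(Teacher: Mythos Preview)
Your proposal is correct and follows essentially the same architecture as the paper: first establish that each order statistic $x\mapsto x_{(p)}$ is globally Lipschitz, then conclude for the remaining risk measures by closure under finite nonnegative linear combinations and pointwise maxima. The only noteworthy difference is that the paper supplies a short direct argument for your fact~(ii) rather than citing it: given $y,z$ with $y_{(p)}\le z_{(p)}$, a pigeonhole count shows the index set $\{i:y_i\le y_{(p)}\}\cap\{i:z_i\ge z_{(p)}\}$ is nonempty, and any index $i$ in it gives $|y_{(p)}-z_{(p)}|\le |y_i-z_i|\le\|y-z\|$. The paper does not track explicit Lipschitz constants as you do; it simply invokes $|\max(a,b)-\max(c,d)|\le|a-c|+|b-d|$ and stops once finiteness of the constant is clear, which is all that is needed downstream.
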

\begin{proof} See Appendix \ref{app:proof_lm1}.
\end{proof}

%We have the following first-order optimality (KKT) conditions of \eqref{equ:opt_int_ext}.
%\begin{proposition} Suppose that $\rho(x)$ is locally Lipschitz. %$x(u)=-\tR u$.
%If $u$ is a local minimizer of \eqref{equ:opt_int_ext}, then there exits $\eta \ge 0$, such that
%\bea\label{eq:kkt-30}
%0 \in  -\tY^\top \bar \partial \rho(y(u))- \eta\tR^\top \bar \partial \rho_{\BaselG}(x(u)) + \NU(u), \\
%\label{eq:kkt-31}\eta  (\rho_{\BaselG}(x(u)) - C_0) = 0  % was + and + before
%\eea
%where $\NU(u)={\rm
%cl}\left\{\cup_{t\geq0}t\,\partial\dist_{\Ur}(u)\right\}$ is the
%normal cone to $\Ur$ at $u$.% (see the detailed definition in subsection 2.4 in \citep{Clarke1990}).
%\end{proposition}

%Hence, Proposition \ref{ps:1} and Lemma \ref{lm:1} imply that the Clarke's generalized gradients of
%$\rho_{\VaR}(x)$ and $\rho_{\BaselTwoDotFive}(x)$
%exist over $\R^{n}$ and  their forms are described in Appendix \ref{CGG}.

We have the following theorem regarding the optimality of the output of the ADM algorithm.

\begin{theorem} \label{thm:cvg-feasi} %Suppose that $\sigma_1 \tR^\top \tR+\sigma_2 \tY^\top \tY\succ 0$.
Suppose that $\rho(x)$ is locally Lipschitz and $\rho_{\BaselG}\in\{\rho_{\BaselTwoDotFive}, \rho_{\BaselThree}\}$, then the following statements hold.

(i) (KKT conditions) If $u$ is a local minimizer of \eqref{equ:opt_int_ext}, then there exits $\eta \ge 0$, such that
\bea\label{eq:kkt-30}
0 \in  -\tY^\top \bar \partial \rho(y(u))- \eta\tR^\top \bar \partial \rho_{\BaselG}(x(u)) + \NU(u), \\
\label{eq:kkt-31}\eta  (\rho_{\BaselG}(x(u)) - C_0) = 0,  % was + and + before
\eea
where $\NU(u)$ is the
normal cone to $\Ur$ at $u$ and $\bar\partial f(\cdot)$ denotes the Clarke's generalized gradient of $f(\cdot)$.% (see the detailed definition in subsection 2.4 in \citep{Clarke1990}).

(ii) Let
  $\{(x^{(j)}, y^{(j)}, u^{(j)},\lambda^{(j)},\pi^{(j)})\}$ be a sequence generated by
  scheme \eqref{eq:ADM-x}-\eqref{eq:ADM-pi} and assume that
  $\sum_{j=1}^{\infty}   \|\lambda^{(j+1)} - \lambda^{(j)}\|^2 + \|\pi^{(j+1)} -
  \pi^{(j)}\|^2<\infty$ and $\{(\lambda^{(j)}, \pi^{(j)})\}$ is bounded.
  Then, the sequence $\{u^{(j)}\}$ is bounded and any limit point $\bar u$ of $\{u^{(j)}\}$ satisfies the first-order
  optimality  conditions \eqref{eq:kkt-30}-\eqref{eq:kkt-31}. \end{theorem}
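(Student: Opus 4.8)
The plan is to obtain (i) from the nonsmooth Lagrange multiplier rule, and (ii) by writing down the first-order conditions of the three ADM subproblems and passing them to the limit along a convergent subsequence of $\{u^{(j)}\}$; the whole difficulty of (ii) lies in legitimizing that passage.

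For (i): both $u\mapsto\rho(y(u))$ and $u\mapsto\rho_{\BaselG}(x(u))-C_0$ are locally Lipschitz, being compositions of a locally Lipschitz function with a linear map (and $\rho_{\BaselG}$ is even globally Lipschitz by Proposition \ref{lm:1}), while $\Ur$ is a bounded polyhedron. Applying Clarke's nonsmooth KKT theorem to $\min\{\rho(y(u)):\rho_{\BaselG}(x(u))\le C_0,\ u\in\Ur\}$ at a local minimizer $u$ (under a constraint qualification) produces $\eta\ge0$ with $0\in\bar\partial(\rho\circ y)(u)+\eta\,\bar\partial(\rho_{\BaselG}\circ x)(u)+\NU(u)$ and $\eta(\rho_{\BaselG}(x(u))-C_0)=0$; the Clarke chain rule $\bar\partial(\rho\circ(-\tY\,\cdot))(u)\subseteq-\tY^\top\bar\partial\rho(y(u))$, and the analogue for $\rho_{\BaselG}\circ(-\tR\,\cdot)$, then turns this into \eqref{eq:kkt-30}--\eqref{eq:kkt-31}.

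For (ii) I would argue in three steps. \emph{Boundedness:} $\Ur\subseteq\{u\ge0,\ \one^\top u=1\}$ is compact, so $\{u^{(j)}\}$ is bounded; rearranging \eqref{eq:ADM-lmb}--\eqref{eq:ADM-pi} gives $x^{(j+1)}+\tR u^{(j+1)}=\tfrac{1}{\beta_1\sigma_1}(\lambda^{(j+1)}-\lambda^{(j)})$ and $y^{(j+1)}+\tY u^{(j+1)}=\tfrac{1}{\beta_2\sigma_2}(\pi^{(j+1)}-\pi^{(j)})$, both $\to0$ by the summability hypothesis, so $\{x^{(j)}\},\{y^{(j)}\}$ are bounded, the full sequence is bounded, and $\Lc^{(j)}:=\Lc(x^{(j)},y^{(j)},u^{(j)},\lambda^{(j)},\pi^{(j)})$ is bounded. \emph{Coalescence:} I would show $\|u^{(j+1)}-u^{(j)}\|\to0$ by a sufficient-decrease estimate in which each block update does not increase $\Lc$ (for the $x$-block this uses only that $x^{(j)}$ is feasible for \eqref{eq:ADM-x}, not convexity of $\{\rho_{\BaselG}\le C_0\}$), the $u$-block — being $\lambda_{\min}(Q)$-strongly convex over $\Ur$, with $Q:=\sigma_1\tR^\top\tR+\sigma_2\tY^\top\tY$ — gives an extra decrease of at least $\tfrac{\lambda_{\min}(Q)}{2}\|u^{(j+1)}-u^{(j)}\|^2$, and the multiplier update raises $\Lc$ by exactly $\tfrac{1}{\beta_1\sigma_1}\|\lambda^{(j+1)}-\lambda^{(j)}\|^2+\tfrac{1}{\beta_2\sigma_2}\|\pi^{(j+1)}-\pi^{(j)}\|^2$; summing over $j$ and using boundedness of $\{\Lc^{(j)}\}$ together with the summability hypothesis yields $\sum_j\|u^{(j+1)}-u^{(j)}\|^2<\infty$, provided $Q\succ0$ (true as soon as $\tR$ or $\tY$ has full column rank, essentially automatic here since the pooled sample size far exceeds $d$). \emph{Passing to the limit:} Clarke stationarity of the subproblems gives $\eta_j\ge0$, $\xi_j\in\bar\partial\rho_{\BaselG}(x^{(j+1)})$ with $\lambda^{(j)}+\sigma_1(x^{(j+1)}+\tR u^{(j)})+\eta_j\xi_j=0$ and $\eta_j(\rho_{\BaselG}(x^{(j+1)})-C_0)=0$; $\zeta_j\in\bar\partial\rho(y^{(j+1)})$ with $\zeta_j+\pi^{(j)}+\sigma_2(y^{(j+1)}+\tY u^{(j)})=0$; and $0\in\tR^\top[\lambda^{(j)}+\sigma_1(x^{(j+1)}+\tR u^{(j+1)})]+\tY^\top[\pi^{(j)}+\sigma_2(y^{(j+1)}+\tY u^{(j+1)})]+\NU(u^{(j+1)})$. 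Since the primal residuals and $u^{(j)}-u^{(j+1)}$ tend to $0$, we have $\sigma_1(x^{(j+1)}+\tR u^{(j)})=o(1)$ and $\sigma_2(y^{(j+1)}+\tY u^{(j)})=o(1)$, so multiplying the first two relations by $\tR^\top$ and $\tY^\top$ gives $\tR^\top\lambda^{(j)}=-\eta_j\tR^\top\xi_j+o(1)$ and $\tY^\top\pi^{(j)}=-\tY^\top\zeta_j+o(1)$, and the third relation becomes $0\in-\eta_j\tR^\top\xi_j-\tY^\top\zeta_j+\NU(u^{(j+1)})+o(1)$.

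The finish: fix a limit point $\bar u$ of $\{u^{(j)}\}$ and pass to a subsequence along which $u^{(j)}\to\bar u$ (hence, by coalescence, $u^{(j+1)}\to\bar u$, and by the residual identities $x^{(j+1)}\to x(\bar u)$, $y^{(j+1)}\to y(\bar u)$), $\lambda^{(j)}\to\bar\lambda$, $\pi^{(j)}\to\bar\pi$, $\xi_j\to\bar\xi$, $\zeta_j\to\bar\zeta$, $\eta_j\to\bar\eta\ge0$; these last three sequences are bounded — $\{\xi_j\}$ by the global Lipschitzness of $\rho_{\BaselG}$ (Proposition \ref{lm:1}), $\{\zeta_j\}$ by Lipschitzness of $\rho$ on a compact set, and $\{\eta_j\}$ because $0\notin\bar\partial\rho_{\BaselG}$ uniformly on bounded sets. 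By outer semicontinuity (closed graph) of Clarke's generalized gradient, $\bar\xi\in\bar\partial\rho_{\BaselG}(x(\bar u))$ and $\bar\zeta\in\bar\partial\rho(y(\bar u))$, and by outer semicontinuity of $\NU(\cdot)$ the limiting inclusion is $0\in-\bar\eta\tR^\top\bar\xi-\tY^\top\bar\zeta+\NU(\bar u)\subseteq-\bar\eta\tR^\top\bar\partial\rho_{\BaselG}(x(\bar u))-\tY^\top\bar\partial\rho(y(\bar u))+\NU(\bar u)$, which is \eqref{eq:kkt-30} with $\eta=\bar\eta$, while the limit of $\eta_j(\rho_{\BaselG}(x^{(j+1)})-C_0)=0$ is \eqref{eq:kkt-31}. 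The main obstacle is the coalescence $\|u^{(j+1)}-u^{(j)}\|\to0$: since the feasible set $\{\rho_{\BaselG}\le C_0\}$ of the $x$-subproblem is nonconvex when $\rho_{\BaselG}=\rho_{\BaselTwoDotFive}$ (and $\rho$ may be nonconvex as well), one cannot extract a strong-convexity decrease from every block, so the workaround is to squeeze all the needed decrease out of the $u$-block — which is precisely why the hypothesis must be supplemented by $Q\succ0$ — keeping only monotonicity of $\Lc$ from the $x$- and $y$-blocks and absorbing the multiplier-update increase through the summability hypothesis. A secondary point needing care is the uniform bound on the subproblem multipliers $\eta_j$, which I would obtain from $0\notin\bar\partial\rho_{\BaselG}(x)$ on bounded sets (e.g. the Clarke directional derivative of $\rho_{\BaselG}$ in the direction $\one$ is bounded below by a positive constant), so that $\eta_j=\|\lambda^{(j)}+\sigma_1(x^{(j+1)}+\tR u^{(j)})\|/\|\xi_j\|$ stays bounded.
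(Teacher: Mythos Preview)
Your proposal is correct and follows essentially the same route as the paper's proof: part (i) via Clarke's multiplier rule and chain rule; part (ii) via boundedness from compactness of $\Ur$ and the multiplier-update identities, the coalescence $\|u^{(j+1)}-u^{(j)}\|\to0$ from a sufficient-decrease telescoping argument exploiting strong convexity of $\Lc$ in $u$ (the paper writes this as \eqref{eq:convex-u1}--\eqref{eq:red-L}), and passage to the limit using the closed-graph property of $\bar\partial$ and $\NU$ together with the uniform bound $0\notin\bar\partial\rho_{\BaselG}$ (the paper isolates this last fact as a separate Proposition). You are also right to flag that the strong-convexity step silently requires $Q=\sigma_1\tR^\top\tR+\sigma_2\tY^\top\tY\succ0$; the paper simply asserts ``$\Lc$ is strongly convex with respect to $u$'' without stating this hypothesis.
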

\begin{proof} See Appendix \ref{app:proof-theorem-thm:cvg-feasi}.
\end{proof}

By Proposition \ref{lm:1}, Theorem \ref{thm:cvg-feasi} applies to the ADM algorithm with $\rho$ being $\rho_{\Variance}$, $\rho_{\VaR}$, or $\rho_{\CVaR}$.

%The proof of Theorem \ref{thm:cvg-feasi} is given in the appendix.

\subsection{The ADM Algorithm for Solving the Mean-$\rho$ Problems \eqref{equ:opt} and \eqref{equ:optc}} \label{sec:ADM-other-models}
%Let $x=-\tR u$.
The ADM algorithm for solving the mean-$\rho$ problems including the mean-VaR and mean-Basel  problems is as follows.

%In this subsection, we briefly show  how the ADM framework can be
%adapted to solve other two models.

\begin{description}

\item[\textbf{ADM for Solving Problem \eqref{equ:opt}:}]
The  augmented
Lagrangian function  for \eqref{equ:opt} is defined  as
\be
\Lc(x,u,\lambda):=  \rho(x) + \lambda^\top (x + \tR u) + \frac{\sigma}{2} \| x + \tR u \|^2,
\label{auglang-v3}
\ee
where $\sigma>0$ is the penalty parameter and $\lambda \in \R^n$ is the
Lagrangian multiplier. The  ADM method is
%In fact, we have:
\bea \label{eq:ADM-x3}
x^{(j+1)}  &=& \arg \min_{x \in \R^n}  \;
 \rho(x) + \frac{\sigma}{2} \| x - v^{(j)}\|^2, \\
\label{eq:ADM-u3}
 u^{(j+1)}  &=&  \arg \min_{u \in  \Ur} \; \frac{1}{2} u^\top \tR^\top \tR u + b^\top u ,
  \\
 \label{eq:ADM-lmb3}
\lambda^{(j+1)} &=&  \lambda^{(j)} + \beta \sigma (x^{(j+1)}+ \tR u^{(j+1)}),
\eea
where $v^{(j)} = - \left(\tR u^{(j)} + \frac{1}{\sigma} \lambda^{(j)} \right)$, $b=\tR^\top (\frac{1}{\sigma} \lambda^{(j)}  + x^{(j+1)})$,  and $\beta>0$. For $\rho=\rho_{\Variance}$, $\rho_{\VaR}$, and $\rho_{\CVaR}$, the subproblem \eqref{eq:ADM-x3} is the same as \eqref{eq:ADM-y} and hence its solution is given by Lemma \ref{lemma:opt-ADM-y-variance}, \ref{lemma:opt-ADM-y-var}, and \ref{lemma:opt-ADM-y-cvar}, respectively; for $\rho=\rho_{\BaselG}$, problem \eqref{eq:ADM-x3} is equivalent to
%Introducing a variable tau such that tau = \rho_Basel.
%Then the subproblem becomes
\be\label{equ:opt-tau}
\min\limits_{\tau\in\R, x \in \R^n}\ \tau+ \frac{\sigma}{2} \| x - v^{(j)}\|^2,\ \text{s.t.}\ \rho_{\BaselG}(x)\leq \tau,
\ee
whose solution can be obtained in a way similar to that in Lemmas \ref{lemma:opt-ADM-basel-x} and  \ref{lemma:opt-ADM-basel3-x} except that $C_0$ in
\eqref{eq:QP} and \eqref{eq:ADM-x-CVaR-solu} should be replaced by $\tau$ and $\tau$ should be added to the objective functions and $\tau$ should be included as an additional optimization variable in the minimization. The subproblem \eqref{eq:ADM-u3} can be solved by a standard QP solver.

% It can also be proved that solving \eqref{eq:ADM-x2} is equivalent to solving a QP. When $\rho(x(u))=\rho_{\VaR}(x(u))$, the closed-form solution of \eqref{eq:ADM-x2} is given by \eqref{eq:solution}, where $\gamma_{i^*}$ is set to $b_0$. %  simpler than that in Lemma \ref{lemma:opt-ADM-y-var}.

\item[\textbf{ADM for Solving Problem \eqref{equ:optc}:}]  %The counterpart of the ADM scheme \eqref{eq:ADM-x}-\eqref{eq:ADM-lmb} to problem  \eqref{equ:optc} is
%In fact, let $x=-\tR u$, we have
%\be\label{equ:optc_v2}
% \begin{aligned}
%\min_{u \in \U, \; x \in \R^n }\ \  & -\mu^\top u,\\
%\text{s.t. }\; & \rho(x) \leq C_0, \\
%& x + \tR u = 0.
%\end{aligned}\ee
The  augmented
Lagrangian function  for \eqref{equ:optc} is defined  as
\be
\Lc_c(x,u,\lambda):= -\mu^\top u+ \lambda^\top (x + \tR u) + \frac{\sigma}{2} \| x + \tR u \|^2,
\label{auglang-v2}
\ee
where $\sigma>0$ is the penalty parameter and $\lambda \in \R^n$ is the
Lagrangian multiplier. The  ADM method is
%In fact, we have:
\bea \label{eq:ADM-x2}
x^{(j+1)}  &=& \arg \min_{x \in \R^n}  \;
  \| x - v^{(j)}\|^2, \text{ s.t. } \rho(x) \le b_0, \\
\label{eq:ADM-u2}
 u^{(j+1)}  &=&  \arg \min_{u \in \U } \;   \frac{1}{2} u^\top \tR^\top \tR u +
 b_c^\top u,
  \\
 \label{eq:ADM-lmb2}
 \lambda^{(j+1)} &=& \lambda^{(j)} + \beta \sigma (x^{(j+1)}+ \tR u^{(j+1)}),
\eea
where $v^{(j)} = - \left(\tR u^{(j)} + \frac{1}{\sigma} \lambda^{(j)} \right)$,
$b_c=\tR^\top (\frac{1}{\sigma} \lambda^{(j)} + x^{(j+1)}) -
\frac{\mu}{\sigma}$,  and $\beta>0$.
For $\rho=\rho_{\Variance}$, \eqref{eq:ADM-x2} is a QP problem;
for $\rho=\rho_{\CVaR}$, \eqref{eq:ADM-x2} can be formulated as a QP problem by using \eqref{equ:cvar_opt_rep}; for $\rho=\rho_{\VaR}$, by an argument similar to the proof of Lemma \ref{lemma:opt-ADM-y-var}, the closed-form solution of
 \eqref{eq:ADM-x2} is given by \eqref{eq:solution} with $\gamma_{i^*}$ being replaced by $b_0$; for $\rho=\rho_{\BaselG}$, the solution to \eqref{eq:ADM-x2} can be obtained by Lemmas \ref{lemma:opt-ADM-basel-x} and  \ref{lemma:opt-ADM-basel3-x}.
 %  simpler than that in Lemma \ref{lemma:opt-ADM-y-var}.
 The subproblem \eqref{eq:ADM-u2}
is a standard QP problem.

Convergence results similar to  Theorem \ref{thm:cvg-feasi} can
 be established for  the ADM for models \eqref{equ:opt} and  \eqref{equ:optc}.
\end{description}

\section{Numerical Results}\label{sec:numerical}

In this section, we conduct computational experiments to demonstrate the effectiveness of
the ADM method for solving the mean-$\rho$-Basel model
using both simulated and real market data. In particular, we compare the performance of ADM method with that of
MIP/QP/LP solvers in CPLEX 12.4.
The numerical results suggest that the ADM method is promising in generating
solutions of high quality to the model in reasonable computational time.

\subsection{Data Description}
In our experiments, the real market data and simulated data sets are generated as follows.
\begin{itemize}
\item{\bf S\&P 500 Data Set.}
  The S\&P 500 data set comprises the daily returns of 359 stocks that have ever been included in the S\&P 500 index and do not have missing data during the following specified time periods. Let
$t_0=03/01/2012$. For $s=1, \ldots, 60$, $\tilde R_{SP}^{[s]}$ denotes the trailing
five-year daily returns of the stocks on day $t_0-s+1$ (i.e., the daily
returns of the stocks during the period from day %$t_0-s+1-252\times 5+1$
$t_0-s-2058$
to   day
$t_0-s+1$). Let $l=06/01/2007$ and $u=06/01/2009$. For $s=61, \ldots, 120$,
$\tilde R_{SP}^{[s]}$ is defined as the daily returns of the stocks during the
stressed period from  day $l+120-s$ to   day $u-s+61$. Then the S\&P data matrix $\tilde R_{SP}$ is defined from $\tilde R_{SP}^{[s]}, s=1, \ldots, 120$ by Eq. \eqref{equ:port_loss_obs}.
%If a stock has
%missing data during the above period, we exclude it from the portfolio; we end
%up having 359 stocks in the portfolio.
% Hence, for Basel 2.5, the number $n_s$ of historical observations is 504 and 463 for $s\in \{1,\ldots,60\}$ and $s\in \{61,\ldots,120\}$, respectively.

\item{\bf Simulated Data.} We simulate the  prices of 350 stocks based on a multi-dimensional version of the double-exponential jump diffusion model \citep{Kou-2002}:
\begin{equation}\label{equ:xh_4}
  \frac{dS_i(t)}{S_i(t-)}=\mu_i dt + \sigma_i dW_i(t) + d\left(\sum_{k=1}^{N_i(t)}(e^{V_{ik}}-1)\right), i=1, \ldots, n,
\end{equation}
where $W_1(t), \ldots, W_n(t)$ are $n$ correlated Brownian motions with $dW_i(t)dW_j(t)=\rho_{ij}dt$; $N_i(t)$ is a Poisson process with intensity $\lambda_i$; $N_i(t)$ is independent of $N_j(t)$ for $i\neq j$; $\{V_{i1}, V_{i2}, \ldots\}$ are i.i.d. log jump sizes with a double-exponential probability density function $f_i(x)=
    p_i\eta_{iu} e^{-\eta_{iu} x}1_{\{x\geq 0\}}+(1-p_i)\eta_{id} e^{\eta_{id} x}1_{\{x< 0\}}$;
    %where $0<p_i<1, \eta_{iu}>0, \eta_{id}>0;
$V_{ik}$ and $V_{jl}$ are independent for $i\neq j$; and the Brownian motions, Poisson processes, and  jump sizes are mutually independent.
%$(W_1(t), W_2(t), \ldots, W_n(t))$ are independent of $(N_1(t), N_2(t),\ldots, N_n(t))$,  and the jump sizes $\{V_{i1}, V_{i2}, \ldots\}$.
%Under  model \eqref{equ:xh_4}, the log return of the $i$th stock from time $t$ to $t+\Delta t$ is given by
%\[  (\mu_i - \frac{1}{2}\sigma_i^2) \Delta t + \sigma_i (W_i(t+ \Delta t) - W_i(t)) + \sum_{k=N_i(t)+1}^{N_i(t+\Delta t)}V_{ik}.
%\]
%which has the same distribution as
%\[  (\mu - \frac{1}{2}\sigma^2) \Delta t + \sigma W(\Delta t) + \sum_{i=1}^{N(\Delta t)}V_i.
%\]
The stock returns generated in the above model have the same tail heaviness as those generated by the negative exponential tail model considered in \citet*{Lim2011163}. Two sets of parameters $\{\mu_i,
\sigma_i, \lambda_i, p_i, \eta_{iu}, \eta_{id}, \rho_{ij}\}$ are used to simulate stock returns under normal and stressed market conditions, respectively; these parameters are estimated from the historical data of some large-cap stocks during normal and stressed market conditions, respectively. $\Delta t$ is set to be $1/252$ (one day).

%The values of these parameters as well as the Matlab codes for simulating the data are available upon request.
% at
%  \url{http://www.math.ust.hk/~maxhpeng}.
%\[  (\mu, \sigma, \lambda, p, \eta_u, \eta_d) = (0.0007, 0.0047, 1.0264, 0.4521, 174.09, 185.92).\]

\end{itemize}

%The above procedures produce the return observations $\tR$.

\subsection{Parameter Settings of the ADM and MIP}\label{subsec:parameters_adm}
 Our method is implemented in MATLAB. All the
experiments were performed on a Dell Precision Workstation T5500 with Intel Xeon CPU E5620 at 2.40GHz and 12GB of memory
running Ubuntu 12.04 and MATLAB 2011b.  All the quadratic programming subproblems in
the ADM method are solved using the QP solvers in CPLEX 12.4 with Matlab interface; and the mixed integer programming (MIP) reformulations of the asset allocation models are solved using the MIP solvers in CPLEX 12.4. In
our test, the parameters $\sigma$ and $\beta$ in \eqref{eq:ADM-x}-\eqref{eq:ADM-lmb}
are set to be $10^{-3}$ and $0.1$, respectively. The initial Lagrangian
multipliers are  $\lambda^{(0)}  = 0$ and $\pi^{(0)}=0$. The method is terminated if either $\|x^{(j+1)}+\tilde{R}u^{(j+1)}\|^2+\|y^{(j+1)}+\tilde{Y}u^{(j+1)}\|^2 \le 10^{-8}$,
$ \frac{\|u^{(j+1)}- u^{(j)}\|}{\max(1, \| u^{(j)}\|)} \le 10^{-4}$, or
the number of
iterations has reached an upper bound of 2000. The default setting of the MIP solver in CPLEX
12.4 is used.  The maximum CPU
time limit for all solvers is set  to 3600 seconds.

\subsection{Comparing ADM with MIP/QP on the Mean-Variance-Basel Model}
\label{subsec:compare_mean_variance}

In this subsection, we evaluate the performance of the ADM on the
mean-variance-Basel problems:
%respectively, as
\be\label{equ:Variance-Basel-prob}
 \begin{aligned}
\min_{u \in \Ur }\;  & \rho_{\Variance}(y(u))\\
\text{s.t. }\;\; & \rho_{\BaselTwoDotFive}(x(u)) \leq C_0,\end{aligned} \quad \mbox{ and }
\quad \begin{aligned}
\min_{u \in \Ur }\;  & \rho_{\Variance}(y(u))\\
\text{s.t. }\;\; & \rho_{\BaselThree}(x(u)) \leq C_0.\end{aligned} \ee
%where
%$y(u)=-\tilde Yu$ and $x(u)=-\tR u$.
The mean-variance-Basel2.5 problem can be solved using the MIP method, and the
mean-variance-Basel3 problem is a QP problem that can be solved using the QP solver in CPLEX 12.4.

%We compare the ADM with the MIP for the mean-variance-Basel-2.5 problem; and compare the ADM with the QP for the mean-variance-Basel-III problem.

We compare the ADM with the MIP/QP methods for the two problems, respectively, for different numbers of stocks $d\in \{100, 150, 200, 250,
300, 350 \}$ using real market and simulated data. For the real market data, $\tR$ is defined to be a submatrix of $\tilde R_{SP}$ consisting of
$d$ columns of $\tilde R_{SP}$ that are randomly selected.
The mean $\mu$ used for defining $\Ur$ is set as the sample mean of $\tR$. The prescribed
return level $r_0$ is set to be the 80\% quantile of the cross-sectional expected returns of the $d$ stocks. $\tilde Y$ is obtained by deleting the
 duplicated rows in $\tR$. The parameters in \eqref{equ:basel_3_sta} are set at
 $\alpha=0.99$, $k=3$, and $\ell=3$; those in \eqref{equ:basel_3_cvar} are set at $\alpha=0.98$ and $\ell=6$. $C_0$ is set at $0.2$.
% The confidence level of $\rho_{\VaR}$ in the formula of $\rho_{\BaselTwoDotFive}$ is set to be $0.99$; and $C_0$ is set to be $0.2$. The confidence level for $\rho_{\BaselThree}$ is set to be 0.98.
%We consider the following two cases: i) solving model \eqref{equ:VaR-Basel-prob} with $\rho_{\BaselTwoDotFive}(x(u))$ in the constraints by using the ADM  and MIP method, respectively; ii) solving model \eqref{equ:VaR-Basel-prob} with   $\rho_{\BaselThree}(x(u))$ in the constraints by using the ADM  and MIP method, respectively. \alert{The confidence level for $\rho_{\BaselThree}$ is set to be 0.98.}
Table \ref{tab:variable-Variance-Basel} reports the numbers of binary variables, continuous
variables, and linear constraints, denoted by ``binary,'' ``continuous,'' and
``constraints,'' respectively, of the two problems in \eqref{equ:Variance-Basel-prob}.

\begin{table}\caption{The number of binary variables, continuous
variables, and linear constraints in the
MIP/QP formulation of the mean-variance-Basel problems.}
%mean-variance-Basel2.5 and the mean-variance-Basel3 problems.}
\label{tab:variable-Variance-Basel}
\centering
\begin{tabular}{cccccccccccccccccccc}
\hline
&& \multicolumn{3}{c}{$\rho_{\BaselTwoDotFive}(x(u)) \le C_0$}&&\multicolumn{3}{c}{$\rho_{\BaselThree}(x(u))\le C_0$}\\
\cline{3-5}\cline{7-9}
$d$ && binary & continuous & constraints &&binary & continuous & constraints  \\
\hline
100&& 58020 &4601 & 62526 && -- &32319 & 32163 \\
150&& 58020 &4651 & 62526 && -- &32369 & 32163 \\
200&& 58020 &4701 & 62526 && -- &32419 & 32163 \\
250&& 58020 &4751 & 62526 && -- &32469 & 32163 \\
300&& 58020 &4801 & 62526 && -- &32519 & 32163 \\
350&& 58020 &4851 & 62526 && -- &32569 & 32163 \\
\hline
\end{tabular}
\end{table}

The optimal objective value $\rho_\Variance(y(u))$ obtained and the CPU time used by the ADM and MIP/QP methods for the simulated and real market data are presented in Figures \ref{fig:Variance-Basel-simu} and \ref{fig:Variance-Basel-sp}, respectively. These values, as well as $\rho_{\BaselTwoDotFive}(x(u))$ and $\rho_{\BaselThree}(x(u))$, are
reported in Tables \ref{tab:Variance-Basel-simu}
 and \ref{tab:Variance-Basel-sp}.  We can observe that the ADM
obtains a better objective value of $\rho_{\Variance}$ and is faster than the MIP/QP methods.

\begin{figure}[ht]
\centering
 \includegraphics[width=0.8\textwidth]{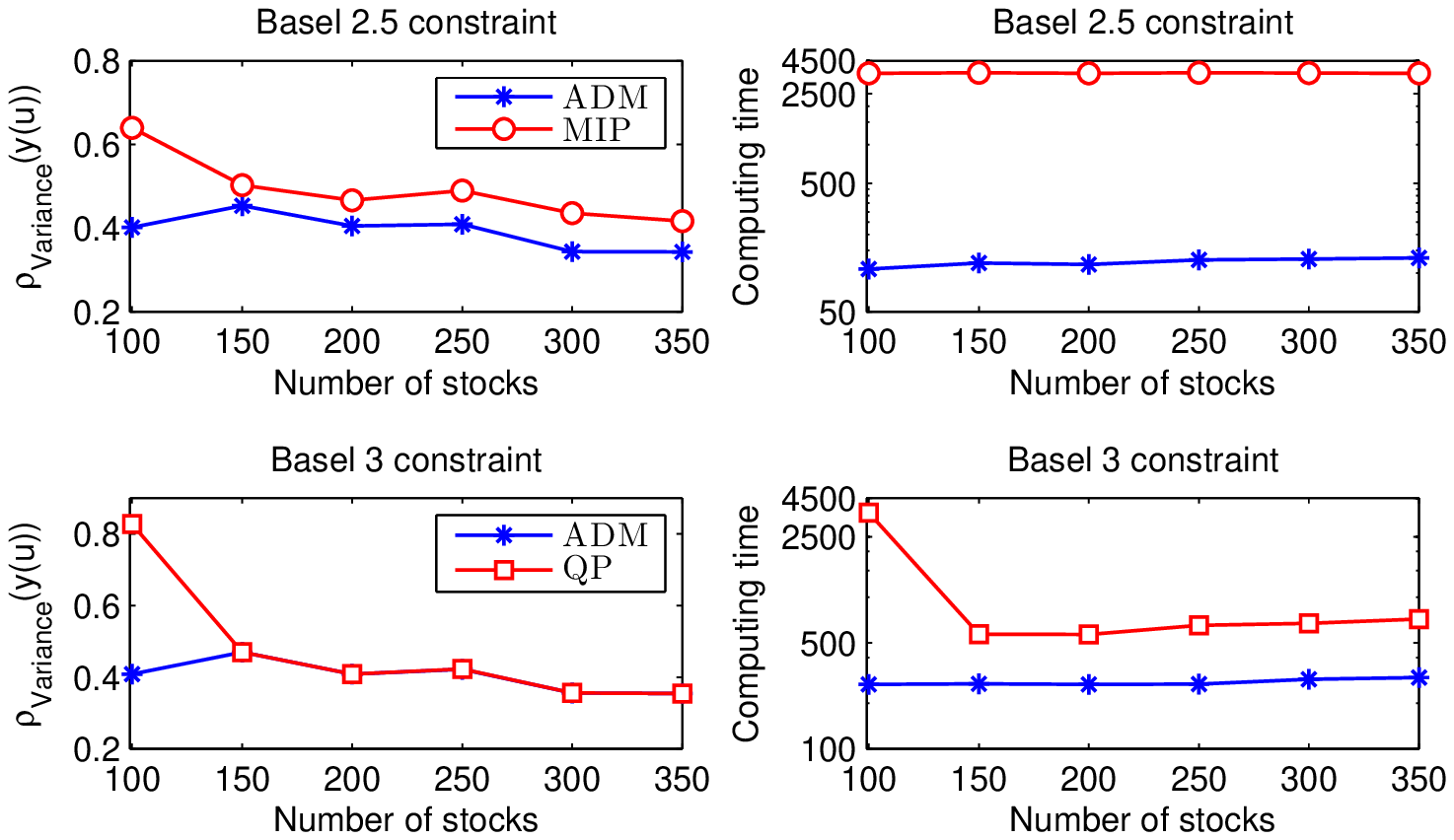}
 \caption{\label{fig:Variance-Basel-simu}Comparing the ADM with the MIP for the mean-variance-Basel2.5 problem and comparing the ADM with the QP for the mean-variance-Basel3 problem for different numbers of stocks
using simulated data. CPU time is expressed in seconds.}
\end{figure}

\begin{figure}[ht]
\centering
 \includegraphics[width=0.8\textwidth]{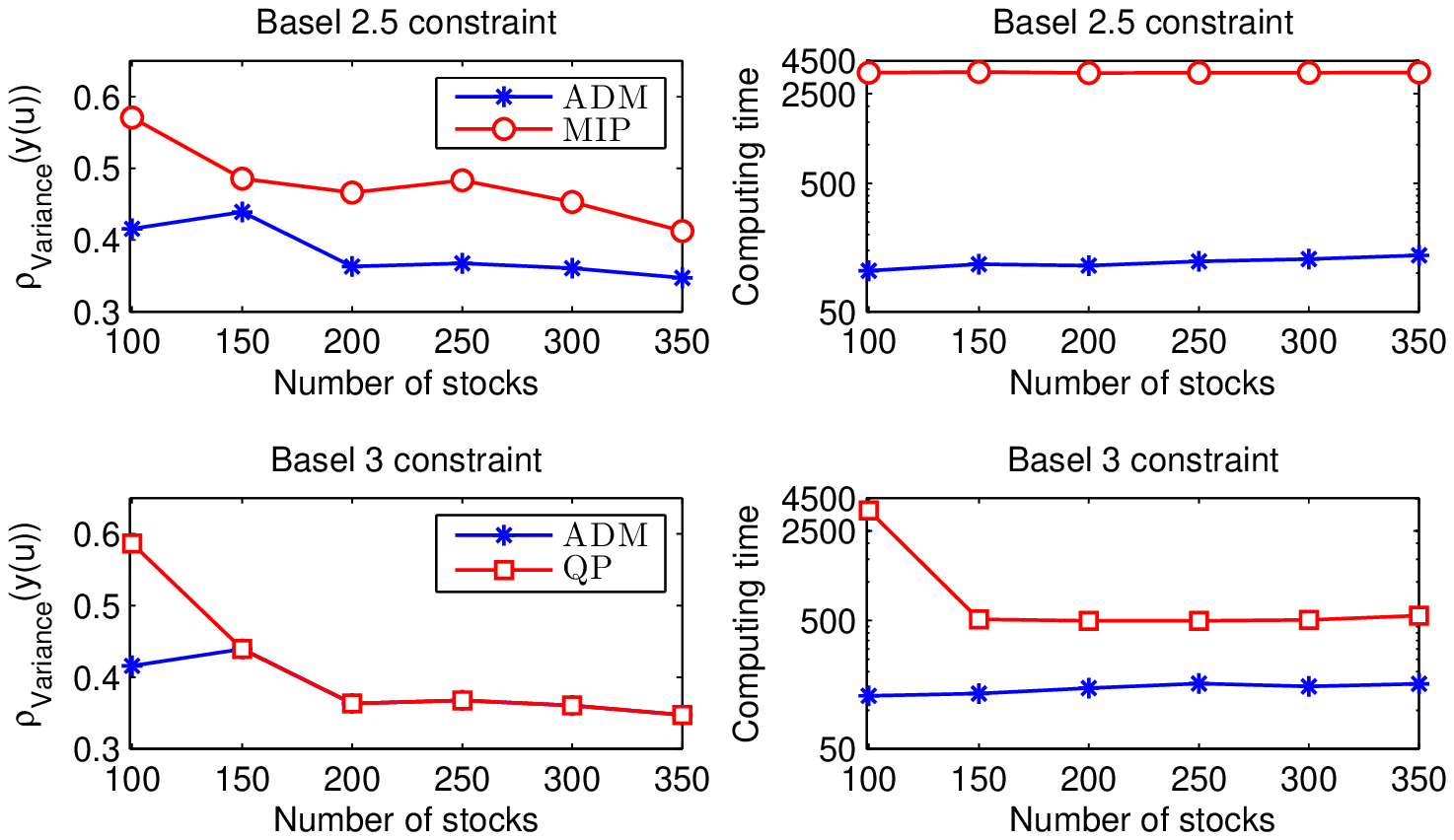}
\caption{\label{fig:Variance-Basel-sp}Comparing the ADM with the MIP for the mean-variance-Basel2.5 problem and comparing the ADM with the QP for the mean-variance-Basel3 problem for different numbers of stocks
using real market data. CPU time is expressed in seconds.}
\end{figure}

\begin{table}\caption{The numerical results of solving the mean-variance-Basel problems with
simulated data using the ADM and the MIP/QP methods.}
  \label{tab:Variance-Basel-simu}
 \setlength{\tabcolsep}{1.2pt}
%\resizebox{\linewidth}{!}{
\centering
\begin{tabular}{ccccccccccccccccccccccccccccc}
\hline
\multicolumn{1}{c}{stocks} && \multicolumn{3}{c}{ $\rm ADM_{\BaselTwoDotFive\le C_0}$}&&
\multicolumn{3}{c}{ $\rm MIP_{\BaselTwoDotFive\le C_0}$} && \multicolumn{3}{c}{ $\rm
ADM_{Basel3\le C_0}$}&&  \multicolumn{3}{c}{ $\rm QP_{Basel3\le C_0}$} \\
\cline{3-5}\cline{7-9}\cline{11-13}\cline{15-17}
$d$  && $\rho_{\Variance}$ & time & $\rho_{\rm \BaselTwoDotFive}$ && $\rho_{\Variance}$ &
time &  $\rho_{\rm \BaselTwoDotFive}$ && $\rho_{\Variance}$ & time & $\rho_{\rm Basel3}$
&& $\rho_{\Variance}$ & time &  $\rho_{\rm Basel3}$\\
\hline
100 && 0.4020 & 108 & 0.146 && 0.6399 & 3600 & 0.158 && 0.4088 & 266 & 0.200 && 0.8277 & 3603 & 0.144 &\\
\hline
150 && 0.4538 & 120 & 0.164 && 0.5029 & 3631 & 0.171 && 0.4701 & 268 & 0.200 && 0.4702 & 572 & 0.200 &\\
\hline
200 && 0.4054 & 117 & 0.164 && 0.4664 & 3605 & 0.147 && 0.4087 & 267 & 0.200 && 0.4087 & 567 & 0.200 &\\
\hline
250 && 0.4090 & 127 & 0.161 && 0.4895 & 3630 & 0.153 && 0.4226 & 268 & 0.200 && 0.4227 & 653 & 0.200 &\\
\hline
300 && 0.3437 & 128 & 0.151 && 0.4356 & 3613 & 0.154 && 0.3561 & 288 & 0.200 && 0.3562 & 671 & 0.200 &\\
\hline
350 && 0.3428 & 132 & 0.156 && 0.4169 & 3605 & 0.163 && 0.3543 & 296 & 0.200 && 0.3544 & 717 & 0.200 &\\
\hline
\end{tabular}
  \end{table}

\begin{table}\caption{The numerical results of solving the mean-variance-Basel problems with
real market data using the ADM and the MIP/QP methods.}
  \label{tab:Variance-Basel-sp}
 \setlength{\tabcolsep}{1.2pt}
%\resizebox{\linewidth}{!}{
\centering
\begin{tabular}{ccccccccccccccccccccccccccccc}
\hline
\multicolumn{1}{c}{stocks} && \multicolumn{3}{c}{ $\rm ADM_{\BaselTwoDotFive\le C_0}$}&&
\multicolumn{3}{c}{ $\rm MIP_{\BaselTwoDotFive\le C_0}$} && \multicolumn{3}{c}{ $\rm
ADM_{Basel3\le C_0}$}&&  \multicolumn{3}{c}{ $\rm QP_{Basel3\le C_0}$} \\
\cline{3-5}\cline{7-9}\cline{11-13}\cline{15-17}
$d$  && $\rho_{\Variance}$ & time & $\rho_{\rm \BaselTwoDotFive}$ && $\rho_{\Variance}$ &
time &  $\rho_{\rm \BaselTwoDotFive}$ && $\rho_{\Variance}$ & time & $\rho_{\rm Basel3}$
&& $\rho_{\Variance}$ & time &  $\rho_{\rm Basel3}$\\
\hline
100 && 0.4157 & 104 & 0.145 && 0.5710 & 3627 & 0.148 && 0.4157 & 130 & 0.157 && 0.5867 & 3610 & 0.122 &\\
\hline
150 && 0.4393 & 118 & 0.155 && 0.4859 & 3698 & 0.136 && 0.4393 & 135 & 0.145 && 0.4393 & 510 & 0.145 &\\
\hline
200 && 0.3633 & 114 & 0.133 && 0.4662 & 3619 & 0.135 && 0.3633 & 149 & 0.141 && 0.3633 & 497 & 0.141 &\\
\hline
250 && 0.3678 & 124 & 0.145 && 0.4832 & 3630 & 0.139 && 0.3678 & 162 & 0.152 && 0.3678 & 497 & 0.152 &\\
\hline
300 && 0.3608 & 129 & 0.138 && 0.4530 & 3638 & 0.148 && 0.3608 & 153 & 0.144 && 0.3608 & 505 & 0.144 &\\
\hline
350 && 0.3473 & 137 & 0.137 && 0.4125 & 3649 & 0.138 && 0.3473 & 161 & 0.147 && 0.3473 & 547 & 0.147 &\\
\hline
\end{tabular}
  \end{table}

\subsection{Comparing ADM with MIP/LP on the Mean-CVaR-Basel Model}

In this subsection, we evaluate the performance of the ADM on the
mean-CVaR-Basel problems:
\be\label{equ:CVaR-Basel-prob}
 \begin{aligned}
\min_{u \in \Ur }\;  & \rho_{\CVaR}(y(u))\\
\text{s.t. }\;\; & \rho_{\BaselTwoDotFive}(x(u)) \leq C_0,\end{aligned} \quad \mbox{ and }
\quad \begin{aligned}
\min_{u \in \Ur }\;  & \rho_{\CVaR}(y(u))\\
\text{s.t. }\;\; & \rho_{\BaselThree}(x(u)) \leq C_0.\end{aligned} \ee
The mean-CVaR-Basel2.5 problem can be solved using the MIP method, and the
mean-CVaR-Basel3 problem can be formulated as a LP problem and solved using the dual simplex (LP) solver in CPLEX 12.4.

We compare the ADM with the MIP/LP methods for the two problems, respectively. The setup of the experiments is the same as in Section \ref{subsec:compare_mean_variance}. Hence, the numbers of binary variables and linear constraints are the same as those in Table \ref{tab:variable-Variance-Basel}, and the number of continuous
variables is equal to that in Table
\ref{tab:variable-Variance-Basel} plus one.

%\begin{table}\caption{Problem characteristics of
% mean-CVaR-Basel   model  with $\rho_{\BaselTwoDotFive}$ or
%  $\rho_{\BaselThree}$ constraints.}\label{tab:variable-CVaR-Basel}
%\centering
%\begin{tabular}{cccccccccccccccccccc}
%\hline
%&& \multicolumn{3}{c}{$\rho_{\rm Basel2.5}(x(u)) \le C_0$}&&\multicolumn{3}{c}{$\rho_{\rm Basel3}(x(u))\le C_0$}\\
%\cline{3-5}\cline{7-9}
%$d$ && binary & continuous & constraints &&binary & continuous & constraints  \\
%\hline
%100&& 58020 &4602 & 62526 && -- &32320 & 32163 \\
%150&& 58020 &4652 & 62526 && -- &32370 & 32163 \\
%200&& 58020 &4702 & 62526 && -- &32420 & 32163 \\
%250&& 58020 &4752 & 62526 && -- &32470 & 32163 \\
%300&& 58020 &4802 & 62526 && -- &32520 & 32163 \\
%350&& 58020 &4852 & 62526 && -- &32570 & 32163 \\
%\hline
%\end{tabular}
%\end{table}

The optimal objective value $\rho_\CVaR(y(u))$ obtained and the CPU time used by the ADM and MIP/LP methods for the simulated and real market data are presented in Figures \ref{fig:CVaR-Basel-simu} and \ref{fig:CVaR-Basel-sp}, respectively. These values, as well as $\rho_{\BaselTwoDotFive}(x(u))$ and $\rho_{\BaselThree}(x(u))$, are
reported in Tables \ref{tab:CVaR-Basel-simu}
 and \ref{tab:CVaR-Basel-sp}. We can observe that (i) the ADM
method is faster than the MIP method for the mean-CVaR-Basel2.5 problem but is slower than the LP method for the mean-CVaR-Basel3 problem; (ii)
the optimal objective value $\rho_\CVaR(y(u))$ obtained by the ADM and the MIP/LP are almost the same. In fact, the largest absolute value of the relative difference (``rel.dif'') of
$\rho_{\CVaR}$ between that obtained using the ADM and that obtained using the MIP/LP is $0.39\%$, where ``rel.dif'' is  defined by
$ \mbox{rel.dif}:= (\rho_{\CVaR}(y(u_{\rm ADM})) -
\rho_{\CVaR}(y(u_{\rm MIP/LP})))/\rho_{\CVaR}(y(u_{\rm MIP/LP}))$.

\begin{figure}[ht]
\centering
 \includegraphics[width=0.8\textwidth]{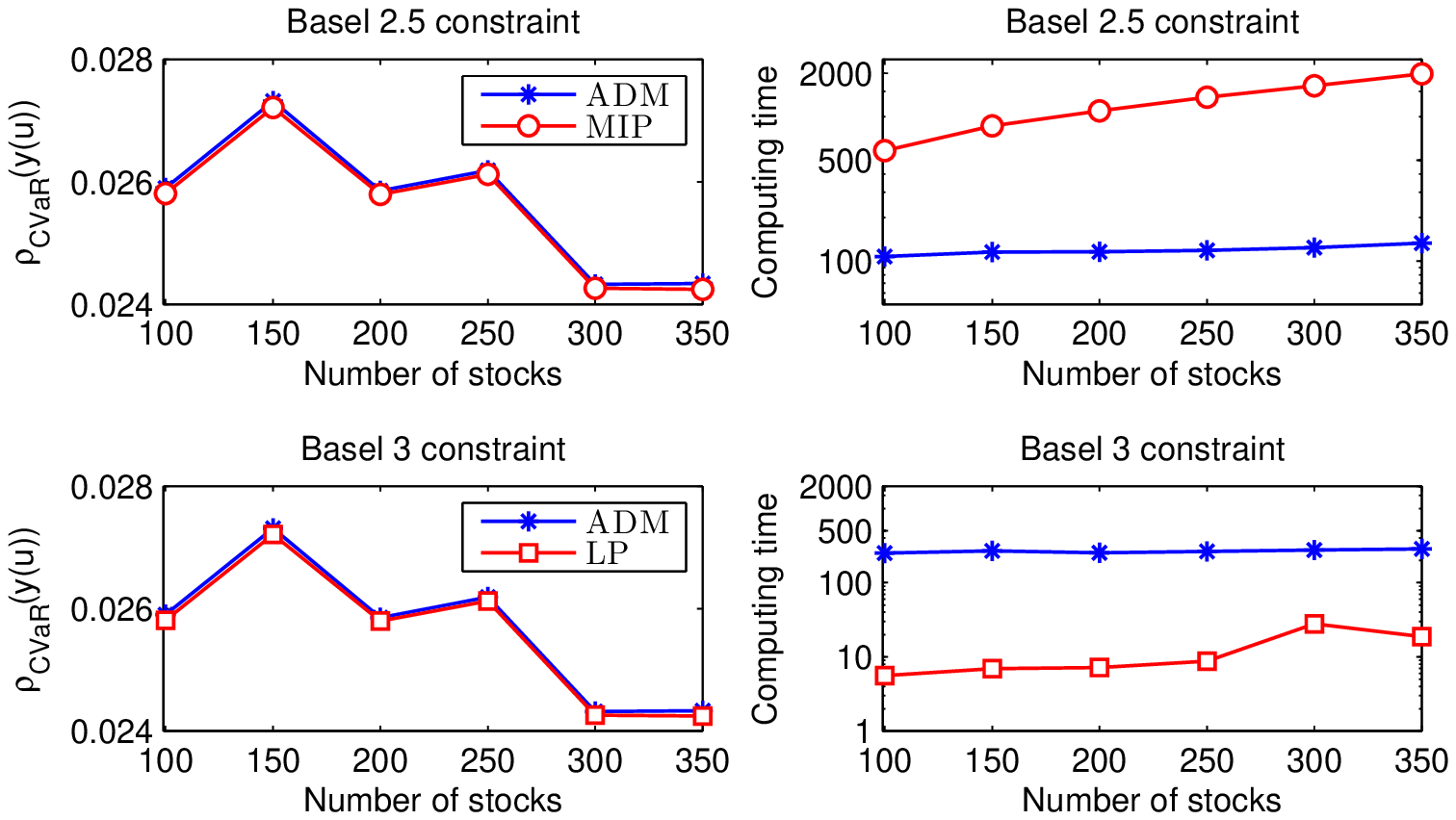}
 \caption{\label{fig:CVaR-Basel-simu}
Comparing the ADM with the MIP for the mean-CVaR-Basel2.5 problem and comparing the ADM with the LP for the mean-CVaR-Basel3 problem for different numbers of stocks
using simulated data. CPU time is expressed in seconds.}
\end{figure}

\begin{figure}[ht]
\centering
 \includegraphics[width=0.8\textwidth]{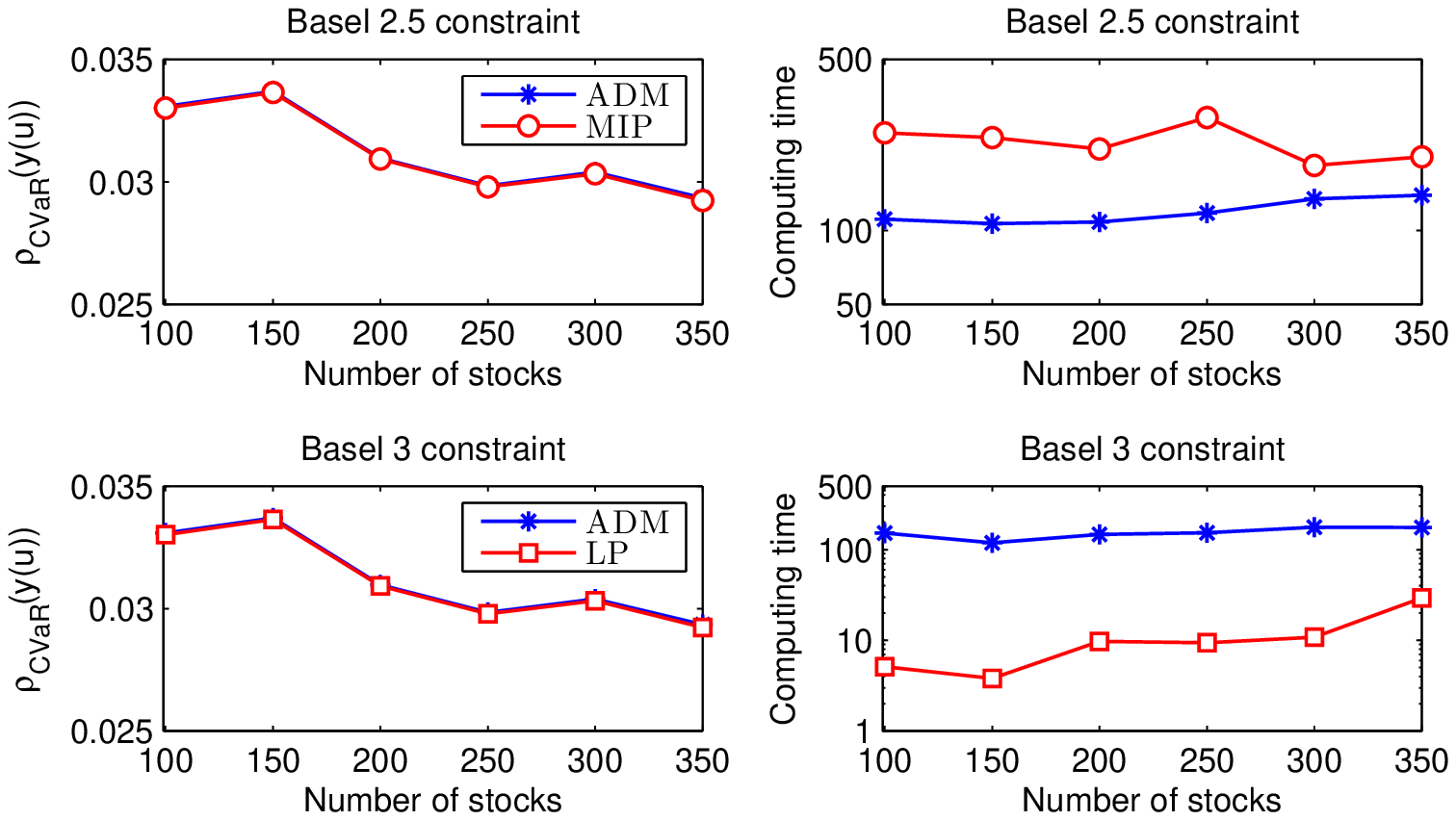}
\caption{\label{fig:CVaR-Basel-sp}Comparing the ADM with the MIP for the mean-CVaR-Basel2.5 problem and comparing the ADM with the LP for the mean-CVaR-Basel3 problem for different numbers of stocks
using real market data. CPU time is expressed in seconds.}
\end{figure}

\begin{table}\caption{
The numerical results of solving the mean-CVaR-Basel problems with
simulated data using the ADM and the MIP/LP methods.}
  \label{tab:CVaR-Basel-simu}
 \setlength{\tabcolsep}{1.5pt}
%\resizebox{\linewidth}{!}{
\centering
\begin{tabular}{ccccccccccccccccccccccccccccc}
\hline
\multicolumn{1}{c}{stocks} && \multicolumn{3}{c}{ $\rm ADM_{\BaselTwoDotFive\le C_0}$}&&
\multicolumn{3}{c}{ $\rm MIP_{\BaselTwoDotFive\le C_0}$} && \multicolumn{3}{c}{ $\rm
ADM_{Basel3\le C_0}$}&&  \multicolumn{3}{c}{ $\rm LP_{Basel3\le C_0}$} \\
\cline{3-5}\cline{7-9}\cline{11-13}\cline{15-17}
$d$  && $\rho_{\CVaR}$ & time & $\rho_{\rm \BaselTwoDotFive}$ && $\rho_{\CVaR}$ & time &
$\rho_{\rm \BaselTwoDotFive}$ && $\rho_{\CVaR}$ & time & $\rho_{\rm Basel3}$ && $\rho_{\CVaR}$ & time &  $\rho_{\rm Basel3}$\\
\hline
100 && 0.0259 & 108 & 0.135 && 0.0258 & 583 & 0.134 && 0.0259 & 252 & 0.194 && 0.0258 & 6 & 0.197 &\\
\hline
150 && 0.0273 & 115 & 0.151 && 0.0272 & 866 & 0.151 && 0.0273 & 268 & 0.198 && 0.0272 & 7 & 0.194 &\\
\hline
200 && 0.0259 & 116 & 0.141 && 0.0258 & 1099 & 0.143 && 0.0259 & 252 & 0.170 && 0.0258 & 7 & 0.172 &\\
\hline
250 && 0.0262 & 118 & 0.146 && 0.0261 & 1365 & 0.142 && 0.0262 & 264 & 0.184 && 0.0261 & 9 & 0.185 &\\
\hline
300 && 0.0243 & 124 & 0.132 && 0.0243 & 1640 & 0.132 && 0.0243 & 276 & 0.180 && 0.0243 & 28 & 0.184 &\\
\hline
350 && 0.0243 & 133 & 0.134 && 0.0242 & 1981 & 0.131 && 0.0243 & 284 & 0.178 && 0.0242 & 19 & 0.181 &\\
\hline
\end{tabular}
  \end{table}

\begin{table}\caption{The numerical results of solving the mean-CVaR-Basel problems with
real market data using the ADM and the MIP/LP methods.}
  \label{tab:CVaR-Basel-sp}
 \setlength{\tabcolsep}{1.5pt}
%\resizebox{\linewidth}{!}{
\centering
\begin{tabular}{ccccccccccccccccccccccccccccc}
\hline
\multicolumn{1}{c}{stocks} && \multicolumn{3}{c}{ $\rm ADM_{\BaselTwoDotFive\le C_0}$}&&
\multicolumn{3}{c}{ $\rm MIP_{\BaselTwoDotFive\le C_0}$} && \multicolumn{3}{c}{ $\rm
ADM_{Basel3\le C_0}$}&&  \multicolumn{3}{c}{ $\rm LP_{Basel3\le C_0}$} \\
\cline{3-5}\cline{7-9}\cline{11-13}\cline{15-17}
$d$  && $\rho_{\CVaR}$ & time & $\rho_{\rm \BaselTwoDotFive}$ && $\rho_{\CVaR}$ & time &  $\rho_{\rm \BaselTwoDotFive}$ && $\rho_{\CVaR}$ & time & $\rho_{\rm Basel3}$ && $\rho_{\CVaR}$ & time &  $\rho_{\rm Basel3}$\\
\hline
100 && 0.0331 & 111 & 0.141 && 0.0330 & 251 & 0.143 && 0.0331 & 153 & 0.156 && 0.0330 & 5 & 0.160 &\\
\hline
150 && 0.0337 & 107 & 0.133 && 0.0336 & 240 & 0.134 && 0.0337 & 119 & 0.143 && 0.0336 & 4 & 0.139 &\\
\hline
200 && 0.0310 & 109 & 0.132 && 0.0309 & 215 & 0.139 && 0.0310 & 147 & 0.141 && 0.0309 & 10 & 0.147 &\\
\hline
250 && 0.0298 & 118 & 0.142 && 0.0298 & 289 & 0.145 && 0.0298 & 153 & 0.151 && 0.0298 & 9 & 0.154 &\\
\hline
300 && 0.0304 & 135 & 0.140 && 0.0303 & 184 & 0.141 && 0.0304 & 177 & 0.147 && 0.0303 & 11 & 0.153 &\\
\hline
350 && 0.0293 & 140 & 0.140 && 0.0292 & 200 & 0.142 && 0.0293 & 175 & 0.151 && 0.0292 & 29 & 0.150 &\\
\hline
\end{tabular}
  \end{table}

\subsection{Comparing ADM with MIP on the Mean-VaR-Basel Model}
\label{sec:num-model}

In this subsection, we compare the performance of the ADM with that of the MIP on the
mean-VaR-Basel models:
\be\label{equ:VaR-Basel-prob}
 \begin{aligned}
\min_{u \in \Ur }\;  & \rho_{\VaR}(y(u))\\
\text{s.t. }\;\; & \rho_{\BaselTwoDotFive}(x(u)) \leq C_0,\end{aligned} \quad \mbox{ and }
\quad \begin{aligned}
\min_{u \in \Ur }\;  & \rho_{\VaR}(y(u))\\
\text{s.t. }\;\; & \rho_{\BaselThree}(x(u)) \leq C_0.\end{aligned} \ee
The setup of the experiments is the same as that in Section \ref{subsec:compare_mean_variance}.
Table \ref{tab:variable-VaR-Basel} reports the number of binary variables, continuous
variables, and linear constraints in the MIP formulation of the problems in \eqref{equ:VaR-Basel-prob}.

%We compare the two methods for different number of stocks $d\in \{100, 150, 200, 250,
%300, 350 \}$ using both real market data and simulated data. For the real market data, $\tR$ is defined to be a submatrix of $\tilde R_{SP}$ consisting of
%$d$ columns of $\tilde R_{SP}$ that are randomly selected.
%The mean $\mu$ in $\Ur$ is set as the sample mean of $\tR$. The prescribed
%return level $r_0$ is set to be the 80\% quantile of the cross-sectional expected returns of the $d$ stocks. $\tilde Y$ is obtained by deleting the
% duplicated rows in $\tR$. The confidence level of $\rho_{\VaR}$ is set to be $0.99$ and $C_0$ is set to be $0.2$. The confidence level for $\rho_{\BaselThree}$ is set to be 0.98.
%%We consider the following two cases: i) solving model \eqref{equ:VaR-Basel-prob} with $\rho_{\BaselTwoDotFive}(x(u))$ in the constraints by using the ADM  and MIP method, respectively; ii) solving model \eqref{equ:VaR-Basel-prob} with   $\rho_{\BaselThree}(x(u))$ in the constraints by using the ADM  and MIP method, respectively. \alert{The confidence level for $\rho_{\BaselThree}$ is set to be 0.98.}
%Table \ref{tab:variable-VaR-Basel} report the number of binary variables, continuous
%variables, and linear constraints, denoted by ``binary'', ``continuous'', and
%``constraints'', respectively, in the MIP reformulation of \eqref{equ:VaR-Basel-prob}.

\begin{table}\caption{
The number of binary variables, continuous
variables, and linear constraints in the MIP formulation of the mean-VaR-Basel problems.}\label{tab:variable-VaR-Basel}
\centering
\begin{tabular}{cccccccccccccccccccc}
\hline
&& \multicolumn{3}{c}{$\rho_{\BaselTwoDotFive}(x(u)) \le C_0$}&&\multicolumn{3}{c}{$\rho_{\BaselThree}(x(u))\le C_0$}\\
\cline{3-5}\cline{7-9}
$d$ && binary & continuous & constraints &&binary & continuous & constraints  \\
\hline
100&& 62399 &223 & 62527 && 4379 &27941 & 32164 \\
150&& 62399 &273 & 62527 && 4379 &27991 & 32164 \\
200&& 62399 &323 & 62527 && 4379 &28041 & 32164 \\
250&& 62399 &373 & 62527 && 4379 &28091 & 32164 \\
300&& 62399 &423 & 62527 && 4379 &28141 & 32164 \\
350&& 62399 &473 & 62527 && 4379 &28191 & 32164 \\
\hline
\end{tabular}
\end{table}

The optimal objective value $\rho_\VaR(y(u))$ obtained and the CPU time used by the ADM and the MIP methods for the simulated and real market data are presented in Figures \ref{fig:VaR-Basel-simu} and  \ref{fig:VaR-Basel-sp}, respectively. These values, as well as $\rho_{\BaselTwoDotFive}(x(u))$ and $\rho_{\BaselThree}(x(u))$, are
reported in Tables \ref{tab:VaR-Basel-simu}
 and \ref{tab:VaR-Basel-sp}. The figures and tables show that the ADM is a very good alternative to
the MIP for the mean-VaR-Basel problems because: (i) The ADM is much faster than the MIP. (ii) For the mean-VaR-Basel2.5 problem, the optimal objective value $\rho_\VaR$
computed by the ADM is smaller than that computed by the MIP
except in two cases; in fact, the relative difference of $\rho_{\rm VaR}$ between the ADM and the MIP, which is defined by
$(\rho_{\rm VaR}(y(u_{\rm ADM})) - \rho_{\rm
VaR}(y(u_{\rm MIP})))/\rho_{\rm VaR}(y(u_{\rm MIP}))$, is in the range of
$[-19.65\text{\%}, 8.07\text{\%}]$, which shows that the ADM may be slightly inferior to the MIP in some cases but can be significantly preferable to the MIP in other cases. (iii) For the mean-VaR-Basel3 problem, the relative difference of $\rho_{\rm VaR}$ between the ADM and the MIP is in the range of $[-23.18\text{\%}, 5.44\text{\%}]$, which shows that overall the ADM achieves better objective value than the MIP.
\begin{figure}[ht]
\centering
 \includegraphics[width=0.8\textwidth]{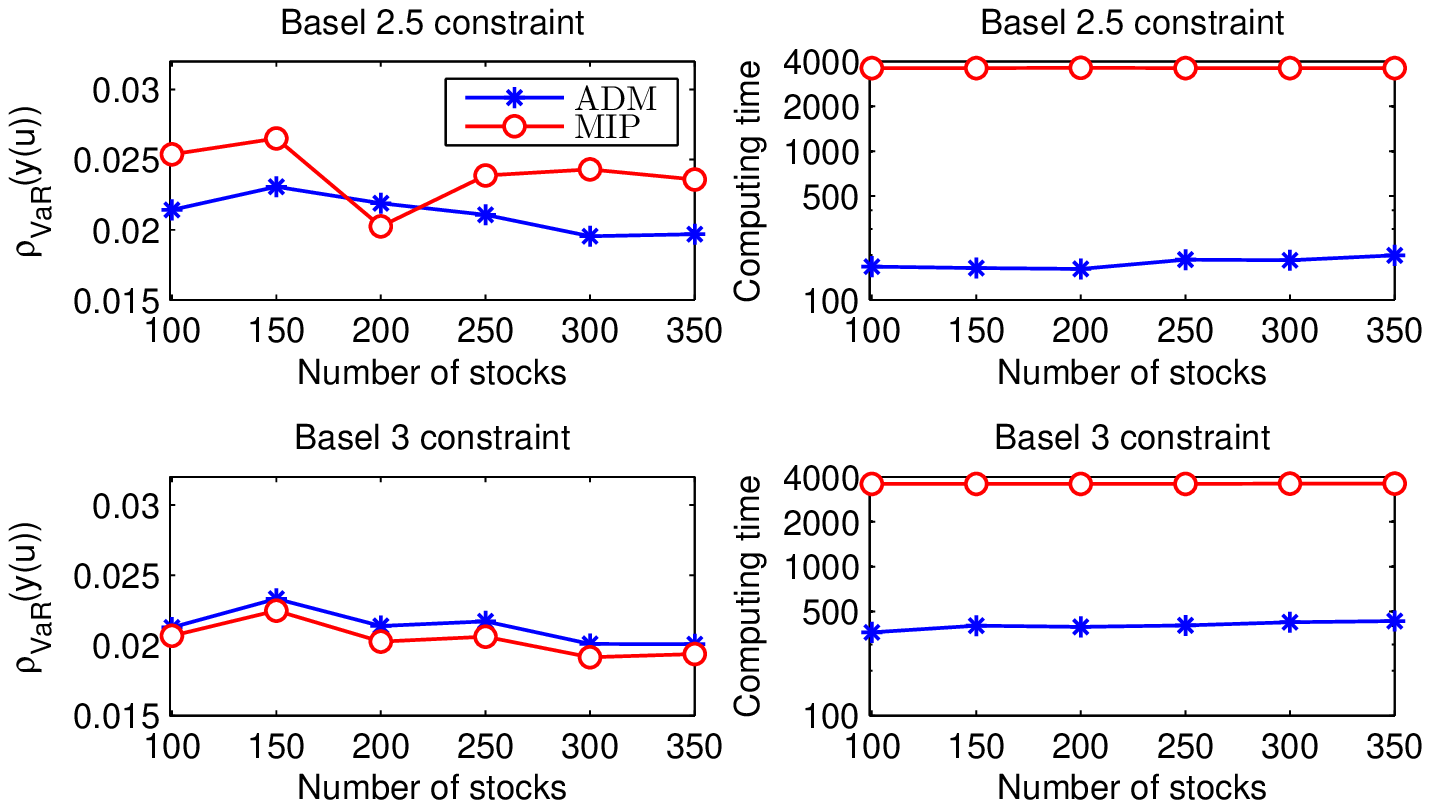}
 \caption{\label{fig:VaR-Basel-simu}
Comparing the ADM with the MIP for the mean-VaR-Basel problems for different numbers of stocks using simulated data. CPU time is expressed in seconds.}
\end{figure}

\begin{figure}[ht]
\centering
 \includegraphics[width=0.8\textwidth]{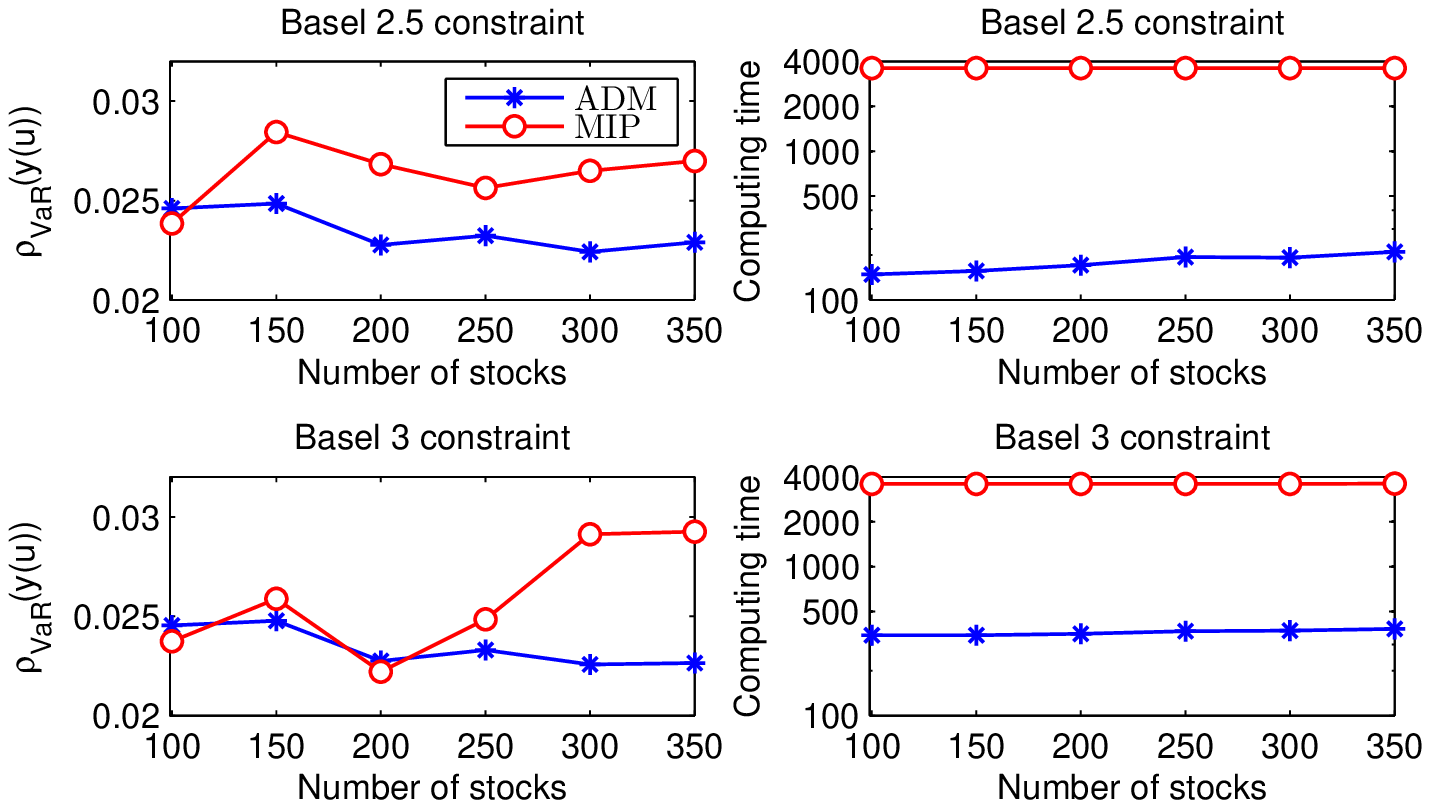}
\caption{\label{fig:VaR-Basel-sp}Comparing the ADM with the MIP for the mean-VaR-Basel problems for different numbers of stocks using real market data. CPU time is expressed in seconds.}
\end{figure}

\begin{table}\caption{The numerical results obtained when solving the mean-VaR-Basel problems with
simulated data using the ADM and the MIP methods.}
  \label{tab:VaR-Basel-simu}
 \setlength{\tabcolsep}{1.5pt}
%\resizebox{\linewidth}{!}{
\centering
\begin{tabular}{ccccccccccccccccccccccccccccc}
\hline
\multicolumn{1}{c}{stocks} && \multicolumn{3}{c}{ $\rm ADM_{\BaselTwoDotFive\le C_0}$}&&  \multicolumn{3}{c}{ $\rm MIP_{\BaselTwoDotFive\le C_0}$} && \multicolumn{3}{c}{ $\rm ADM_{Basel3\le C_0}$}&&  \multicolumn{3}{c}{ $\rm MIP_{Basel3\le C_0}$} \\
\cline{3-5}\cline{7-9}\cline{11-13}\cline{15-17}
$d$  && $\rho_{\rm VaR}$ & time & $\rho_{\rm \BaselTwoDotFive}$ && $\rho_{\rm VaR}$ & time &  $\rho_{\rm \BaselTwoDotFive}$ && $\rho_{\rm VaR}$ & time & $\rho_{\rm Basel3}$ && $\rho_{\rm VaR}$ & time &  $\rho_{\rm Basel3}$\\
\hline
100 && 0.0214 & 174 & 0.151 && 0.0254 & 3602 & 0.158 && 0.0213 & 361 & 0.200 && 0.0207 & 3601 & 0.200 &\\
\hline
150 && 0.0231 & 162 & 0.174 && 0.0265 & 3602 & 0.171 && 0.0233 & 397 & 0.200 && 0.0225 & 3601 & 0.200 &\\
\hline
200 && 0.0219 & 160 & 0.170 && 0.0202 & 3607 & 0.161 && 0.0214 & 390 & 0.200 && 0.0203 & 3601 & 0.200 &\\
\hline
250 && 0.0210 & 175 & 0.166 && 0.0239 & 3604 & 0.151 && 0.0217 & 402 & 0.200 && 0.0206 & 3602 & 0.200 &\\
\hline
300 && 0.0195 & 180 & 0.162 && 0.0243 & 3611 & 0.153 && 0.0201 & 422 & 0.200 && 0.0192 & 3605 & 0.200 &\\
\hline
350 && 0.0197 & 192 & 0.163 && 0.0236 & 3612 & 0.144 && 0.0200 & 428 & 0.200 && 0.0194 & 3609 & 0.200 &\\
\hline
\end{tabular}
  \end{table}

\begin{table}\caption{The numerical results obtained when solving the mean-VaR-Basel problems with
real market data using the ADM and the MIP methods.}
  \label{tab:VaR-Basel-sp}
 \setlength{\tabcolsep}{1.5pt}
%\resizebox{\linewidth}{!}{
\centering
\begin{tabular}{ccccccccccccccccccccccccccccc}
\hline
\multicolumn{1}{c}{stocks} && \multicolumn{3}{c}{ $\rm ADM_{\BaselTwoDotFive\le C_0}$}&&  \multicolumn{3}{c}{ $\rm MIP_{\BaselTwoDotFive\le C_0}$} && \multicolumn{3}{c}{ $\rm ADM_{Basel3\le C_0}$}&&  \multicolumn{3}{c}{ $\rm MIP_{Basel3\le C_0}$} \\
\cline{3-5}\cline{7-9}\cline{11-13}\cline{15-17}
$d$  && $\rho_{\rm VaR}$ & time & $\rho_{\rm \BaselTwoDotFive}$ && $\rho_{\rm VaR}$ & time &  $\rho_{\rm \BaselTwoDotFive}$ && $\rho_{\rm VaR}$ & time & $\rho_{\rm Basel3}$ && $\rho_{\rm VaR}$ & time &  $\rho_{\rm Basel3}$\\
\hline
100 && 0.0246 & 148 & 0.138 && 0.0238 & 3602 & 0.143 && 0.0245 & 247 & 0.154 && 0.0237 & 3601 & 0.178 &\\
\hline
150 && 0.0249 & 167 & 0.146 && 0.0284 & 3602 & 0.160 && 0.0248 & 244 & 0.153 && 0.0259 & 3601 & 0.195 &\\
\hline
200 && 0.0228 & 174 & 0.129 && 0.0268 & 3602 & 0.159 && 0.0228 & 269 & 0.142 && 0.0222 & 3602 & 0.154 &\\
\hline
250 && 0.0233 & 194 & 0.134 && 0.0256 & 3606 & 0.153 && 0.0233 & 300 & 0.164 && 0.0248 & 3602 & 0.171 &\\
\hline
300 && 0.0224 & 193 & 0.133 && 0.0265 & 3606 & 0.158 && 0.0224 & 283 & 0.152 && 0.0291 & 3602 & 0.193 &\\
\hline
350 && 0.0227 & 208 & 0.131 && 0.0270 & 3609 & 0.146 && 0.0228 & 302 & 0.149 && 0.0293 & 3610 & 0.197 &\\
\hline
\end{tabular}
  \end{table}

\section{Conclusions}\label{sec:conclusion}
 A major change in financial regulations after the recent financial crisis is that financial institutions are now required to meet more stringent regulatory capital requirements than previously. It has been estimated that
the capital requirement for a large bank's trading book under the Basel 2.5 Accord on average
\emph{more than doubles} that under the Basel II Accord.  The significantly higher
capital requirement makes it more important for banks to take into account the
capital constraint when they construct their investment portfolios. In this paper, we propose a new asset allocation model, called the ``mean-$\rho$-Basel" model, that incorporates the Basel Accord capital requirements as one of the constraints. In this model, the capital requirement is measured using the Basel 2.5 and Basel III risk measures imposed by regulators; the risk level of the portfolio is measured by $\rho$, such as variance, VaR, and CVaR that can be freely chosen by the portfolio manager.

The complexity of the Basel 2.5 and Basel III risk measures, which involve risk
measurement under multiple scenarios, including stressed scenarios, poses
significant computational challenges to the proposed asset allocation problem due to its inherent non-convexity and non-smoothness.
We propose an unified algorithm based on the alternating
direction augmented Lagrangian method to solve the mean-$\rho$-Basel model and classical mean-$\rho$ model. The method is very simple and easy to implement; it
reduces the
original problem to one-dimensional optimization or convex quadratic programming subproblems
that may even have closed-form solutions; hence,
it is capable of solving large-scale problems that are difficult to solve using many other methods. For non-convex cases of the mean-$\rho$-Basel model, we  establish the first-order optimality of the limit points of the sequence generated by the method under some mild conditions.
%; for convex cases, the method is guaranteed to converge to globally optimal solutions.
Extensive  numerical results suggest that our method is promising for finding high-quality approximate optimal solutions, especially in non-convex cases.

% Appendix here
% Options are (1) APPENDIX (with or without general title) or
%             (2) APPENDICES (if it has more than one unrelated sections)
% Outcomment the appropriate case if necessary
%

\appendix

\section{Proof of Lemmas in Section \ref{subsubsec:algo_var}}
\subsection{Proof of Lemma \ref{lemma:opt-ADM-basel-x}}\label{app:proof-lemma-opt-ADM-basel-x}
\begin{proof}
%Let $v$ and $v^{[s]}$ be defined as in the lemma.
Since $\rho_{\BaselG}=\rho_{\BaselTwoDotFive}$,  problem \eqref{eq:ADM-x} is equivalent to
%\be \label{eq:ADM-x-v4}
%x^{(j+1)}  = \arg \min_{x}  \;
%  \| x - v\|^2, \text{ s.t. } \rho_{\BaselG}(x) \le C_0,
%\ee
\be \label{eq:ADM-x-v5}
\begin{aligned} \min_{x}  \;\; &  \phi(x):=\sum_{s=1}^m \| x^{[s]} - v^{[s]}\|^2 \\
 \text{ s.t. } \; \;&   \max\left\{ x^{[1]}_{(p_1)},
  \frac{k}{m_1}\sum_{s=1}^{m_1}x^{[s]} _{(p_s)}\right\}+\max\left\{
  x^{[m_1+1]}_{(p_{m_1+1})},
  \frac{\ell}{m_2}\sum_{s=m_1+1}^{m} x^{[s]}_{(p_s)}\right\} \le C_0.
  \end{aligned}
  \ee
Without loss of generality, assume that $v^{[s]}_1 \le v^{[s]}_2 \le \cdots \le v^{[s]}_{n_s}$, $s=1, \ldots, m$. Then, $(k_{s,1}, k_{s,2}, \ldots, k_{s,n_s})=(1, 2, \ldots, n_s)$. Let $x$ be an optimal solution to \eqref{eq:ADM-x-v5}.
If $x^{[s]}_i > x^{[s]}_j$ for some $i < j$, then since $v^{[s]}_i \leq v^{[s]}_j$,
switching the values of $x^{[s]}_i$ and $x^{[s]}_j$ will maintain the feasibility of $x$ without increasing $\phi(x)$. Thus, we can obtain an optimal solution $x$ that satisfies
$x^{[s]}_1 \le x^{[s]}_2 \le
 \cdots \le x^{[s]}_{n_s}$. In addition, it must hold that $x^{[s]}_i\leq v^{[s]}_i$ for all $i$; otherwise, if $x^{[s]}_i> v^{[s]}_i$ for some $i$, then setting $x^{[s]}_i=v^{[s]}_i$ will maintain the feasibility of $x$ but strictly reduce $\phi(x)$. Furthermore, it must hold that $x^{[s]}_j=v^{[s]}_j$ for all $j>p_s$; otherwise, if there is some $j>p_s$ such that $x^{[s]}_j<v^{[s]}_j$, setting $x^{[s]}_j=v^{[s]}_j$ will maintain the feasibility of $x$ but strictly reduce $\phi(x)$. Therefore, %Hence, we obtain
%that $y_{i} = w_i$, $i=p'+1,\ldots,n'$, and
problem \eqref{eq:ADM-x-v5} is equivalent to
\be \label{eq:ADM-x-v5-simplify}
\begin{aligned} \min_{x}  \;\; &  \phi(x)=\sum_{s=1}^m \| x^{[s]} - v^{[s]}\|^2 \\
 \text{ s.t. } \; \;&  x^{[s]}_{1}\leq x^{[s]}_{2}\leq \cdots\leq x^{[s]}_{p_s},\; s=1, \ldots, m,\\
 & x^{[s]}_{j}=v^{[s]}_{j},\; j=p_s+1, \ldots, n_s,\;s=1,\ldots,m,\\
 & \max\left\{ x^{[1]}_{p_1},
  \frac{k}{m_1}\sum_{s=1}^{m_1}x^{[s]} _{p_s}\right\}+\max\left\{
  x^{[m_1+1]}_{p_{m_1+1}},
  \frac{\ell}{m_2}\sum_{s=m_1+1}^{m} x^{[s]}_{p_s}\right\} \le C_0.
  \end{aligned}
  \ee
Hence, the optimal solution $x$ is given by \eqref{eq:solution-2}.
\end{proof}

\subsection{Proof of Lemma \ref{lemma:opt-ADM-basel3-x}}\label{app:proof-lemma-opt-ADM-basel3-x}
\begin{proof}
The problem \eqref{eq:ADM-x} with $\rho_{\BaselG}=\rho_{\BaselThree}$ is equivalent to
\be \label{eq:ADM-x-CVaR}
\begin{aligned} \min_{x}  \;\; &  \sum_{s=1}^m \| x^{[s]} - v^{[s]}\|^2 \\
 \text{ s.t. } \; \;&  \max\left\{\rho_{\CVaR}(x^{[m_1+1]}),
  \frac{\ell}{m_2}\sum_{s=m_1+1}^{m}\rho_{\CVaR}(x^{[s]})\right\}  \le C_0.
\end{aligned}
\ee
Let
$x=((x^{[1]})^{\top}, (x^{[2]})^{\top}, \ldots, (x^{[m]})^{\top})^{\top}$ be the optimal solution to \eqref{eq:ADM-x-CVaR}. Then apparently $x^{[s]}=v^{[s]}$, for $s=1, 2, \ldots, m_1$. By \eqref{equ:cvar_opt_rep},
\be\label{equ:cvar_rep}
\rho_{\CVaR}(x^{[s]})=\min_{t\in\mathbb{R}} \;\; t+\frac{1}{(1-\alpha)n_s}\sum_{i=1}^{n_s}(x^{[s]}_i-t)_+.
\ee
Then using \eqref{equ:cvar_rep}, it is easy to show that
$((x^{[m_1+1]})^{\top}, (x^{[m_1+2]})^{\top}, \ldots, (x^{[m]})^{\top})^{\top}$
is an optimal solution to \eqref{eq:ADM-x-CVaR-solu}, which completes the proof.
\end{proof}

\subsection{Proof of Lemma \ref{lemma:opt-ADM-y-variance}.}\label{sec:app_lemma_opt-ADM-y-variance}
\begin{proof}
The subproblem \eqref{eq:ADM-y} is equivalent to
\be\label{eq:ADM-y-v4}
 y^{(j+1)}  = \arg \min_{y }  \;
\rho(y) + \frac{\sigma_2}{2}  \| y - w^{(j)}\|^2,\ \text{where}\ w^{(j)} = - \left(\tilde Y u^{(j)} + \frac{1}{\sigma_2} \pi^{(j)} \right).
\ee
%where $w^{(j)} = - \left(\tilde Y u^{(j)} + \frac{1}{\sigma_2} \pi^{(j)} \right)$.
The result follows by using the definition of $\rho_{\Variance}$ given in
\eqref{eq:rho-var} and  Sherman--Morrison--Woodbury formula.
%\beaa
%y^{(j+1)}&=& \sigma_2
%\left(\left(\sigma_2+\frac{2}{n'}\right)I -\frac{2}{(n')^2}\one\one^{\top}\right)^{-1}
%w^{(j)} \\
%&=& \left(\sigma_2+\frac{2}{n'}\right)^{-1} \left( \sigma_2 w^{(j)} + 2\frac{\one^{\top} w^{(j)}}{(n')^2}
% \one  \right)
%\eeaa
\end{proof}

\subsection{Proof of Lemma \ref{lemma:opt-ADM-y-var}.}\label{sec:app_lemma_opt-ADM-y-var}
\begin{proof}
The problem \eqref{eq:ADM-y} with $\rho=\rho_{\VaR}$ becomes
\be  \min_{y}  \;
    \psi(y)=
y_{(p')} + \frac{\sigma_2}{2} \| y - w\|^2,\ \text{where}\ p':=\lceil \alpha n'\rceil. \label{equation:ADM-y-var}\ee
Without loss of generality, assume that $w_1 \le w_2 \le \ldots \le w_{n'}$. Then $(k_1, k_2, \ldots, k_{n'})=(1, 2, \ldots, n')$. Let $y$ be an optimal solution of \eqref{equation:ADM-y-var}.
%\comm{The following paragraph can be removed; it can be replaced by the paragraph in blue.}
%Let \begin{eqnarray}\label{eq:x-new3}
%q = \argmax\limits_{1\leq q\leq p'} y_q\; \mbox{ and } \; h =
%\argmin\limits_{h\geq p'+1\mbox{\ or\ }h=q} y_h.
%\end{eqnarray}
%Suppose for the sake of contradiction that $y_{q}$ is not the $p'$th smallest component of $y$, then
%\be \label{eq:x-new4}
% y_{q} > y_{h}\; \mbox{ and } \;
%w_{h} \ge  w_{p'}\ge  w_{q}. \ee
%We will show that there exists $z$ such that $\psi(y)>\psi(z)$, which leads to a contradiction. In fact, there are two cases: (i) If $y_{q} > w_{q}$, then we define $z$ as $z_{q} = w_{q}$ and $z_{j}=y_{j}$ for $\forall j\neq q$. Apparently $\psi(y)>\psi(z)$. (ii) If $y_{q} \le w_{q}$, then we define $z$ as $z_{q} = y_{h}$, $z _{h} = w _{h}$ and  $z_{j}=y_{j}$ for $\forall j\neq q, h$. It follows from
%from \eqref{eq:x-new4} that
%$w_{h} \geq w_{q} \geq  y_{q} > y _{h}$; hence, $\psi(y)>\psi(z)$. Therefore, $y_{q}$ must be the $p'$th smallest component of $y$, and $y_{q} \leq y_{h}$.
If $y_i > y_j$ for some $i < j$, then since $w_i\leq w_j$, switching the values of $y_i$ and $y_j$ will not increase $\psi(y)$. Thus, we can obtain an optimal solution $y$ that satisfies $y_1 \le y_2 \le
 \cdots \le y_{n'}$. In addition, the optimal solution $y$ must satisfy that $y_i\leq w_i$ for all $i$; otherwise, if $y_i>w_i$ for some $i$, then setting $y_i=w_i$ will strictly reduce $\psi(y)$. Furthermore, it must hold that $y_j=w_j$ for all $j=p'+1, p'+2, \ldots, n'$; otherwise, if there is some $j>p'$ such that $y_j<w_j$, setting $y_j=w_j$ will strictly reduce $\psi(y)$. Therefore, %Hence, we obtain
%that $y_{i} = w_i$, $i=p'+1,\ldots,n'$, and
 problem \eqref{equation:ADM-y-var} is equivalent to
%$(y_1, \cdots, y_{p'})$ is the optimal
%solutions to
\be \label{eq:ADM-x-v3}
\begin{aligned}  \min_{y}  \; \; &
  y_{p'} + \frac{\sigma_2}{2} \sum_{i=1}^{p'} (y_i - w_i)^2\\
  \text{s.t. }\;\; & y_i \le y_{p'}, \; i=1, \cdots, p'-1,\\
%  & \revise{y_1\leq y_2\leq \cdots \leq y_{p'},}\\
  & y_j=w_j,\; j=p'+1, p'+2, \ldots, n'.
 \end{aligned}
 \ee
The KKT conditions of \eqref{eq:ADM-x-v3} are
\bea
\label{eq:KKT-1}
%y_1\leq y_2\leq \cdots & \leq & y_{p'},\\
y_i & \le & y_{p'}, \; i=1, \ldots, p'-1, \\
\label{eq:KKT-2}
\sigma_2 (y_i-w_i) + \bar \pi_i &=&0,\; i=1, \ldots, p'-1, \\
\label{eq:KKT-2dot5}
\bar \pi_i (y_{p'} - y_i) &=& 0,\; i=1, \ldots, p'-1,\\
\label{eq:KKT-3}
\sigma_2 (y_{p'} -w_{p'}) + 1 -  \sum_{j=1}^{p'-1} \bar \pi_j &=& 0,\\
\label{eq:KKT-4}
\bar \pi_i&\geq& 0,\; i=1, \ldots, p'-1.
\eea
Since problem \eqref{eq:ADM-x-v3} is convex, the KKT conditions are also sufficient for the optimality of $y$.
The equations \eqref{eq:KKT-2} and \eqref{eq:KKT-2dot5} imply that for each $i=1,\ldots,p'-1$, either $y_i=w_i$ (if $\bar \pi_i=0$) or $y_i=y_{p'}$ (if $\bar \pi_i>0$). Since $y_1\leq \cdots \leq y_{p'}$, it follows that there exists $1\leq i^*\leq p'$ such that $y_j=w_j$ for $j<i^*$, $y_j=y_{p'}$ for $j\geq i^*$, and
$w_{i^*-1}<y_{p'}$. Then by \eqref{eq:KKT-3}, we have $y_{p'} = \gamma_{i^*}$, where $\gamma_i$ is defined in \eqref{equ:opt_x_order-v2}. It follows from \eqref{eq:KKT-2} and \eqref{eq:KKT-4} that $y_j\leq w_j$, $j=1, \ldots, p'$. Hence, we have $\gamma_{i^*}=y_{i^*}\leq w_{i^*}$. Therefore, $i^*$ should satisfy $w_{i^*-1}<\gamma_{i^*}\leq w_{i^*}$, which completes the proof.
\end{proof}

\subsection{Proof of Lemma \ref{lemma:opt-ADM-y-cvar}}\label{app:proof-lemma-opt-ADM-y-cvar}
\begin{proof}
With $\rho=\rho_{\CVaR}$, it follows from \eqref{equ:cvar_rep} that problem \eqref{eq:ADM-y} is equivalent to
  %There even exists closed-form solutions in the VaR case.
 %In addition, we let $v_0=-\infty$.
\be  \min_{t,y}  \;
    \phi(t,y)=
    t + \frac{1}{(1-\alpha)n'}\sum_{i=1}^{n'} (y_i-t)_+ + \frac{\sigma_2}{2} \| y - w^{(j)}\|^2,\label{equ:xh-CVaR}
\ee
where $x_+:=\max(x, 0)$. For any fixed $t$, the optimal $y$ that
minimizes $\phi(t, y)$ is $y(t)$ defined in \eqref{equ:xh-CVaR-yt}. Hence, the result follows. \end{proof}

 \section{Proof of Proposition \ref{lm:1}.}\label{app:proof_lm1}
\begin{proof}
We first show that for any fixed $1\leq p\leq n$, $f(x):=x_{(p)}$ is globally Lipschitz.
For any given $x\in\mathbb{R}^n$, define
%let $L_{x_{(p)}}$, $E_{x_{(p)}}$ and
%$G_{x_{(p)}}$ be the set of all indices whose corresponding entries
%of $x$ are strictly less than, equal to, and strictly greater than $x_{(p)}$,
%respectively; that is,
\begin{equation}\label{equ:LEG}
L_{x_{(p)}}:= \{i\mid
x_i<x_{(p)}\},\ E_{x_{(p)}}:=\{i\mid x_i=x_{(p)}\}, \ G_{x_{(p)}}:= \{i\mid x_i>x_{(p)}\}.
\end{equation}
%\end{definition}
%It is quite clear that \begin{eqnarray} |\{i\mid x_i < x_{(p)}\}| \leq p-1\,\mbox{and}\,|\{i\mid x_i > x_{(p)}\}| \leq n-p.\end{eqnarray}
%Given $\forall\,x\in\mathbb{R}^n$,
%There are two cases:
%
%Case (i): $L_{x_{(p)}}\cup G_{x_{(p)}}\neq \emptyset$. Then $\delta_x :=
%\min\limits_{i\in L_{x_{(p)}}\cup G_{x_{(p)}}} |x_i - x_{(p)} |>0$.  Then for any
%$y,z\in B(x,\frac{\delta_x}{6})$, it can be  proved that
%$\delta_z \geq \frac{2\delta_x}{3}$ and $y\in
%B(z,\frac{\delta_x}{3}) \subseteq B(z,\frac{\delta_z}{2}) $. Hence,
%we  have
%\begin{eqnarray}\label{eq:tobechecked0410-1}
%y_i < y_j,\quad \forall\, i\in L_{z_{(p)}},\,j\in E_{z_{(p)}};\qquad
%y_i > y_j,\quad \forall\, i\in G_{z_{(p)}},\,j\in E_{z_{(p)}}.
%\end{eqnarray}
%It follows from \eqref{eq:tobechecked0410-1} that
%$E_{y_{(p)}}\subset E_{z_{(p)}}$ and consequently
%\begin{eqnarray}\label{eq:new42-1}
%y_{(p)}-z_{(p)} = y_i - z_j = y_i - z_i + z_i - z_j = y_i -
%z_i,\quad \forall\,i\in E_{y_{(p)}},\,\forall\,j\in E_{z_{(p)}}.
%\end{eqnarray}
%Furthermore, $|y_{(p)}-z_{(p)}| = |y_i - z_i| \leq \|y-z\|$, which
%means that $f(x)=x_{(p)}$ is Lipschitz near $x$. %Since $x$ is arbitrary, the local Lipschitzness of $\rho_{\VaR}(x)$ is proved.
%
%Case (ii): $L_{x_{(p)}}\cup G_{x_{(p)}}=\emptyset$, namely, all the
%components of $x$
%take the same value.
For any given $y, z\in\mathbb{R}^n$, without loss of generality, assume that $y_{(p)} \leq z_{(p)}$. It follows from the definition of $L_{y_{(p)}}$ and $E_{y_{(p)}}$ that the number of elements of $L_{y_{(p)}}\cup E_{y_{(p)}}$ is strictly larger than that of $L_{z_{(p)}}$. Therefore, the set $I:=(L_{y_{(p)}}\cup E_{y_{(p)}})\cap (E_{z_{(p)}}\cup G_{z_{(p)}})$ is not empty. Choose any $i\in I$. Then $y_i\leq y_{(p)}$ and $z_i\geq z_{(p)}$. Hence, $|y_{(p)} - z_{(p)}|=z_{(p)}-y_{(p)}\leq z_i - y_i \leq \|y-z\|$, which establishes that $\rho_{\VaR}(x)$ is globally Lipschitz.

Using the inequality $|\max(a, b)-\max(c, d)|\leq |a-c|+|b-d|$ for $\forall a, b, c, d\in\mathbb{R}$, it can be shown that the maximum of two globally  Lipschitz functions is also globally Lipschitz. Since $\rho_{\CVaR}$, $\rho_{\BaselTwoDotFive}$, and $\rho_{\BaselThree}$ are all finite linear combination of (maximum of) globally
Lipschitz functions, it follows that they are all globally Lipschitz.
%Note that $\rho_{\BaselTwoDotFive}(x)$ %and $\rho_{\TTCE}(x)$ are
%is a composite function of finite linear combinations of $\rho_{\VaR}(x)$ (with different $p$) and
% maximization. Since the maximization of two
%locally  Lipschitz functions and the summation of a finite number of locally
%Lipschitz functions are still
%locally Lipschitz, we can prove that
%$\rho_{\BaselTwoDotFive}(x)$
%%and $\rho_{\TTCE}(x)$ are
%is locally  Lipschitz.
\end{proof}

%\begin{proof}
%\revise{We first show that for any fixed $1\leq p\leq n$, $f(x):=x_{(p)}$ is locally Lipschitz. For any given $x, y\in\mathbb{R}^n$, let $(i_1, i_2, \ldots, i_n)$ and $(j_1, j_2, \ldots, j_n)$ be two permutations of $(1, 2, \ldots, n)$ such that $x_{i_1}\leq x_{i_2}\leq \cdots \leq x_{i_n}$ and $y_{j_1}\leq y_{j_2}\leq \cdots \leq y_{j_n}$. Then
%  \begin{equation}\label{equ:norm}
%  |x_{(p)}-y_{(p)}|=|x_{i_p}-y_{j_p}|\leq (\sum_{k=1}^n (x_{i_k}-y_{j_k})^2)^{1/2}.
%  \end{equation}
%Let $P_y:=\{(y_{l_1}, y_{l_2}, \ldots, y_{l_n})\mid (l_1, l_2, \ldots, l_n)\ \text{is a permutation of}\ (1, 2, \ldots, n)\}$ and define the function
%  $H(z):=(\sum_{k=1}^n (x_{i_k}-z_{k})^2)^{1/2}, z\in P_y.$
%    For any $k_1<k_2$, if $z_{k_1}>z_{k_2}$, then, since $x_{i_{k_1}}\leq x_{i_{k_2}}$, swapping the values of $z_{k_1}$ and $z_{k_2}$ reduces $H(z)$. Hence, $H(z)$ is minimized at $(y_{j_1}, y_{j_2}, \cdots, y_{j_n})$, which in combination with \eqref{equ:norm} implies that $|x_{(p)}-y_{(p)}|\leq H((y_{j_1}, y_{j_2}, \ldots, y_{j_n}))\leq H((y_{i_1}, y_{i_2}, \ldots, y_{i_n}))=\|x-y\|$.}
%\end{proof}

\section{Proof of Theorem \ref{thm:cvg-feasi}.}\label{app:proof-theorem-thm:cvg-feasi}
Let $\mathrm{conv}(A)$ denote the convex hull of $A$. First, we prove the following two propositions.
\begin{proposition}\label{prop:clarke_gradient_var}
Let $e_i$ be the $i$th standard basis vector in $\R^n$. The Clarke's generalized gradient of $f(x)=x_{(p)}$ is given by
\begin{align}\label{equ:clarde_gradient_var}
\bar \partial x_{(p)}=\mathrm{conv}\{e_i\mid i\in
E_{x_{(p)}}\},\ \text{where}\ E_{x_{(p)}} := \{i\mid x_i=x_{(p)}\}.
\end{align}
\end{proposition}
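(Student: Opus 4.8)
The plan is to compute $\bar\partial x_{(p)}$ directly from the limiting-gradient characterization of Clarke's generalized gradient, using the fact, established in Proposition \ref{lm:1}, that $f(x):=x_{(p)}$ is globally Lipschitz and hence (by Rademacher's theorem) differentiable almost everywhere, so that
$\bar\partial f(x) = \mathrm{conv}\{\, v : v=\lim_k \nabla f(x^{(k)}) \text{ for some } x^{(k)}\to x \text{ with } f \text{ differentiable at each } x^{(k)}\,\}$.

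First I would fix $x$ and localize. Writing $\ell:=|L_{x_{(p)}}|$ and $k:=|E_{x_{(p)}}|$ with the sets as in \eqref{equ:LEG}, a counting argument gives $\ell< p\le \ell+k$, so $q:=p-\ell\in\{1,\dots,k\}$. Since the inequalities $x_i<x_j$ for $i\in L_{x_{(p)}}$, $j\notin L_{x_{(p)}}$, and $x_i>x_j$ for $i\in G_{x_{(p)}}$, $j\notin G_{x_{(p)}}$, are strict, they persist on a neighborhood $U$ of $x$; consequently, for every $x'\in U$ the sorted positions $1,\dots,\ell$ are occupied by $L_{x_{(p)}}$-indices, positions $\ell+1,\dots,\ell+k$ by $E_{x_{(p)}}$-indices, and the rest by $G_{x_{(p)}}$-indices. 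Hence the $p$-th smallest component of $x'$ is attained at an index in $E:=E_{x_{(p)}}$, and in fact $f(x')=f(x)+g\big((x'-x)_E\big)$, where $g:\R^{E}\to\R$ is the $q$-th order statistic on the $k$ coordinates indexed by $E$; in particular $f$ is independent of the coordinates outside $E$ on $U$.

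Next I would exploit this local representation twice. For the inclusion ``$\subseteq$'': $f$ is differentiable at $x'\in U$ whenever $g$ is differentiable at $(x'-x)_E$ — for instance whenever the $E$-coordinates of $x'$ are pairwise distinct — and then $\nabla f(x')=e_j$, where $j\in E$ is the unique index attaining the $p$-th smallest component of $x'$; so every gradient of $f$ at a differentiability point in $U$, and therefore every limit of such gradients along $x^{(k)}\to x$, lies in the finite set $\{e_j:j\in E\}$. For the inclusion ``$\supseteq$'': given $j\in E$, pick pairwise-distinct reals $(c_i)_{i\in E}$ with $c_j$ equal to their $q$-th smallest, set $x^{(k)}:=x+\varepsilon_k c$ on $E$ (unchanged off $E$) with $\varepsilon_k\downarrow 0$; for $k$ large $x^{(k)}\in U$, its $E$-coordinates are distinct, its $p$-th smallest component is attained exactly at $j$, and $\nabla f(x^{(k)})=e_j$. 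Thus the set of attainable gradient limits equals $\{e_i:i\in E_{x_{(p)}}\}$, and taking the convex hull yields \eqref{equ:clarde_gradient_var}.

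The only non-routine point is the local reduction: one must verify carefully that near a tie the function genuinely collapses to a lower-dimensional order statistic in the tied coordinates, so that no index outside $E_{x_{(p)}}$ can appear in a gradient limit, and that all $k$ vectors $e_i$, $i\in E_{x_{(p)}}$, are actually realized irrespective of the value of $q=p-\ell\in\{1,\dots,k\}$ — the extreme cases $q=1$ and $q=k$ are $\min$ and $\max$, but the intermediate order statistics behave the same way. Everything else is a direct application of the limiting-gradient formula for $\bar\partial$.
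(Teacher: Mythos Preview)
Your proof is correct and takes a genuinely different route from the paper's. The paper computes the Clarke generalized directional derivative $f^\circ(x;d)$ directly, showing via a careful sequence argument that $f^\circ(x;d)=\max\{d_i: i\in E_{x_{(p)}}\}$, and then recovers $\bar\partial f(x)$ from its support-function characterization $\bar\partial f(x)=\{\xi:\xi^\top d\le f^\circ(x;d)\text{ for all }d\}$, invoking the separating hyperplane theorem for the reverse inclusion. You instead use the Rademacher/limiting-gradient description $\bar\partial f(x)=\mathrm{conv}\{\lim_k\nabla f(x^{(k)})\}$ (Clarke, Theorem~2.5.1), combined with a clean local reduction showing that near $x$ the function collapses to the $q$-th order statistic on the tied block $E_{x_{(p)}}$. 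Your approach is arguably more direct and sidesteps the somewhat delicate $\limsup$ computation the paper performs; the paper's approach has the side benefit of producing an explicit formula for $f^\circ(x;d)$, though the paper does not use it elsewhere. Both routes are standard tools for computing Clarke subdifferentials, and both handle the key subtlety---that every index in $E_{x_{(p)}}$, and only those, contributes a basis vector---correctly.
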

\begin{proof}
For any $x\in\mathbb{R}^n$ and $d\in\mathbb{R}^n$, let $f^\circ(x;d)$ be the
Clarke's generalized directional derivative at $x$ along the direction $d$, i.e.,
\begin{eqnarray}\label{eq:tobecheckednew2}
f^\circ(x;d) := \limsup_{y\rightarrow
x,\,t\rightarrow0^+} \frac{f(y+td)-f(y)}{t}.
\end{eqnarray}
Define $d_{\max}(z):=\max\{d_i\mid i\in
E_{z_{(p)}}\}, z\in\mathbb{R}^n$. First, we will show that
\begin{equation}\label{equ:cgd}
f^\circ(x;d)=d_{\max}(x).
\end{equation}
Indeed, suppose that $L_{x_{(p)}}=\{i_{1}, \ldots, i_{k}\}$ and $E_{x_{(p)}}=\{i_{k+1}, \ldots, i_{k+l}\}$. Then, $k+1\leq p\leq k+l$.
By the definitions in \eqref{equ:LEG}, there exists $\eta>0$ such that for any $(y, t)\in B(x, \eta)\times (0, \eta)$ it holds that
$(y+td)_{i}<(y+td)_{j}<(y+td)_{k}\ \text{and}\ y_{i}<y_{j}<y_{k},\ \text{for}\ \forall i\in L_{x_{(p)}}, \forall j\in E_{x_{(p)}}, \forall k\in G_{x_{(p)}}$.
For any such $(y, t)$, $y_{(p)}=(y_{i_{k+1}}, y_{i_{k+2}}, \ldots, y_{i_{k+l}})_{(p-k)}$ and $(y+td)_{(p)}=((y+td)_{i_{k+1}}, (y+td)_{i_{k+2}}, \ldots, (y+td)_{i_{k+l}})_{(p-k)}$.
Suppose without loss of generality that $y_{i_{k+1}}\leq y_{i_{k+2}}\leq \cdots \leq y_{i_{k+l}}$. Then $y_{(p)}=y_{i_p}$.
Let $j'\leq p-k$ be the index such that
$(y+td)_{i_{k+j'}}=\max\{(y+td)_{i_{k+1}}, (y+td)_{i_{k+2}}, \ldots, (y+td)_{i_{p}}\}$. Then $(y+td)_{(p)}=((y+td)_{i_{k+1}}, (y+td)_{i_{k+2}}, \ldots, (y+td)_{i_{k+l}})_{(p-k)}\leq \max\{(y+td)_{i_{k+1}}, (y+td)_{i_{k+2}}, \ldots, (y+td)_{i_{p}}\}=(y+td)_{i_{k+j'}}$. Furthermore, $j'\leq p-k$ implies that
$y_{i_p}\geq y_{i_{k+j'}}$. Therefore,
\begin{equation}\label{equ:c2}
\frac{f(y+td)-f(y)}{t}=\frac{(y+td)_{(p)}-y_{i_p}}{t}\leq \frac{(y+td)_{i_{k+j'}}-y_{i_{k+j'}}}{t}= d_{i_{k+j'}}\leq d_{\max}(x).
\end{equation}
Since \eqref{equ:c2} holds for any $(y, t)\in B(x, \eta)\times (0, \eta)$, it follows that
\begin{equation}\label{eq:tobechecked0410-2}
  f^\circ(x;d)\leq d_{\max}(x).
\end{equation}
On the other hand, suppose $d_{i_{k+j^*}}=d_{\max}(x)$. Define $\zeta:=1+\max\{|d_i|\mid i\in E_{x_{(p)}}\}$. There exists a sequence $y^{(m)}\to x$ as $m\to \infty$ such that for all $m$ it holds that $y^{(m)}_{(p)}=(y^{(m)}_{i_{k+1}}, y^{(m)}_{i_{k+2}}, \ldots, y^{(m)}_{i_{k+l}})_{(p-k)}=y^{(m)}_{i_{k+j^*}}$ and $\min\{|y^{(m)}_{i_{k+a}}-y^{(m)}_{i_{k+b}}|\mid a\neq b, 1\leq a, b\leq l\}=2^{-m}\zeta$. Define $t^{(m)}:=2^{-m-2}$. Then
\begin{equation}\label{equ:seq_lim}
  \frac{f(y^{(m)}+t^{(m)}d)-f(y^{(m)})}{t^{(m)}}=d_{i_{k+ j^*}}=d_{\max}(x),\ \forall m.
\end{equation}
Combining \eqref{equ:seq_lim} with \eqref{eq:tobechecked0410-2}, we obtain \eqref{equ:cgd}.

Second, we will show that \eqref{equ:clarde_gradient_var} holds. By definition, $\bar \partial f(x) := \{\xi\in\mathbb{R}^n\mid \xi^\top d \leq f^\circ(x;d),\,\forall d\in
\mathbb{R}^n\}$.
On one hand, for $\forall \xi \in{\rm conv}\{e_i\mid i\in
E_{x_{(p)}}\}$, $\xi$ can be represented by $\xi = \sum_{i\in
E_{x_{(p)}}}c_i e_i$, where $c_i\geq 0$ for all $i\in
E_{x_{(p)}}$ and $\sum_{i\in E_{x_{(p)}}}c_i = 1$. Hence,
$\xi^\top d \leq d_{\max}(x)=f^\circ(x;d)$ for $\forall\,d\in \mathbb{R}^n$, which implies that ${\rm conv}\{e_i\mid i\in E_{x_{(p)}}\}\subseteq \bar\partial x_{(p)}$. On the other hand, for $\forall\, \xi\notin{\rm
conv}\{e_i\mid i\in E_{x_{(p)}}\}$, it follows from separating hyperplane theorem that there exists $d\in
\mathbb{R}^n$ and $\alpha\in \mathbb{R}$ such that
$\xi^\top d > \alpha \geq \sup_{\mu\in{\rm
conv}\{e_i\mid i\in E_{x_{(p)}}\}}\mu^\top d =
d_{\max}(x)=f^\circ(x;d)$, which implies $\xi\notin \bar\partial x_{(p)}$. Therefore, $\bar\partial x_{(p)}\subseteq {\rm conv}\{e_i\mid i\in E_{x_{(p)}}\}$. Hence, \eqref{equ:clarde_gradient_var} follows.
%i.e. $ \partial x_{(p)}
%\subset{\rm conv}\left\{e_i\mid i\in E_{x_{(p)}}\right\}$. Together
%with \eqref{eq:tobechecked0410-3}, \eqref{equ:clarde_gradient_var} holds and
%we complete the proof.
\end{proof}
\begin{proposition}\label{prop:bounded_away_zero}
For $\rho_{\BaselG}\in\{\rho_{\BaselTwoDotFive}, \rho_{\BaselThree}\}$, there exists a closed and bounded set $\mathcal{C}\subset \mathbb{R}^n_+$ such that $0\notin \mathcal{C}$ and $\bar \partial \rho_{\BaselG}(x)\subset \mathcal{C}$ for any $x\in \mathbb{R}^n$.
%there exists $\delta>0$ such that for any $x\in\mathbb{R}^n$ and any $y\in \bar \partial \rho_{\BaselG}(x)$ it holds that $\|y\|^2\geq \delta$.
\end{proposition}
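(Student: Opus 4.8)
The plan is to produce a single fixed box $\mathcal C$ that contains $\bar\partial\rho_{\BaselG}(x)$ for \emph{every} $x\in\R^n$. Concretely, I will show that every $\xi\in\bar\partial\rho_{\BaselG}(x)$ satisfies (a) $\xi\ge 0$ componentwise, (b) $\one^\top\xi\ge 1$, and (c) $\|\xi\|_\infty\le M$ for a constant $M$ depending only on $n_1,\dots,n_m$, $m_1$, $m_2$, $k$, $\ell$, $\alpha$. Once (a)--(c) are in hand, the set $\mathcal C:=\{\xi\in\R^n_+ : \one^\top\xi\ge 1,\ \|\xi\|_\infty\le M\}$ is closed, bounded, contained in $\R^n_+$, satisfies $0\notin\mathcal C$ because $\one^\top 0=0<1$, and contains $\bar\partial\rho_{\BaselG}(x)$ for all $x$, which is exactly the assertion (nonemptiness of $\bar\partial\rho_{\BaselG}(x)$ comes from global Lipschitzness, Proposition \ref{lm:1}).

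The first step is to record two elementary building blocks. Because $x\mapsto x^{[s]}(u)_{(p_s)}$ depends only on the $n_s$ coordinates of block $s$, Proposition \ref{prop:clarke_gradient_var} applied within that block (equivalently, the Clarke chain rule with the coordinate projection) gives $\bar\partial\, x^{[s]}_{(p_s)}(x)=\mathrm{conv}\{e_i : i\in E^{[s]}\}$, where $E^{[s]}:=\{i\text{ in block }s : x_i=x^{[s]}_{(p_s)}\}$; hence every such gradient $v$ is supported in block $s$ with $v\ge 0$, $\one^\top v=1$, $\|v\|_\infty\le 1$. Similarly, $\rho_{\CVaR}(x^{[s]})$ is convex (e.g.\ by the partial-minimization representation \eqref{equ:cvar_opt_rep}), so its Clarke gradient equals its convex subdifferential, and from \eqref{equ:cvar_opt_rep} --- or from the standard dual/risk-envelope representation of the sample CVaR --- every subgradient $v$ is supported in block $s$ with $v\ge 0$, $\one^\top v=1$, $\|v\|_\infty\le\frac{1}{(1-\alpha)n_s}$. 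Pinning down this last bound is the one point needing a short direct argument rather than pure calculus, and I expect it to be the main (minor) obstacle.

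The second step assembles the Basel measures via the Clarke calculus rules $\bar\partial(\lambda f)=\lambda\,\bar\partial f$ for $\lambda>0$, $\bar\partial(f+g)\subseteq\bar\partial f+\bar\partial g$, and $\bar\partial\max\{f,g\}(x)\subseteq\mathrm{conv}\bigl(\bar\partial f(x)\cup\bar\partial g(x)\bigr)$. Write $\rho_{\BaselTwoDotFive}=A+B$ with $A=\max\{x^{[1]}_{(p_1)},\,\frac{k}{m_1}\sum_{s=1}^{m_1}x^{[s]}_{(p_s)}\}$ and $B=\max\{x^{[m_1+1]}_{(p_{m_1+1})},\,\frac{\ell}{m_2}\sum_{s=m_1+1}^{m}x^{[s]}_{(p_s)}\}$. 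A gradient of the second argument of $A$ is a $\tfrac{k}{m_1}$-scaled sum of $m_1$ order-statistic gradients with pairwise disjoint block supports, hence nonnegative with $\one^\top$-value $\tfrac{k}{m_1}m_1=k$ and $\|\cdot\|_\infty\le\tfrac{k}{m_1}$; therefore any $\xi_A\in\bar\partial A(x)$, being a convex combination of such a vector with a gradient of $x^{[1]}_{(p_1)}$ (with $\one^\top$-value $1$ and $\|\cdot\|_\infty\le 1$), satisfies $\xi_A\ge0$, $\one^\top\xi_A\in[1,k]$, $\|\xi_A\|_\infty\le\max\{1,k/m_1\}$, and likewise $\xi_B\ge0$, $\one^\top\xi_B\in[1,\ell]$, $\|\xi_B\|_\infty\le\max\{1,\ell/m_2\}$. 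Since $A$ and $B$ involve disjoint blocks, every $\xi=\xi_A+\xi_B\in\bar\partial\rho_{\BaselTwoDotFive}(x)$ obeys $\xi\ge0$, $\one^\top\xi=\one^\top\xi_A+\one^\top\xi_B\ge 2$, and $\|\xi\|_\infty=\max\{\|\xi_A\|_\infty,\|\xi_B\|_\infty\}\le\max\{1,k/m_1,\ell/m_2\}$. For $\rho_{\BaselThree}=\max\{\rho_{\CVaR}(x^{[m_1+1]}),\,\frac{\ell}{m_2}\sum_{s=m_1+1}^{m}\rho_{\CVaR}(x^{[s]})\}$ the identical argument, now built from the CVaR gradient, gives for every $\xi\in\bar\partial\rho_{\BaselThree}(x)$ that $\xi\ge0$, $\one^\top\xi\in[1,\ell]$, and $\|\xi\|_\infty\le\frac{\max\{1,\ell/m_2\}}{1-\alpha}\max_{m_1<s\le m}\frac{1}{n_s}$. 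Taking $M$ to be the larger of the two norm bounds yields (a)--(c), and the proposition follows.
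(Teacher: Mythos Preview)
Your proposal is correct and follows essentially the same approach as the paper: both arguments bound $\bar\partial\rho_{\BaselG}$ via the Clarke calculus rules for sums and pointwise maxima applied to the order-statistic (resp.\ sample-CVaR) building blocks, and then observe that the resulting superset is compact and misses $0$. Your packaging via the explicit inequalities $\xi\ge 0$, $\one^\top\xi\ge 1$, $\|\xi\|_\infty\le M$ makes the exclusion of $0$ slightly more transparent than the paper's Minkowski-sum description of $\mathcal C$, but the substance is the same.
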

\begin{proof}
%The Clarke's Generalized Gradients of $\rho_{\VaR}(x)$ and $\rho_{\BaselTwoDotFive}(x)$
% = {\rm conv}\left\{\bigcup\limits_{i\in E_{x_{(p)}}} \{e_i\}\right\},
%$e_i$ is the vector with a one on the $i$th position and zeros everywhere else.
Let $e^{[s]}_i$ be the $i$th standard basis in $\mathbb{R}^{n_s}$ and
$E_{x^{[s]}_{(p)}} := \{1\leq i\leq n_s\mid x^{[s]}_i=x^{[s]}_{(p)}\}$. By similar argument in the proof of Proposition \ref{prop:clarke_gradient_var}, it can be shown that
\begin{equation}\label{equ:gpartial_var_s}
\bar \partial x^{[s]}_{(p)}=\mathrm{conv}\{(0,\ldots, 0, e^{[s]}_i, 0, \ldots, 0)\mid i\in
E_{x^{[s]}_{(p)}}\}.
\end{equation}
Then by Theorem 2.3.3 and Theorem 2.3.10 in \citet{Clarke1990}, we have
\begin{eqnarray}\label{eq:clarkeggrad2}
\bar \partial \rho_{\BaselTwoDotFive}(x)\subseteq \mathrm{conv}\left(\cup_{i\in I_1(x)}\bar \partial f_i(x)\right)  +
\mathrm{conv}\left(\cup_{i\in I_2(x)}\bar \partial f_i(x)\right),
\end{eqnarray}
where
%\begin{description}
%\item[(1)]
$f_1(x) = x^{[1]}_{(p_1)}$, and $\bar \partial f_1(x)= \mathrm{conv}\{(e^{[1]}_i, 0, \ldots, 0)\mid i\in
E_{x^{[1]}_{(p_1)}}\}$;
%\item[(2)]
$f_2(x) = \frac{k}{m_1}\sum_{s=1}^{m_1}x^{[s]}_{(p_s)}$, and
$\bar \partial f_2(x)\subseteq
\frac{k}{m_1}\sum_{s=1}^{m_1}\mathrm{conv}\{(0, \ldots, 0, e^{[s]}_i, 0, \ldots, 0)\mid i\in
E_{x^{[s]}_{(p_s)}}\}$;
%\item[(3)]
$I_1(x):=\{i\mid \max\{f_1(x),\linebreak f_2(x)\}=f_i(x),\; i\in\{1,2\}\}$;
%\item[(4)]
$f_3(x) = x^{[m_1+1]}_{(p_{m_1+1})}$, and $\bar \partial f_3(x)= \mathrm{conv}\{(0,\ldots, 0, e^{[m_1+1]}_i, 0, \ldots, 0)\mid i\in
E_{x^{[m_1+1]}_{(p_{m_1+1})}}\}$;
%\item[(5)]
$f_4(x) = \frac{l}{m_2}\sum_{s=m_1+1}^{m}x^{[s]}_{(p_s)}$, and
$\bar \partial f_4(x)\subseteq
\frac{l}{m_2}\sum_{s=m_1+1}^{m}\mathrm{conv}\{(0, \ldots, 0, e^{[s]}_i, 0,\linebreak \ldots, 0)\mid
i\in E_{x^{[s]}_{(p_s)}}\}$;
%\item[(6)]
$I_2(x):=\{i\mid \max\{f_3(x),f_4(x)\}=f_i(x),\; i\in\{3,4\}\}$.
%\end{description}
Define $A_1:= \mathrm{conv}\{(e^{[1]}_i, 0, \ldots, 0)\mid 1\leq i\leq n_1\}$,
$A_2: =
\frac{k}{m_1}\sum_{s=1}^{m_1}\mathrm{conv}\{(0, \ldots, 0, e^{[s]}_i, 0, \ldots, 0)\mid 1\leq i\leq n_s\}$,
$A_3:= \mathrm{conv}\{(0,\ldots, 0, e^{[m_1+1]}_i, 0, \ldots, 0)\mid 1\leq i\leq n_{m_1+1}\}$, $A_4 := \frac{l}{m_2}\sum_{s=m_1+1}^{m}\mathrm{conv}\{(0, \ldots, 0, e^{[s]}_i, 0, \ldots, 0)\mid
1\leq i\leq n_s\}$, and $\mathcal{C}:=\mathrm{conv}\left(A_1\cup A_2\right)  +
\mathrm{conv}\left(A_3\cup A_4\right)$. Then, $\mathcal{C}$ is compact and $0\notin\mathcal{C}$; in addition, it follows from
\eqref{eq:clarkeggrad2} that $\bar\partial \rho_{\BaselTwoDotFive}(x)\subset \mathcal{C}$ for any $x\in\mathbb{R}^n$.
%it is easy to see that there exists such a set $\mathcal{C}$ such that the conclusion of the proposition holds for $\rho_{\BaselTwoDotFive}$.
%there exists a bounded closed set $\mathcal{C}\subset \mathbb{R}^n_+$ such that $0\notin \mathcal{C}$ and $\bar \partial \rho_{\BaselTwoDotFive}(x)\subseteq \mathcal{C}$ for all $x\in \mathbb{R}^n$. Hence, $\min_{y\in \bar \partial \rho_{\BaselTwoDotFive}(x), x\in\mathbb{R}^n}\|y\|^2\geq \min_{y\in \mathcal{C}}\|y\|^2>0$.

By \eqref{equ:e_s_cvar} and \eqref{equ:gpartial_var_s}, we have
\begin{align*}
  \bar\partial \rho_{\CVaR}(x^{[s]})\subseteq{}& \frac{p_s-\alpha n_s}{(1-\alpha)n_s}\mathrm{conv}\{(0,\ldots, 0, e^{[s]}_i, 0, \ldots, 0)\mid i\in
E_{x^{[s]}_{(p_s)}}\}\\
{}&+\frac{1}{(1-\alpha)n_s}\sum_{j=p_s+1}^{n_s}\mathrm{conv}\{(0,\ldots, 0, e^{[s]}_i, 0, \ldots, 0)\mid i\in
E_{x^{[s]}_{(j)}}\}.
\end{align*}
Then, we can show by similar argument that the conclusion also holds for $\rho_{\BaselG}=\rho_{\BaselThree}$.
\end{proof}

The proof of Theorem \ref{thm:cvg-feasi} is as follows.

%\begin{definition}\label{def:2} Let $f(x): \R^n \to \R$. The Clarke's
%  generalized gradient of $f(x)$ at $x\in \mathrm{dom}f$ is defined as
%  \begin{eqnarray}\label{eq:tobecheckednew1}
%\bar \partial f(x) := \{\xi\in\mathbb{R}^n\mid \xi^\top d \leq f^\circ(x;d),\,\forall d\in
%\mathbb{R}^n\},
%\end{eqnarray}
%where, for any $d\in \R^n$, $f^\circ(x;d)$ is the Clarke's generalized directional derivative
%\begin{eqnarray}\label{eq:tobecheckednew2}
%f^\circ(x;d) := \limsup\limits_{y\rightarrow
%x,\,t\downarrow 0} \frac{f(y+td)-f(y)}{t}.
%\end{eqnarray}
%\end{definition}

\begin{proof}[Proof of Theorem \ref{thm:cvg-feasi}.] %\begin{proof}
(i) Since $\rho(x)$ is locally Lipschitz and $\rho_{\BaselG}(x)$ is globally Lipschitz on $\mathbb{R}^n$ (by Lemma \ref{lm:1}), it follows from Proposition 2.1.2 in \cite{Clarke1990} that $\bar \partial \rho(x)$ and $\bar \partial \rho_{\BaselG}(x)$ exist on $\R^{n}$. Then the first part of the theorem follows from the corollary of Proposition 2.4.3 and Theorem 2.3.10 in \citet{Clarke1990}.

(ii) To prove part (ii), we first show that
  \be \label{eq:cvg-w} \lim_{j\to\infty}x^{(j+1)}-x^{(j)} = 0, \; \lim_{j\to\infty}y^{(j+1)}-y^{(j)} = 0,  \mbox{ and } \lim_{j\to\infty}u^{(j+1)}-u^{(j)} =0.\ee
  Since $\Ur$ is closed and bounded, the sequence $\{u^{(j)}\}$
 is bounded.
It follows from \eqref{eq:ADM-lmb} and \eqref{eq:ADM-pi} that  $x^{(j+1)}=(\lambda^{(j+1)}-\lambda^{(j)})/(\beta_1\sigma_1)-\tR u^{(j+1)}$ and $y^{(j+1)}=(\pi^{(j+1)}-\pi^{(j)})/(\beta_2\sigma_2)-\tY u^{(j+1)}$, which in combination with boundedness of $\{u^{(j)}\}$ and assumed boundedness of $\{(\lambda^{(j)}, \pi^{(j)})\}$ implies that $\{(x^{(j)},y^{(j)})\}$ is bounded. Thus, $\{(x^{(j)}, y^{(j)}, u^{(j)}, \lambda^{(j)}, \pi^{(j)})\}$ is bounded, and then the continuity of the augmented Lagrangian
function \eqref{auglang} implies that
$\{\Lc(x^{(j)},y^{(j)},u^{(j)}, \lambda^{(j)}, \pi^{(j)})\}$ is
bounded.

Note that the  augmented Lagrangian
function $\Lc$ is strongly convex with respect to the variable $u$. Therefore, it holds
 that for any $u$ and $\Delta u$,
 \be \label{eq:convex-u1} \Lc(x,y,u+\Delta u, \lambda,\pi) - \Lc(x, y,u,
 \lambda,\pi) \ge
 \partial_u \Lc(x,y,u,\lambda,\pi)^\top
 \Delta u + c \|\Delta u \|^2,  \ee
where $c>0$ is constant. In addition,
since $u^{(j+1)}$ minimizes \eqref{eq:ADM-u}
and $u^{(j)} \in \Ur$, it follows that
\be\label{eq:convex-u2}  \partial_u \Lc(x^{(j+1)}, y^{(j+1)},u^{(j+1)},\lambda^{(j)}, \pi^{(j)})^\top (u^{(j)} - u^{(j+1)}) \ge 0.\ee
Combining \eqref{eq:convex-u1} and \eqref{eq:convex-u2}, we obtain
\be \label{eq:convex-u3}  \Lc(x^{(j+1)},y^{(j+1)}, u^{(j)}, \lambda^{(j)}, \pi^{(j)}) - \Lc(x^{(j+1)}, y^{(j+1)},u^{(j+1)}, \lambda^{(j)}, \pi^{(j)}) \ge c \|u^{(j+1)}-u^{(j)}\|^2.
 \ee
Since $x^{(j+1)}$ minimizes
 \eqref{eq:ADM-x} and $y^{(j+1)}$ minimizes
 \eqref{eq:ADM-y}, it follows from \eqref{eq:ADM-lmb}, \eqref{eq:ADM-pi}, and \eqref{eq:convex-u3} that
\bea \label{eq:red-L}
& & \Lc(x^{(j)},y^{(j)},u^{(j)}, \lambda^{(j)}, \pi^{(j)}) -
\Lc(x^{(j+1)},y^{(j+1)}, u^{(j+1)},\lambda^{(j+1)}, \pi^{(j+1)}) \nonumber \\
%&=&\Lc(x^{(j)},y^{(j)},u^{(j)}, \lambda^{(j)}, \pi^{(j)})
%-\Lc(x^{(j+1)},y^{(j)},u^{(j)}, \lambda^{(j)}, \pi^{(j)}) \nonumber \\
%& & + \Lc(x^{(j+1)},y^{(j)},u^{(j)}, \lambda^{(j)}, \pi^{(j)}) -
%\Lc(x^{(j+1)},y^{(j+1)},u^{(j)}, \lambda^{(j)}, \pi^{(j)}) \nonumber \\
%& & + \Lc(x^{(j+1)},y^{(j+1)},u^{(j)}, \lambda^{(j)}, \pi^{(j)}) -
%\Lc(x^{(j+1)},y^{(j+1)},u^{(j+1)}, \lambda^{(j)}, \pi^{(j)}) \nonumber \\
%& & + \Lc(x^{(j+1)},y^{(j+1)},u^{(j+1)}, \lambda^{(j)}, \pi^{(j)}) -
%\Lc(x^{(j+1)},y^{(j+1)},u^{(j+1)}, \lambda^{(j+1)}, \pi^{(j+1)})
%\nonumber\\
& & + \frac{1}{\beta_1 \sigma_1} \| \lambda^{(j)} -
\lambda^{(j+1)}\|^2+  \frac{1}{\beta_2 \sigma_2} \| \pi^{(j)} -
\pi^{(j+1)}\|^2 \ge c \|u^{(j+1)}-u^{(j)}\|^2.
\eea
Since $\sum_{j=1}^{\infty}   (\|\lambda^{(j+1)} - \lambda^{(j)}\|^2 + \|\pi^{(j+1)} -
  \pi^{(j)}\|^2)<\infty$ and $\{\Lc(x^{(j)},y^{(j)},u^{(j)},
\lambda^{(j)}, \pi^{(j)}) \}$ is bounded, it follows from
\eqref{eq:red-L} that
\be \sum_{j=1}^\infty \|u^{(j+1)}-u^{(j)}\|^2 < \infty, \ee
which implies that
\be\label{eq:convg-u}  \lim_{j\to \infty} u^{(j+1)}-u^{(j)} = 0. \ee
Since $\sum_{j=1}^{\infty}   (\|\lambda^{(j+1)} - \lambda^{(j)}\|^2 + \|\pi^{(j+1)} -
  \pi^{(j)}\|^2)<\infty$, it follows that
$\lim_{j\to \infty} \lambda^{(j+1)} - \lambda^{(j)} = 0$, which in combination with
\eqref{eq:ADM-lmb} implies that
\be
\label{eq:prob-feasi-cvg} \lim_{j\to \infty}  x^{(j+1)} + \tR
u^{(j+1)} = 0.\ee
By \eqref{eq:convg-u} and \eqref{eq:prob-feasi-cvg}, we obtain
$\lim_{j\to \infty} x^{(j+1)}-x^{(j)} = 0$. By similar argument, we obtain $\lim_{j\to \infty}   y^{(j+1)}-y^{(j)} = 0$.
% which together with Lemma \ref{lemma:opt-ADM-y-var} gives $\gamma^{(j+1)}-\gamma^{(j)} \to 0 $.
%\be\label{eq:convg-x}  \gamma^{(j+1)}-\gamma^{(j)} \to 0. \ee

For any limit point $\bar u$ of the sequence $\{u^{(j)}\}$, there exists a subsequence $u^{(k_i)}\to \bar u$ as $i\to\infty$. Since $\{(x^{(j)},y^{(j)},u^{(j)}, \lambda^{(j)}, \pi^{(j)})\}$ is bounded, there exists a further subsequence $\{j_i\}\subseteq \{k_i\}$ such that $(x^{(j_i)},y^{(j_i)}, u^{(j_i)}, \lambda^{(j_i)}, \pi^{(j_i)})\to(\bar x, \bar y, \bar u, \bar \lambda, \bar \pi)$ as $i\to\infty$.
%If $(\bar x, \bar y, \bar u, \bar \lambda, \bar \pi)$
%is a limit point
%of the sequence
%$\{(x^{(j)},y^{(j)},u^{(j)}, \lambda^{(j)}, \pi^{(j)})\}$, then there exists a
%subsequence .
Clearly, we obtain from
\eqref{eq:convg-u}  and
\eqref{eq:prob-feasi-cvg} that
\be \label{eq:cvg-con2}
\bar x
+ \tR \bar u = \lim_{i \to \infty} x^{(j_i)} + \tR u^{(j_i)} = \lim_{i \to \infty}
x^{(j_i)} + \tR u^{(j_i-1)} = 0.  \ee
A similar argument leads to $ \bar y + \tY \bar u = 0$.

The first-order optimality condition of \eqref{eq:ADM-x} in the
$j_i$th iteration is
\bea
0 \in   \sigma_1 (x^{(j_i)} + \tR u^{(j_i-1)}) +
\lambda^{(j_i-1)}   + \eta^{(j_i)} \bar \partial \rho_{\BaselG}(x^{(j_i)}), \label{eq:ADM-opt-x}\\
\eta^{(j_i)} (\rho_{\BaselG}(x^{(j_i)}) - C_0) = 0,\label{equ:adm-eta}
\eea
for some %$\zeta \in \{0,1\} $ and
$\eta^{(j_i)}\ge 0$. Since $\{\bar \partial \rho_{\BaselG}(x^{(j_i)}) \}$ is
bounded away from zero (by Proposition \ref{prop:bounded_away_zero}) and $\{x^{(j_i)}\}, \{u^{(j_i)}\}$ and $\{\lambda^{(j_i)} \}$ are bounded, it follows from \eqref{eq:ADM-opt-x} that the sequence $\{\eta^{(j_i)} \}$ is bounded. Hence, $\{\eta^{(j_i)} \}$ has a subsequence that converges. For the sake of simplification of notation, we still denote the subsequence as $\{\eta^{(j_i)} \}$ and denote $\eta$ as its limit.
%Assume that
%$\lim\limits_{i\rightarrow\infty}\eta^{(j_i)} = \eta$ without loss of generality
%(by taking a subsequence of $\{j_i\}$ if necessary).
Since $\lim_{j_i \to \infty}
\lambda^{(j_i)}-\lambda^{(j_i-1)}=0$, it follows that $\lim_{i\to\infty}\lambda^{(j_i-1)}=\bar \lambda$.
Then,
applying Proposition 2.1.5 in \citet{Clarke1990} and noting the uniform boundedness of $\bar \partial \rho_{\BaselG}(x^{(j_i)})$ (by Proposition \ref{prop:bounded_away_zero}) and \eqref{eq:cvg-con2},
we obtain from \eqref{eq:ADM-opt-x} and \eqref{equ:adm-eta} that
\begin{eqnarray}\label{eq:ADM-final-1}
\bar\lambda \in - \eta \bar \partial \rho_{\BaselG}(\bar{x}),\\
\label{eq:ADM-final-2} \eta (\rho_{\BaselG}(\bar{x}) - C_0) = 0.
\end{eqnarray}

The first-order optimality condition of \eqref{eq:ADM-y} in the
$j_i$th iteration is  \be \label{eq:ADM-opt-y} 0\in \bar \partial
 \rho(y^{(j_i)}) + \pi^{(j_i-1)} + \sigma_2 (y^{(j_i)} + \tY
u^{(j_i-1)}).\ee
Applying Proposition 2.1.5 in \citet{Clarke1990}, and taking limit
on both sides of \eqref{eq:ADM-opt-y}, we obtain
\begin{eqnarray}\label{eq:ADM-final-3}
\bar\pi \in - \bar \partial  \rho(\bar{y}) .
\end{eqnarray}

The first-order optimality condition
of \eqref{eq:ADM-u} in the $j_i$th iteration leads to \be
\label{eq:ADM-opt-u}
\tR^\top \lambda^{(j_i-1)} +  \sigma_1
\tR^\top ( x^{(j_i)} + \tR u^{(j_i)}) + \tY^\top \pi^{(j_i-1)} +  \sigma_2
\tY^\top ( y^{(j_i)} + \tY u^{(j_i)}) + \zeta^{(j_i)} = 0,\ee
where $\zeta^{(j_i)}\in\,\NU(u^{(j_i)} )$, which is the normal cone to
$ \Ur$ at $u^{(j_i)}$. It follows from \eqref{eq:ADM-opt-u} and the convergence of $\{(x^{(j_i)},y^{(j_i)},
u^{(j_i)}, \lambda^{(j_i)}, \pi^{(j_i)})\}$ that $\bar\zeta:=\lim\limits_{i
\rightarrow \infty} \zeta^{(j_i)}$ is well defined.
Since $\Ur$ is
compact and convex,  it follows from Proposition 2.4.4 in \citet{Clarke1990} that the normal
cone $\NU(u)$ coincides with the cone of normals.
Applying Proposition 2.1.5 in \citet{Clarke1990} to the cone of normals, we obtain
%Consequently, we  obtain
%%\begin{eqnarray}\label{eq:normalcone-limitation}
%$\lim\limits_{i \rightarrow \infty} \NU(u^{(j_i)} ) \subseteq
%\NU(\lim\limits_{i \rightarrow \infty} u^{(j_i)} )$. Then since $\zeta^{(j_i)}\in\,\NU(u^{(j_i)} )$, it %follows that
%\begin{eqnarray}\label{eq:normalcone-limitation-eta} \bar\zeta\in\, \,\NU(\bar{u}).  \end{eqnarray}
$ \bar\zeta\in\, \,\NU(\bar{u})$.
Taking limit on both sides of \eqref{eq:ADM-opt-u} and applying
\eqref{eq:ADM-final-1}, \eqref{eq:ADM-final-2}, \eqref{eq:ADM-final-3}
and $ \bar\zeta\in\, \,\NU(\bar{u})$, %\eqref{eq:normalcone-limitation-eta},
we  obtain \eqref{eq:kkt-30} with $u$ being $\bar u$. This completes
the proof.
\end{proof}

\bibliographystyle{plainnat}
\bibliography{port_Basel,optimization}

\end{document}